\documentclass[11pt]{llncs}
\usepackage{thm-restate}
\usepackage{mathtools}
\usepackage{amsmath}
\usepackage{amssymb}
\usepackage[table, dvipsnames]{xcolor}
\usepackage{adjustbox}
\usepackage{array, makecell}
\usepackage{framed}
\usepackage{setspace}
\usepackage{graphicx}
\usepackage{bbm}
\usepackage{soul}
\usepackage[normalem]{ulem}

\newcommand{\opt}{{\mathsf{OPT}}}
%\spnewtheorem{clm}[lemma]{Claim}{\bfseries}{\itshape}
\usepackage{amsfonts}

\usepackage[breaklinks=true]{hyperref}
\hypersetup{colorlinks=true,%
            citebordercolor={.6 .6 .6},linkbordercolor={.6 .6 .6},%
citecolor=blue,urlcolor=black,linkcolor=blue,pagecolor=black}
\usepackage[nameinlink]{cleveref}
\Crefname{algocf}{Algorithm}{Algorithms}
\crefname{algocfline}{line}{lines}
\Crefname{invariant}{Invariant}{Invariants}
\Crefname{claim}{Claim}{Claims}
\Crefname{clm}{Claim}{Claims}
\Crefname{figure}{Figure}{Figures}
\Crefname{subclaim}{Subclaim}{Subclaims}

\usepackage{epsfig}

\usepackage{mathrsfs}
\usepackage{xspace}
\usepackage{soul}
\usepackage{latexsym}
\usepackage{bbm}

\usepackage{framed}

\usepackage[dvipsnames]{xcolor}
\definecolor{DarkGray}{rgb}{0.66, 0.66, 0.66}
\definecolor{DarkPowderBlue}{rgb}{0.0, 0.2, 0.6}
\definecolor{fluorescentyellow}{rgb}{0.8, 1.0, 0.0}

\usepackage[ruled,vlined,linesnumbered,algonl]{algorithm2e}
\SetEndCharOfAlgoLine{}

\usepackage{breakcites}
\usepackage{scalerel}[2016/12/29] % To scale z in exponent
\usepackage{multirow}
\usepackage{tikz}

\setlength{\topmargin}{-0.5in}
\setlength{\textwidth}{6.5in}
\setlength{\oddsidemargin}{0.0in}
\setlength{\evensidemargin}{0.0in}
\setlength{\textheight}{9.0in}

%----Algorithm Package-----
%\usepackage[linesnumbered,lined]{algorithm2e}

\SetCommentSty{mycommfont}
\DontPrintSemicolon
%-------------------------

%-----Figure Name Change to Algorithm
\usepackage{newfloat}
\DeclareFloatingEnvironment[fileext=lop]{Algorithm}

%%% Big Fraction

\newcommand{\X}{\mathcal{X}}
\newcommand{\veps}{\varepsilon}
%---Overline and Bar-----

%\newtheorem{fact}{Fact}
\newcommand{\rad}{{\mathsf {radius}}}

\newcommand{\I}{\mathcal{I}}
\newcommand{\cO}{\mathcal{O}}

\newcommand{\B}{\mathcal{B}}

\newcommand{\kcen}{\textsc{$k$-center}\xspace}
\newcommand{\capcen}{\textsc{Capacitated $k$-center}\xspace}
\newcommand{\kmed}{\textsc{$k$-median}\xspace}
\newcommand{\capmed}{\textsc{Capacitated $k$-median}\xspace}

\newcommand{\kmean}{\textsc{$k$-means}\xspace}
\newcommand{\ksum}{\textsc{$k$-sumRadii}\xspace}
\newcommand{\polylog}{\mathsf{polylog}}
\newcommand{\poly}{{\mathsf{poly}}}
\newcommand{\capsum}{\textsc{Capacitated $k$-sumRadii}\xspace}
\newcommand{\ETH}{\textsc{ETH}\xspace}
\newcommand{\vc}{\textsc{Vertex Cover}\xspace}
\renewcommand{\eth}{\textsc{Exponential Time Hypothesis}\xspace}

\definecolor{mycolor}{rgb}{1, 0.0, 0.0}
\newcommand{\dishant}[1]{\textcolor{mycolor}{}}

\newcommand{\inv}[1]{{\bf Invariant #1}}

\allowdisplaybreaks
\newcommand{\cE}{{\cal E}}
\newcommand{\cI}{{\cal I}}

\begin{document}

\title{FPT Approximation for Capacitated Sum of Radii}
%\lv{
%\subtitle{}
%}
%
%\titlerunning{}
% abbreviated title (for running head)
%                                     also used for the TOC unless
%                                     \toctitle is used
%
\author{Ragesh Jaiswal, Amit Kumar and Jatin Yadav}
%
%\authorrunning{?}
% abbreviated author list (for running head)
%
%%%% list of authors for the TOC (use if author list has to be modified)
%\tocauthor{Ragesh Jaiswal}
%
% \institute{CSE, IIT Delhi.}
\institute{Department of Computer Science and Engineering, IIT Delhi} %.\thanks{\email{\{rjaiswal\}@cse.iitd.ac.in}}}

{\def\addcontentsline#1#2#3{}\maketitle}
%\maketitle     % typeset the title of the contribution
%%\thispagestyle{empty}

\begin{abstract}
    We consider the capacitated clustering problem in general metric spaces where the goal is to identify $k$ clusters and minimize the sum of the radii of the clusters (we call this the  \capsum problem). We are interested in fixed-parameter tractable (FPT) approximation algorithms where the running time is of the form $f(k) \cdot \poly(n)$, where $f(k)$ can be an exponential function of $k$ and $n$ is the number of points in the input.  In the uniform capacity case, Bandyapadhyay et al. recently gave a $4$-approximation algorithm  for this problem. Our first result improves this to an FPT $3$-approximation and extends to a constant factor approximation for any $L_p$ norm of the cluster radii. In the general capacities version, Bandyapadhyay et al. gave an FPT $15$-approximation algorithm. We extend their framework to give an FPT $(4 + \sqrt{13})$-approximation algorithm for this problem. Our framework relies on a novel idea of identifying approximations to optimal clusters by carefully pruning points from an initial candidate set of points. This is in contrast to prior results that rely on guessing suitable points and building balls of appropriate radii around them.      
    
    On the hardness front, 
    %we show that a constant polynomial time approximation algorithm for \capsum implies the same for the capacitated $k$-median problem, which is known to be a very challenging open problem. Further, 
    we show  that assuming the Exponential Time Hypothesis, there is a constant $c > 1$ such that any $c$-approximation algorithm for the non-uniform capacity version of this problem requires running time $2^{\Omega \left(\frac{k}{polylog(k)} \right)}$. 
    %Hence, an FPT PTAS for \capsum, even in the uniform capacity version, seems unlikely.   
    \end{abstract}
\section{Introduction}
Center-based clustering problems are important data processing tasks. Given a metric $D$ on a set of $n$ points $\X$ and a parameter $k$, the goal here is to partition the set of points into $k$ {\it clusters}, say $C_1, \ldots, C_k$,  and assign the points in each cluster to a corresponding {\it cluster center}, say $c_1, \ldots, c_k$, respectively. Some of the most widely studied objective functions are given by the  \kcen, the \kmed and the \kmean problems. The \kcen problem seeks to find a clustering such that the maximum radius of a cluster is minimized. Here, radius of a cluster is defined as the farthest distance between the center of the cluster and a point in the same cluster. 
Another important center based objective function is the \ksum problem where the goal is to minimize the {\em sum} of the radii of the $k$ clusters. Besides being an interesting problem in its own right, the \ksum problem is sometimes preferred over the \kcen problem because it avoids the so-called {\em dissection effect}: an optimal \kcen solution may  place several pairs of close points in two different clusters (\cite{HansenJ97, monmasuri}).  
The problem is known to be $\mathsf{NP}$-hard even in metrics defined by weighted planar graphs and in metrics of constant doubling dimension~\cite{gkkpv10}. Gibson et al.~\cite{gkkpv10} gave an exact randomised algorithm with running time $n^{O(\log{n}\cdot \log{\Delta})}$, where $\Delta$ denotes the aspect ratio (i.e., max. to min. interpoint distance). They also gave a randomised $(1+\veps)$-approximation algorithm with running time $n^{O(\log{n} \cdot \log{\frac{n}{\veps}})}$. Interestingly, there is a polynomial time algorithm~\cite{gibson12} for a special case of points on a plane, where the problem is to cover points using disks.

The \ksum problem was first considered from the approximation algorithms perspective by Doddi et al.~\cite{DoddiMRTW00}. They gave a bi-criteria logarithmic approximation algorithm with $O(k)$ clusters  for \ksum. 
Charikar and Panigrahy~\cite{CharikarP04} improved this result to give the first constant factor 3.504-approximation algorithm for this problem. 
Using coreset based techniques of B\u{a}doiu et al.~\cite{bhi02}, one can obtain a $(1+\veps)$-approximation $2^{O(\frac{k\log{k}}{\veps^2})}\cdot dn^{O(1)}$-time algorithm in the Euclidean setting. Note that all the works mentioned above have been on the uncapacitated version of the problem where there is no upper limit on the number of points that can be assigned to a center.

In this paper, we consider the \capsum problem, where each data point $p$ also specifies a capacity $U_p$ -- this is an upper bound on the size of a cluster centered at $p$. It is worth noting that we only consider the so-called {\em hard} capacitated setting: multiple  centers cannot be opened at the same point. Capacitated versions of clustering problems have been well-studied in the literature and arise naturally in practical settings. Although constant factor approximation algorithm is known for the \capcen problem, obtaining a similar result for the \capmed remains a challenging open problem. 
% \alert{ We show that a constant factor approximation algorithm for the \capsum problem implies the same for a closely related version of \capmed that we call \strcapmed problem. We feel that there should be an approximation preserving reduction (upto constant factors) between \capmed and \strcapmed. Hence, obtaining a constant factor approximation algorithm for \capsum remains a challenging open problem.} 

Motivated by the above discussion, we consider the \capsum problem in the FPT setting. We seek algorithms which have constant approximation ratio but their running time can be of the form $f(k) \poly(n)$, where $f(k)$ can be an exponential function of $k$ (and hence, such algorithms are FPT with respect to the parameter $k$).  Inamdar and Vardarajan~\cite{IV20} had considered this problem in the special case of uniform capacities, i.e., each point  in the input has the same capacity. They gave a 28-approximation algorithm with running time $2^{O(k^2)} \poly(n)$. This was subsequently improved by Bandyapadhyay et al.~\cite{BandyapadhyayL023a} who gave an FPT $(4+\varepsilon)$-approximation algorithm for the uniform capacity setting. The running time of their algorithm  crucially uses the fact that objective function is the sum, i.e., the $L_1$-norm, of the radii of the clusters. Our first result improves this to an FPT $(3+\varepsilon)$-approximation algorithm. Further, our result extends to a constant factor approximation ratio for any $L_p$ norm, where $p \geq 1$. 

Bandyapadhyay et al.~\cite{BandyapadhyayL023a}
also considered the general (i.e., non-uniform capacities) \capsum problem  and gave an FPT $(15+\varepsilon)$-approximation algorithm for this problem. Their argument also extends to general $L_p$ norm. However, it turns out that 
one of the arguments in their proof is incomplete and requires a deeper argument. Motivated by our insights, we refine and extend their proof which not only corrects this step but also yields a better FPT $(9+\varepsilon)$-approximation algorithm. A careful parameter balancing improves this to about 7.6-approximation. Our result also extends to any $L_p$ norm of cluster radii, where $p \geq 1$. 

% The running time of these two algorithms are $2^{O(k \log(k/\varepsilon))} n^3$ and   $2^{O(k^2 \log(k/\varepsilon))} n^3$ respectively. In this work, we extend and improve the techniques in this result. In particular, we obtain an FPT $(3+\varepsilon)$-approximation algorithm for the special case of \capsum when all capacities are uniform, and an FPT $(7+\varepsilon)$-approximation algorithm for the general setting.  Further, our techniques allow us to obtain these results for {\em any} $L_p$ norm of the radii of the clusters, where $p \geq 1$.  It is worth noting that the result of Bandyapadhyay et al.~\cite{BandyapadhyayL023a} crucially relies on the fact that the objective function is the sum of the radii of the clusters, and does not extend directly to other $L_p$ norms (including the $L_2$ norm). 

Our final result shows that there exists a constant $C > 1$ such that, assuming \ETH, any $C$-approximation algorithm for \capsum (in the non-uniform capacity setting) must have exponential running time. This also rules out polynomial time approximation algorithms with approximation ratio better than $C$. 

\subsection{Our Results and Techniques}
In this section, we give a more formal description of our results. Our first result is concerned with the \capsum problem with uniform capacities. The first few steps in our algorithm are analogous to those of~\cite{BandyapadhyayL023a}. Suppose we can guess the radii of the clusters in an optimal solution, say $r_1^\star, \ldots, r_k^\star$. Let $C_i^\star$ denote the set of points in an optimal cluster. We call a cluster large if it has at least $U/k^3$ points (where $U$ denotes the uniform capacity at each point). Again, we can guess the radii of the optimal large clusters in FPT time. Assume that there are $k_L$ large clusters with radii $r_1^\star, \ldots, r_{k_L}^\star$. It is not hard to show that we can sample one point $p_i$ from each large cluster with reasonably high probability. Thus, for each large cluster of radius say $r_i^\star$, we can find a ball $B_i$ of radius $2r_i^\star$ containing it -- the center of such a ball would be the sampled point $p_i$ as above. Now suppose the union of all of the balls $B_i, i \in [k_L^\star]$, covers the entire set of points (although this is a special scenario, it captures the non-triviality of the algorithm).  At this moment, we cannot assign all the points to the chosen centers $p_i$ because of the capacity constraint. The approach of ~\cite{BandyapadhyayL023a} is the following: consider  a graph where the vertex set corresponds to the small optimal balls (i.e., $C_i^\star, i > k_L^\star$) and
the balls $B_i, i \in [k_L^\star]$ constructed by our algorithm. We have an edge between two vertices here if the corresponding sets intersect. 
Since this is a graph on $k$ vertices, we can guess this graph. Now for each connected component in this graph, we can have a single ball whose radius is the sum of the radii of the participating balls in this graph. It is not hard to show that such large radii balls can cover all the small balls and hence can account for the capacity deficit due to the balls $B_i, i \in [k_L^\star]$. Thus one incurs an additional 2-factor loss in the approximation ratio here, resulting in a 4-approximation algorithm. Further, the fact that the algorithm chooses a radius which is the sum of several optimal radii shows that this approach does not extend to $L_p$ norm of radii. 

Our approach is the following: the union of the balls $B_i$ selected by our algorithm covers all the points. We can show that a small extension of these balls, say $B_i'$ for each $B_i$, has the property that they can cover all the points (because even the original balls $B_i$ has this property) and each such ball is assigned only slightly more than $U$ points. Finally, we use a subtle matching based argument to show that the slightly extra points in each of these extended balls $B_i'$ can be covered by a ball whose radius matches the radius of a unique small optimal ball. This allows us to get a better 3-approximation algorithm, and our approach extends to $L_p$ norm of cluster radii as well. Thus, we get the following result:

\begin{theorem}
\label{thm:uniform} There is a randomized $(3+\varepsilon)$-approximation algorithm for \capsum with uniform capacities, where $\varepsilon > 0$ is any positive constant. For $p \geq 1$, there is a randomized $(1+\varepsilon)(2^{2p-1}+1)^{1/p}$-approximation algorithm when the objective function is the $L_p$ norm of the cluster radii. 
The expected running time of both the algorithms is $2^{O(k \log (k/\varepsilon))} \cdot \poly(n)$. 
\end{theorem}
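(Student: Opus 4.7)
The plan is to develop the framework outlined in the introduction in four stages: guess the optimal radii and the subset of ``large'' clusters; sample one point from every large cluster and inflate to a ball of twice the optimal radius; show that in the representative case these balls cover $\X$ with only a tiny capacity overflow; and finally absorb the overflow by opening new balls whose radii are matched one-to-one to the remaining small optimal clusters. The conceptual departure from \cite{BandyapadhyayL023a} is that no new ball will ever have radius equal to the sum of several optimal radii, which is exactly what enables the improved $3$-approximation and the $L_p$ generalization.

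\emph{Guessing and sampling.} First I enumerate (i) the radii $r_1^\star, \ldots, r_k^\star$ of the optimal clusters up to multiplicative $(1+\varepsilon)$, by rounding to powers of $(1+\varepsilon)$ relative to an $O(1)$-approximate global scale guessed from the data, and (ii) the subset $L \subseteq [k]$ of large optimal clusters, i.e., those of size at least $U/k^3$. This phase costs $2^{O(k \log(k/\varepsilon))}$ guesses. Since $n \leq kU$, each large cluster constitutes at least a $1/k^4$ fraction of $\X$, so $\Theta(k^4 \log k)$ uniform samples hit every large cluster with constant probability, and guessing in time $2^{O(k \log k)}$ which samples serve as representatives $p_i$ of each large cluster gives $B_i := B(p_i, 2 r_i^\star) \supseteq C_i^\star$ by the triangle inequality.

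\emph{Extension, redistribution, and matching.} In the representative case $\bigcup_{i \in L} B_i = \X$ (the remaining case is handled by enumerating the $O(k)$-vertex intersection graph as in \cite{BandyapadhyayL023a}), the small clusters contribute only $\leq U/k^2$ aggregate overflow when points are assigned to a containing $B_i$. Using slightly enlarged balls $B_i' := B(p_i,(2+\varepsilon) r_i^\star)$, I would re-route boundary points so that each $p_i$ holds exactly $U$ points, with the residual excess set $E$ lying inside $\bigcup_i B_i'$ and of size $\leq \sum_{j \notin L} |C_j^\star|$. The key step is a bipartite matching: partition $E$ into groups, connect each group to the small optimal indices $j \notin L$ whose radius $r_j^\star$ admits a covering of the group from one of its own points, and verify Hall's condition. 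Opening one new ball of radius $r_j^\star$ per matched pair yields the $L_1$ cost $(2+\varepsilon)\sum_{i \in L} r_i^\star + \sum_{j \text{ matched}} r_j^\star \leq (3+\varepsilon)\,\opt$.

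\emph{Main obstacle and $L_p$ extension.} The delicate step is the Hall's-condition verification: the geometric footprint of an excess group can touch several small optimal clusters, so the combinatorial argument must show that every sub-collection of groups reaches enough distinct small indices to match -- which I expect to reduce to a pigeonhole bounding how many excess points from a single small cluster can coexist in one $B_i'$. For the $L_p$ objective, the same algorithm applies, and the final cost is aggregated via Minkowski's inequality on the two disjoint radius-sequences (large balls at radius $(2+\varepsilon) r_i^\star$ and matched excess balls at radius $O(r_j^\star)$). The bound $(1+\varepsilon)(2^{2p-1}+1)^{1/p}$ then follows from the $L_p$ expansion $(a+b)^p \le 2^{p-1}(a^p + b^p)$ applied to the worst-case combination of a large-ball and small-ball contribution. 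The running time $2^{O(k \log(k/\varepsilon))}\poly(n)$ is immediate from the guessing and sampling budgets.
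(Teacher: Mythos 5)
Your high-level plan matches the paper's (guess radii and the large clusters, sample a representative per large cluster, inflate to radius $2r_i^\star$, tolerate a small capacity overflow, then absorb the overflow with $k-k_L^\star$ new balls matched one-to-one to unused optimal radii), but the crucial absorption step is set up in a way that does not work. You propose to match each group of \emph{excess} points to a \emph{small} optimal index $j \notin L$ and to cover that group by a single ball of radius $r_j^\star$ centered at one of that cluster's points. There is no reason such a ball exists: the excess charged to a ball $B_i'$ lives anywhere inside a ball of radius $\approx 2r_i^\star$, which can be arbitrarily larger than every small radius $r_j^\star$, and a small cluster has fewer than $\gamma U/k$ points, so it cannot even certify a dense spot. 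The paper's argument is different in two essential ways. First, the new balls take radii of \emph{large} clusters: if $|G_i| > U$, a Hall-type counting argument (small clusters and non-neighbours each contribute at most $\gamma U$ points to $G_i$, and $k^2\gamma U \le U$) shows each overflowing $G_i$ meets some large cluster $C_{\sigma_i}^\star$ in at least $\gamma U$ points, and these large indices can be matched injectively to the overflowing sets. Second, because capacities are uniform, the new ball $B(c_{\sigma_i}^\star, r_{\sigma_i})$ need not contain the \emph{specific} excess points — it suffices to peel off \emph{any} $\gamma U$ points of $G_i$ near $c_{\sigma_i}^\star$, since the overflow of each $G_i$ is at most $\gamma U$. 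Your version, which ties the matching to the excess points themselves and to small radii, cannot be repaired by the pigeonhole argument you anticipate.

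Two further points. (a) The enlargement $B_i' = B(p_i,(2+\varepsilon)r_i^\star)$ is not what the paper uses and is not enough for its feasibility argument: the paper enlarges \emph{additively} to $B(c_j, 2r_j + 2R_j)$, where $R_j$ is the largest radius among the small clusters (guessed to be) intersecting $B_j$, precisely so that $B_j'$ \emph{contains} those small clusters and the assignment $G_j = C_j^\star \cup \bigcup_{h \in T_j} C_h^\star$ with $|G_j| \le (1+\gamma)U$ is a witness of flow feasibility; the disjointness of the sets $\{j\}\cup T_j$ is also what gives the $2\sum_j r_j$ bound (and, after $(2a+2b)^p \le 2^{2p-1}(a^p+b^p)$, the $2^{2p-1}$ term in the $L_p$ bound). (b) Falling back to the intersection-graph argument of Bandyapadhyay et al.\ when the large balls do not cover $\X$ forfeits exactly the improvement you are claiming: that route sums several optimal radii into one ball, giving only a $4$-approximation and no $L_p$ extension. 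The paper instead keeps adding balls greedily — for each uncovered point $x$ it guesses the index $j$ of the optimal cluster containing $x$ and adds $B(x,2r_j)$ — so that every index in play has a ball of radius $2r_j$ and the same matching machinery applies uniformly.
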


We now consider the general \capsum problem with non-uniform capacities. Again, our initial steps are similar to those in~\cite{BandyapadhyayL023a} -- we initially guess the optimal cluster radii $r_1^\star, \ldots, r_k^\star$; and greedily find a set of balls $B_1, \ldots, B_\ell$, where $\ell \leq k$ and these balls cover the entire set of points. 
Further, each of the balls $B_i$ is supposed to also contain the corresponding optimal cluster $C_i^\star$. At this moment, we need to find some more balls which can be used to satisfy the capacity constraints, i.e., balls which are approximations to $C_{\ell+1}^\star, \ldots, C_k^\star$ respectively. Bandyopadhyay et al.~\cite{BandyapadhyayL023a} approached this problem as follows: for an index $i \in [k] \setminus [\ell]$, one can guess the the set of balls $B_j, j \in [\ell]$ that intersect $C^\star_i$. If any of these balls $B_j$ have smaller radius than $r_i^\star$, we can use a enlarged version $B_j'$ of $B_j$ to cover $C^\star_i$ and treat this as the ball $B_i$. 
The radius of $B_j'$ can be charged to $r^\star_i$. 
Therefore, the non-trivial case arises when $r_i^\star$ is much less than the radii of each of the balls $B_j$ intersecting $C_i^\star$. In such a case, they realized that $C_i^\star$ is contained in the intersection of the balls $B_j'$ defined above. Therefore, they maintain a set of {\em replacement} balls, which are mutually disjoint, and are supposed to contain at least the same number of points as the corresponding cluster $C_i^\star$. 

Ensuring that one can maintain such a collection of disjoint replacement balls that are also disjoint with some desired subset of optimal balls turns out to be non-trivial. In fact, we feel that the proof of Lemma~14 in~\cite{BandyapadhyayL023a},  without a non-trivial fix, has a technical flaw: while inserting a new replacement ball in the set $\B_2$, it may happen that the replacement intersects with an optimal
ball with index in $\I_2$.
 \footnote{ In the proof of Lemma 14 in~\cite{BandyapadhyayL023a} (Lemma 2.5 in the arXiv version), the last line of the first paragraph reads ``If $B_x$ does not intersect  any ball $B_{i'}^\star$ with $i' \in [k] \setminus (I_1 \cup I_2)$,  then by the above discussion we have that setting $B_i = B_x$, $(I_1, I_2 \cup \{i\}, \B_1, \B_2 \cup \{B_i\})$  is a valid configuration.'' However, by doing so, one cannot rule out  the intersection of $B_i$ with a ball $B_{i'}^\star$ with $i' \in I_2$, contradicting the fourth invariant property of a valid configuration: 
``For every $i \in I_2$ and $s \notin I_1$,  $B_s^\star$ and $B_i$ do not intersect.'' It is tricky to note why this is the case. 
They indeed rule out the intersection of $B_i^\star$ with any ball in $\B_2$, but they do not ensure that $B_i$ does not intersect an optimal ball $B_s^\star$  corresponding to the index  $s \in I_2$.
}
% They say that, if $B_x$ does not intersect with any ball $B_{i'}^\star$ with $i' \in [k] \setminus (I_1 \cup I_2)$, then we can set $B_i = B_x$, and add $i$ to $I_2$ and $B_i$ to $\B_2$. Doing so, they do not rule out the intersection of $B_i$ with a ball $B_{i'}^\star$ with $i' \in I_2$, contradicting the fourth property of a valid configuration. It is tricky to note why this is the case. They indeed rule out the intersection of $B_i^\star$ with any ball in $\B_2$, but they do not ensure that $B_i$ does not intersect with any optimal ball with index in $I_2$.} \jatin{Please verify this footnote}.
%Fixing this is not direct. 
Our algorithm uses a more fine grained approach: we first identify a set $P_j$ of points which could contain the desired replacement ball (this step is also there in~\cite{BandyapadhyayL023a}). Now we guess the set $Z$ of optimal clusters which could intersect this  candidate set $P_j$. The algorithm then runs over several iterations, where in each iteration it shrinks the candidate set of points $P_j$ containing the replacement ball, and hence the list $Z$ (see~\Cref{algo:update}). This process terminates when it either identifies a replacement ball or a ball containing a cluster $C_r^\star$. To get a better approximation ratio, we also maintain several set of balls with varying properties. We prove the following result:

\begin{theorem}
\label{thm:nonuni} There is a randomized $(4 + \sqrt{13} + \varepsilon)$-approximation algorithm for \capsum, where $\varepsilon > 0$ is any positive constant. For $p \geq 1$, there is a randomized $(4 + \sqrt{13} + \varepsilon)$-approximation algorithm when the objective function is the $L_p$ norm of the cluster radii. 
The expected running time of both the algorithms is $2^{O(k^3 + k\log(k/\varepsilon))} \cdot \poly(n)$. 
\end{theorem}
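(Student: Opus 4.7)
The plan is to extend the framework of~\cite{BandyapadhyayL023a} in two ways: first, replace the flawed replacement-ball construction by an iterative candidate-set shrinking procedure that preserves the correct disjointness invariants, and then reoptimize the parameters that trade off the radii of ``direct'' and ``replacement'' balls to obtain the improved ratio. In the guessing phase, I would guess the optimal radii $r_1^\star \geq \cdots \geq r_k^\star$ up to a $(1+\varepsilon)$-factor on a geometric scale at a cost of $(\log n/\varepsilon)^{O(k)}$ (absorbed into the $\varepsilon$ slack), then process the clusters in decreasing order of radius. Whenever $C_i^\star$ is already covered by some earlier ball $B_j$ with $r_j^\star \geq r_i^\star$ it is declared absorbed; otherwise I sample a point uniformly from a suitable candidate set (which lands in $C_i^\star$ with probability $1/\poly(k)$) and open a ball $B_i$ of radius $2 r_i^\star$ around it. This yields balls $B_1, \ldots, B_\ell$ with $\bigcup_i B_i = \X$ and $C_i^\star \subseteq B_i$. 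Because capacities are non-uniform, the capacity at the chosen center of $B_i$ can be much smaller than $|B_i|$, so for each absorbed index $r$ the algorithm must either find a direct ball approximating $C_r^\star$, or produce a disjoint \emph{replacement} ball $B_r'$ that absorbs the excess points the optimum would have routed to $C_r^\star$.

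\textbf{The iterative replacement procedure (main novelty).} For each absorbed index $r$ I maintain a candidate region $P_r$ for the center of a possible replacement ball -- roughly the intersection of small expansions of all $B_j$ meeting $C_r^\star$ -- together with a guessed set $Z \subseteq [k]$ (enumerable in $2^{O(k)}$ time) of other optimal clusters intersecting $P_r$. Algorithm update then iterates: at each step it either finds a point $c \in P_r$ whose surrounding ball of appropriate radius is disjoint from all currently fixed optimal and replacement balls (and returns it as $B_r'$), or it identifies an index $s \in Z$ such that every surviving point of $P_r$ must lie inside $C_s^\star$, deletes $s$ from $Z$, shrinks $P_r$, and loops. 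Since $|Z|$ strictly decreases, the loop halts in at most $k$ rounds; when no replacement is ever returned, the geometry forces $C_r^\star$ itself to lie in a ball of controlled radius around a specific already-fixed center, giving a direct ball for $C_r^\star$. The main obstacle -- and precisely where~\cite{BandyapadhyayL023a} slips -- is maintaining the invariant that each newly opened ball is disjoint from \emph{both} the currently fixed replacement balls and the currently fixed optimal balls, not just one of the two; the iterative shrinking is what restores the missing half of this invariant, at the cost of a more delicate case analysis over the surviving possibilities for $Z$.

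\textbf{Approximation ratio, $L_p$ extension, and running time.} Every ball output by the algorithm has radius bounded by a constant multiple of a single $r_i^\star$ (never of a sum), so the $L_p$ accounting for $p \geq 1$ is identical to the $L_1$ one and yields the same constant. Introducing a parameter $\alpha$ that controls how aggressively $P_r$ is shrunk and how much slack a replacement ball is granted, one family of balls ends up with radius roughly $(2+\alpha)r_i^\star$ while the other is controlled by an expression increasing in $1/\alpha$; writing the worst case as a function of $\alpha$ and equating the two sides yields a quadratic whose positive root is $4 + \sqrt{13}$. Feasibility of the final point-to-center assignment follows from Hall's theorem on the bipartite graph matching optimal clusters to algorithmic balls, using the disjointness invariants maintained above. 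The running time is dominated by the enumerations of the sets $Z$ across all $k$ absorbed indices and all $\leq k$ shrink rounds, contributing $2^{O(k^3)}$, multiplied by $2^{O(k\log(k/\varepsilon))}$ for radius guessing and sampling, which matches the stated bound.
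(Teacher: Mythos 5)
Your overall architecture is essentially ours: a greedy covering phase producing balls that each contain a distinct optimal ball, a second phase that for each remaining index either certifies a direct ball or builds a replacement ball, an iterative shrinking of the candidate region $P_r$ together with the guessed index set $Z$ to restore the disjointness invariant (from \emph{both} the fixed optimal balls and the fixed replacement balls) that is missing in Bandyapadhyay et al., and a final $\alpha$-balancing between two families of radii. The $L_p$ observation (every output radius is a constant times a single $r_i^\star$, never a sum) and the $2^{O(k^3)}$ accounting for the $Z$-guesses also match.

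The genuine gap is on the capacity side, which in the non-uniform setting is where most of the difficulty lies. For your Hall's-theorem step (or any assignment argument) to go through, every center $c$ opened for index $i$ must satisfy $U_c \geq |C_i^\star|$, and you never say how this is guaranteed. The mechanism has to be: whenever a ball for index $i$ is opened, choose the \emph{maximum-capacity} point inside a region certified to contain $c_i^\star$, so that $U_c \geq U_{c_i^\star} \geq |C_i^\star|$. But because capacities are hard (one center per point), this can silently fail: an earlier ball may already have consumed $c_i^\star$ as its own center, in which case $c_i^\star$ is excluded from the candidate set and the maximum capacity among the remaining candidates can be arbitrarily small. We handle this with a separate guess-and-restart subroutine ({\bf InsertBall}): each time a center is placed, the algorithm guesses whether that point coincides with some not-yet-handled optimal center $c_h^\star$, and if so commits $B(c_h^\star, r_h)$ to the solution and restarts the covering phase with that center fixed; this happens at most $k$ times, so the probability cost is absorbed into the $2^{O(k^2)}$ guesses. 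Without this (or an equivalent) mechanism the capacity claim, and hence feasibility of the final assignment, does not follow. A secondary point: your balancing expressions ``$(2+\alpha)r_i^\star$'' versus ``increasing in $1/\alpha$'' do not actually produce $4+\sqrt{13}$; the correct pair is $3(1+2\alpha)$ for the covering-phase balls against $5+2/\alpha$ for the balls certified through an intersecting larger ball, equalized at $\alpha = (1+\sqrt{13})/6$.
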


Our final result shows that the exponential dependence on running time is necessary if we want a $c$-approximation algorithm, where $c$ is a constant larger than 1. This result assumes that the \eth holds and uses a reduction from \vc on bounded degree graphs. A nearly exponential time lower bound for approximating \vc on such graphs follows from \cite{dinur}. More formally, we have the following result:

\begin{theorem}
    \label{thm:eth}
    Assume that \ETH holds. 
    Then there exist positive constants $c_1, c_2$, where $c_1 > 1,$ such that
    any algorithm for \capsum with running time at most $2^{c_2 k/\polylog k} \cdot \poly(n)$ must have approximation ratio at least $c_1$. 
\end{theorem}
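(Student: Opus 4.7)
My plan is to exhibit a gap-preserving reduction from Vertex Cover on bounded-degree graphs to \capsum. The cited result of Dinur yields, under \ETH, constants $c_0 > 1$ and $\Delta \in \mathbb{N}$ for which no $c_0$-approximation algorithm for Vertex Cover on graphs of maximum degree $\Delta$ runs in time $2^{O(n/\polylog n)}$, and I will transfer this lower bound to \capsum.

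Given a bounded-degree instance $G=(V,E)$ with $|V|=n$, I construct a \capsum instance on the point set $\X = \{p_v : v \in V\} \cup \{p_e : e \in E\}$, with metric $D(p_v, p_e) = 1$ if $v$ is an endpoint of $e$ and $3$ otherwise, $D(p_u, p_v) = D(p_e, p_f) = 2$ for distinct same-type points, and $D(p,p)=0$. All positive distances lie in $\{1,2,3\}$, so the triangle inequality is immediate. I set $U_{p_v} = \deg(v)+1$, $U_{p_e}=1$, and $k := n$. Completeness is direct: a vertex cover $S$ of size $s$ produces a cost-$s$ \capsum solution by opening one radius-$1$ cluster at each $p_v, v\in S$ holding $p_v$ and its incident edge points (each edge assigned arbitrarily to one of its VC endpoints, within capacity), plus one radius-$0$ singleton at each $p_v, v \notin S$; this uses exactly $n$ clusters.

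Soundness is the main obstacle, where I want $\opt_{\text{capsum}} \geq \opt_{VC}(G)$. The plan is to prove a normalization lemma asserting that some optimal solution uses only clusters of radius in $\{0,1\}$. Since $U_{p_e}=1$ forces edge-centered clusters to be singletons, and a vertex-centered cluster of radius $\geq 2$ can hold at most $\deg(v)+1$ points by capacity (no more than a radius-$1$ cluster could pack), any such cluster can be shrunk to retain only its center and incident edge members. Each evicted non-incident edge point is then routed to one of its endpoints' clusters, which has exactly the required capacity slack because the edge had been residing in $v$'s cluster instead of its endpoint's, and each evicted other-vertex point is exchanged with an existing edge singleton in the solution to maintain the cluster budget $k=n$. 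Once the solution is normalized to radii in $\{0,1\}$, no two distinct vertex points can share a cluster (they are at distance $2$), so each of the $n$ vertex points must be its own center, exhausting all $n$ cluster slots and forcing every edge point into a radius-$1$ cluster centered at one of its endpoints. The set of radius-$1$ centers then forms a vertex cover whose size equals the total cost, giving $\opt_{\text{capsum}} = \opt_{VC}(G)$.

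Composing, because $k=n$ and $|\X|=O(n)$, a $c_0$-approximation algorithm for \capsum running in time $2^{c_2 k / \polylog k} \poly(n)$ would solve $c_0$-approximation of bounded-degree Vertex Cover in time $2^{c_2 n / \polylog n}$, contradicting Dinur's \ETH lower bound. The hardest step is executing the shrink-and-reabsorb exchange uniformly, especially for radius-$\geq 3$ clusters that carry several non-incident edge members simultaneously; the capacity gadget $U_{p_v}=\deg(v)+1$ is designed precisely so that each non-incident edge's endpoint cluster has the slot to absorb it, which is what makes the global exchange feasible.
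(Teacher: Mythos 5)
Your soundness step does not hold: the normalization lemma (``some optimal solution uses only clusters of radius in $\{0,1\}$'') is false for your construction, and in fact $\opt_{\text{capsum}}$ can be strictly smaller than $\opt_{VC}$, which destroys the gap. Concretely, take $G$ with vertices $h$, $y_1,\dots,y_\Delta$, $z_1,\dots,z_\Delta$, $u_1,w_1,\dots,u_\Delta,w_\Delta$ and edges $\{h,y_i\}$, $\{y_i,z_i\}$, $\{u_i,w_i\}$; here $n=4\Delta+1$, max degree $\Delta$, and $\opt_{VC}=2\Delta$ (the $y_iz_i$ and $u_iw_i$ edges form a matching of size $2\Delta$). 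In your instance with $k=n$, open the cluster $\{p_h\}\cup\{p_{u_iw_i}: i\in[\Delta]\}$ at $p_h$ (size $\Delta+1=U_{p_h}$, radius $3$ since these edges are non-incident to $h$), the clusters $\{p_{y_i},p_{hy_i},p_{y_iz_i}\}$ at each $p_{y_i}$ (radius $1$, within capacity $3$), and radius-$0$ singletons at all remaining $3\Delta$ vertex points. This uses exactly $k$ clusters, covers all $7\Delta+1$ points, and costs $\Delta+3 < 2\Delta$ for $\Delta\ge 4$. The mechanism is generic: a degree-$\Delta$ vertex can ``absorb'' a $\Delta$-edge induced matching at cost $3$, whereas the matching forces $\Delta$ into the vertex cover. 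Your exchange argument silently assumes shrinking a large-radius cluster and re-routing its evicted edge points is cost-non-increasing, but re-routing each evicted $p_e$ to an endpoint's cluster can raise that cluster's radius from $0$ to $1$, so shrinking one radius-$3$ cluster saves $2$ while creating up to $\Delta$ new unit-radius clusters. Since the Vertex Cover gap you start from is only $1+\veps$, even a mild failure of soundness (let alone a factor approaching $2$) is fatal.

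The paper's reduction is engineered precisely to block these cheats: edge points get capacity $0$ (so they can never serve as centers, eliminating the free edge-singleton trick), the budget is $k=\frac{2n}{3}+1$ rather than $n$, and each vertex point carries $d$ capacity-$0$ satellite points $Q_v$ plus a single high-capacity hub $p^\star$. Rather than proving $\opt_{\text{capsum}}=\opt_{VC}$, the paper proves only a multiplicative gap of $1+\frac{\veps}{2d}$ via a counting argument (Claim~\ref{cl:graph}): the satellites and the hub's capacity force every ``absorbing'' cluster of radius $\ge 2$ to pay for itself, because each such cluster must spend most of its capacity on its own satellites. If you want to salvage your cleaner two-layer metric, you would need some analogue of these capacity gadgets to make large-radius absorbers unprofitable; as written, the reduction is not gap-preserving.
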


% Due to space limitations, we defer the formal proof to the full version of the paper.

\subsection{Preliminaries}
We define the $\capsum$ problem. An instance $\cI$ of this problem is specified by a set of $n$ points $P$ in a metric space, a parameter $k$ and capacities $U_p$ for each point $p \in P$. A solution to such an instance needs to find a subset $C \subseteq P$, denoted ``centers'',  such that $|C|=k$. Further, the solution assigns each point in $P$ to a unique center in $C$ such that for any $c \in C$, the total number of points assigned to it is at most its capacity $U_c$. It is worth noting that we are interested in the so-called ``hard'' capacity version, where one cannot locate multiple centers at the same point. 

Assuming $C = \{c_1, \ldots, c_k\}$, let $C_j$ denote the set of points assigned to the center $c_j$ by this solution. We refer to $C_j$ as the {\em cluster} corresponding to $C_j$ and define its radius $r_j$ as $\max_{p \in P_j} d(p, c_j)$. The goal is to find a solution for which the sum of cluster radii is minimized, i.e., we wish to minimize $\sum_{j=1}^k r_j$. In the $L_p$-norm version of this problem, the specification of an instance remains as above, but the objective function is given by $\left( \sum_{j=1}^k r_j^p \right)^{1/p}$. 
Given an instance $\cI$ as above, let $\opt(\cI)$ denote an optimal solution to $\cI$. Let $r^\star_1, \ldots, r^\star_k$ be the radii of the $k$ clusters in this optimal solution. By a standard argument,  we can show that one can guess close approximations to these radii. 
\begin{lemma}\cite{BandyapadhyayL023a}
    \label{cl:radii}
    Given a positive constant $\varepsilon > 0$, there is an $O(2^{O(k \log (k/\varepsilon))} \cdot n^3)$ time algorithm that outputs a list $\cal L$, where each element in the list is a sequence $(r_1, \ldots, r_k)$ of non-negative reals, such that the following property is satisfied: there is a sequence $(r_1, \ldots, r_k) \in {\cal L}$ such that for all $j \in [k]$, $r_j \geq r_j^\star$ and $\sum_{j=1}^k r_j \leq (1+\varepsilon) \sum_{j=1}^k r_j^\star.$ Further, the size of the list $\cal L$ is $O(2^{O(k \log (k/\varepsilon))} \cdot n^2)$.
\end{lemma}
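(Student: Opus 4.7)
The plan is to reduce the task to (a) guessing a single quantity that bounds the entire scale of the optimal solution and (b) a geometric discretization of the remaining radii relative to that bound. Since each $r_j^\star$ is the distance from an optimal center to some point of its cluster, every $r_j^\star$ lies in the set $\{d(p,q) : p, q \in P\} \cup \{0\}$. In particular, $r^\star_{\max} := \max_j r_j^\star$ equals $d(p,q)$ for some pair $p,q \in P$, so by iterating over all $\binom{n}{2}$ pairs we obtain an $O(n^2)$-size candidate set one of whose elements equals $r^\star_{\max}$ exactly. I would commit to this outer loop first.

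Given the correct guess of $r^\star_{\max}$, each $r_j^\star$ lies in $[0,r^\star_{\max}]$. Pick a small threshold $\tau := (\varepsilon/k)\,r^\star_{\max}$ and define the grid
\[
\mathcal{G} \ :=\ \{\tau\}\ \cup\ \bigl\{\,r^\star_{\max}\cdot(1+\varepsilon)^{-i} \ :\ 0 \le i \le i_{\max}\bigr\},
\]
where $i_{\max}$ is the largest integer with $r^\star_{\max}/(1+\varepsilon)^{i_{\max}} \ge \tau$, so $i_{\max} = O\bigl(\log(k/\varepsilon)/\varepsilon\bigr)$ and $|\mathcal{G}| = O(\log(k/\varepsilon)/\varepsilon)$. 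For each $j$, I would set $r_j$ to be the smallest element of $\mathcal{G}$ that is at least $r_j^\star$: if $r_j^\star \ge \tau$, this gives $r_j^\star \le r_j < (1+\varepsilon)\,r_j^\star$, and if $r_j^\star < \tau$, we take $r_j = \tau \ge r_j^\star$. Summing, the ``large'' coordinates contribute at most $(1+\varepsilon)\sum_j r_j^\star$, and the ``small'' coordinates contribute at most $k\tau = \varepsilon\, r^\star_{\max} \le \varepsilon \sum_j r_j^\star$, yielding $\sum_j r_j \le (1+2\varepsilon)\sum_j r_j^\star$; running the construction with $\varepsilon'=\varepsilon/2$ gives the claimed $(1+\varepsilon)$-bound. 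Since the algorithm does not know $r^\star_{\max}$ or the $r_j^\star$'s in advance, it emits into $\mathcal{L}$ every sequence $(r_1,\ldots,r_k) \in \mathcal{G}^k$ obtained for every pair $p,q$; one of these sequences has the required properties.

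For the size bound, the outer loop contributes a factor $O(n^2)$ and the inner product over coordinates contributes $|\mathcal{G}|^k = O\bigl((\log(k/\varepsilon)/\varepsilon)^k\bigr)$. Since $\log(\log(k/\varepsilon)/\varepsilon) = O(\log(k/\varepsilon))$, this simplifies to $2^{O(k\log(k/\varepsilon))}\cdot n^2$, matching the stated list size; computing all pairwise distances and populating $\mathcal{L}$ takes an extra factor of $n$, giving running time $2^{O(k\log(k/\varepsilon))}\cdot n^3$.

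The only subtle point is handling the ``small'' radii without paying too much: if the grid started at $0$ we would have infinitely many grid points, and if the lower cutoff were too large the small radii could fail the $r_j \ge r_j^\star$ constraint. The choice $\tau = (\varepsilon/k)\,r^\star_{\max}$ is what makes the aggregate slack telescope to an $\varepsilon$-fraction of $\sum_j r_j^\star$, using the elementary bound $r^\star_{\max} \le \sum_j r_j^\star$. No deeper structural property of the clustering is needed, so this is really a purely arithmetic discretization lemma, and I would not expect any substantial obstacle beyond balancing the grid parameters as above.
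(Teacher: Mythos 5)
Your proof is correct and is essentially the standard argument behind this lemma (the paper itself only cites \cite{BandyapadhyayL023a} for it): guess the maximum optimal radius among the $O(n^2)$ pairwise distances, then round each radius up on a geometric $(1+\varepsilon)$-grid truncated at $(\varepsilon/k)\,r^\star_{\max}$, using $r^\star_{\max}\le\sum_j r_j^\star$ to absorb the truncation error. The only micro-detail worth a sentence in a write-up is the degenerate case $r^\star_{\max}=0$ (all singleton clusters), handled by also emitting the all-zero tuple.
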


Given a point $x$ and non-negative value $r$, let $B(x,r)$ denote the {\em ball} of radius $r$ around $x$, i.e., $B(x,r) := \{p \in P: d(p,x) \leq r\}$, where $d(a,b)$ denote the distance between points $a$ and $b$. For a positive integer $r$, we shall use $[r]$ to denote the set $\{1, \ldots, r\}$.

% We then prove that, we can find a matching between the overflowing balls and the optimal balls, there exist enough points in the intersection of each overflowing ball and the optimal ball it is matched to. Since we don't know the actual balls, we iterate on all permutations of radii, and then we can find 
\subsection{Related Work}
In this section, we mention a few related research works on the sum-of-radii problem that were not discussed earlier in the introduction. Note that most of the work has been for the uncapacitated case.
Behsaz and Salavatipour~\cite{bs15} give an exact algorithm for the sum of radii problem in metrics induced by unweighted graphs for the case when singleton clusters are disallowed.
Friggstad and Jamshidian~\cite{fm22} give a $3.389$-approximation algorithm for the sum of radii problem in general metrics. 
This is an improvement over the $ 3.504$ approximation of Charikar and Panigrahy~\cite{CharikarP04}, the first constant factor approximation algorithm for this problem.
Bandyapadhyay and Varadarajan~\cite{bv16} discuss the variant of the problem where the objective function is the sum of the $\alpha^{th}$ power of the radii. The algorithm can output $(1+\veps)k$ centers in the variant they study.
Henzinger et al.~\cite{hlm20} give a constant approximation algorithm for metric spaces with bounded doubling dimension in the {\em dynamic setting} where points can appear and disappear.
There also have been works for special metrics such as two-dimensional geometric settings (e.g.,~\cite{gkkpv10}).
The related problem of sum-of-diameters has also been studied (e.g.,~\cite{DoddiMRTW00,bs15}).

We now give an outline of rest of the paper. In~\Cref{sec:uni}, we prove~\Cref{thm:uniform}, i.e., we present an FPT approximation algorithm for \capsum in the uniform capacity case. The more general non-uniform capacity case is considered in~\Cref{sec:nonuni}, where we prove~\Cref{thm:nonuni}. 
%In~\Cref{sec:eth}, we prove~\Cref{thm:eth1} that gives an exponential lower bound on the running time for approximation algorithms for \capsum when the desired approximation ratio is a small enough constant. 

\section{Uniform Capacities}
\label{sec:uni}
In this section, we consider the special case of \capsum when all the capacities $U_p$ are the same, say $U$. We give some notation first. Fix an optimal solution $\cO$ to an instance $\I$ given by a set of $n$ points $P$. Let $c_1^\star, \ldots, c_k^\star$ be the $k$ centers chosen by the optimal solution.  For an index $j \in [k]$, let $C_j^\star$ denote the set of points assigned to $c_j^\star$ by this solution (i.e., the cluster corresponding to the center $c_j^\star$). Let $r_1^\star, \ldots, r_k^\star$ be the radii of the corresponding clusters $C_1^\star, \ldots, C_k^\star$. We also fix a parameter $\gamma := \frac{1}{k^2}$.

\begin{algorithm}[H]
\caption{An iteration of the algorithm for $\capsum$ when all capacities are $U$.}
{\bf Input:} Set $P$ of $n$ points, parameter $k, k_L^\star$, radii $r_1, \ldots, r_k$, capacity $U$. \;
Initialize a set $\B$ to empty. \;
Initialize an index set $I$ to $\{1, \ldots, k_L^\star\}$. \;
\For{$j=1, \ldots, k_L^\star$ \label{l:for0}}{ \label{l:large}
    % Choose a point $c_j \in P \setminus \{c_1, \ldots, c_{j-1}\}$ uniformly at random. \;
    Choose a point $c_j \in P$ uniformly at random. \;
    %\jatin{Changed this line to sample from $P$ instead of $P \setminus \{c_1, c_2, \ldots c_{j-1}\}$ because the event that these points belong to different optimal clusters would also mean they are disjoint.} \;
    Add $B_j := B(c_j, 2r_j)$ to $\B$ \label{l:cj}. 
}
\While{there is a point $x$ not covered by the balls in $\B$ \label{l:xchoose}}{ 
(If $I = [k]$, output {\bf fail}). \label{l:failwhile} \;
Choose an index $j \in [k] \setminus I$ uniformly at random. \label{l:guess1}\;
   Add $B_j := B(x, 2r_j)$ to $\B$ and add $j$ to $I$ and define $c_j := x$. \label{l:ball}
}
Initialize sets $T_j \subset [k], j \in I,$ to emptyset. \;
\For{each $i \in [k] \setminus I$ \label{l:ki}}{
 Choose an index $j \in I$ uniformly at random and add $i$ to $T_j$. \label{l:tj}
}
Define $I' := \{j \in I: |T_j| > 0 \}$. \; \label{l:I'}
\For{each $j \in I$ \label{l:bjb}}{
 \If{$j \in I'$}{
    Define $B_j' := B(x, 2r_j + 2R_j)$ and $U_j' := U (1+ \gamma)$, where $R_j := \max_{i \in T_j} r_i$. \label{l:bj'}
 }
 \Else{
    Define $B_j' := B_j, U_j' = U. $ \label{l:bj}
 }
Find disjoint subsets $G_j \subseteq B_j'$ for each $j \in I$ such that $|G_j| \leq U_j'$ and $P = \cup_j G_j$. \; \label{l:gj}
(terminate with failure if such subsets $G_j$ do not exist) \;
Let $I'' \subseteq I'$ be the index set consisting of indices $j$ such that $|G_j| > U$. \label{l:i''} \;
Call {\bf Redistribute}($\{G_j: j \in I'' \}, \{c_j: j \in I\}, \{r_j : j \in [k_L]\}$). \label{l:dist} \;
(terminate with failure if {\bf Redistribute} outputs {\bf fail}) \;
Let $\{(w_j, A_j): j \in I''\}$ be the clustering returned by {\bf Redistribute}. \;
{\bf Output} $\{(c_j, G_j): j \in I \setminus I''\} \cup \{(c_j, G_j \setminus A_j): j \in I''\} \cup \{(w_j, A_j): j \in I'' \}.$ \label{l:out}
}
\label{algo:uni}
\end{algorithm}

\begin{definition}
    Call the optimal  cluster $C_j^\star$ {\em large} if $|C_j^\star| \geq \frac{\gamma U}{k}$; otherwise call it {\em small}. Let $k_L^\star$ denote the number of large clusters in the optimal solution $\cO$. 
\end{definition}

Assume without loss of generality that the clusters $C^\star_1, \ldots, C^\star_{k_L^\star}$ are large (and the rest are small). Using~\Cref{cl:radii}, we can assume that we know radii $r_1, \ldots, r_k$ satisfying the condition that $r_j \geq r_j^\star$ for all $j \in [k]$ and $\sum_j r_j \leq (1+\varepsilon) \sum_j r_j^\star$. By cycling over the  $k$ possible choices of $k_L^\star$, we can also assume that we know this value. The algorithm is given in~\Cref{algo:uni}. It begins by guessing the center $c_j$ of each large optimal cluster $C_j^\star$ and defines $B_j$ as the ball of radius $2 r_j$ around $c_j$ (line~\ref{l:large}). The intuition is that $c_j$ may not be equal to the center $c_j^\star$ but will lie inside the cluster $C_j^\star$ with reasonable probability, and in this case, the ball $B_j$ will contain $C_j^\star$. Let us assume that this event happens. Now the algorithm adds some more balls to the set $\B$ (that maintains the set of balls constructed so far). Whenever there is a point $x$ that is not covered by the balls in $\B$, we guess the index $j$ of the optimal cluster $C_j^\star$ containing $x$ (line~\ref{l:guess1}). We add a ball of radius $2r_j$ around $x$ to $\B$: again the intuition is that if $x\in C_j^\star$, then this ball contains $C_j^\star$.

The index set $I$ maintains the set of indices $j$ for which we have approximation to $C_j^\star$ in $\B$. At this moment, the set of balls in $\B$ cover the point set $P$ but we are not done yet because we need to assign points to balls while maintaining the capacity constraints. Now, for each index $i \notin I$, we guess the index of a ball $B_j \in \B$ such that $B_j$ intersects $C_i^\star$ (such a ball must exist since the balls in $\B$ cover $P$, and in case there are more than one candidates for $B_j$, we pick one arbitrarily). We add the index $i$ to a set $T_j$ (line~\ref{l:tj}). Now we define $I'$ as the subset of $I$ consisting of those indices $j$ for which $T_j$ is non-empty (line~\ref{l:I'}). Now, for each $j \in I'$, let $R_j$ denote the maximum radius of any of the balls corresponding to $T_j$. We replace $B_j$ by a larger ball $B_j'$ by extending the radius of $B_j$ by $2R_j$ -- this ensures that $B_j'$ contains $C_i^\star$ for any $i^\star \in T_j$. 
Further we allow $B_j'$ to have $U_j' := U(1+\gamma)$ points assigned to it (line~\ref{l:bj'}). For indices $j \in I \setminus I'$, we retain $B_j', U_j'$ as $B_j, U_j$ respectively (line~\ref{l:bj}). Now, we find a subset $G_j$ of each ball $B_j'$ such  that $|G_j| \leq U_j'$ and $\cup_j G_j$ covers all the points $P$ (line~\ref{l:gj}).
We shall show such a subset exists if all our guesses above our correct. Otherwise we may not be able to find such sets $G_j$; and in this case the iteration terminates with failure.  The intuition is that  one feasible choice of $G_j$ is as follows: for each $j \in I \setminus I'$, $G_j := C_j^\star$, whereas for an index $j \in I'$, we set $G_j$ to be the union of $C_j^\star$ and all the clusters $C_i^\star$, where $i \in T_j$. 
Since each small cluster has at most $\gamma U/k$ points, the total number of points in the latter clusters is at most $\gamma U$. 
But we had set $U_j'$ to $(1+\gamma)U$. It is also easy to check that we can find such sets $G_j$ by a simple $b$-matching  formulation.
Now we consider those subsets $G_j$ for which $|G_j| > U$ (this can only happen if $j \in I'$) and let $I''$ denote the corresponding index set (line~\ref{l:i''}). 
Finally, we call the procedure ${\bf Redistribute}$ in line~\ref{l:dist} to redistribute the points in $G_j, j \in I''$ such that each such set has at most $U$ points. Note that so far we have only constructed $|I|$ balls in $\B$ and we can still add $k-|I|$ extra balls. The procedure ${\bf Redistribute}$ returns these extra balls -- we finally return these balls and remove suitable points from $G_j, j \in I''$ (line~\ref{l:gj}). We shall show later that $|I''|$ is less than $k_L$.  Again, note that if our random choices were bad, it is possible that {\bf Redistribute} returns failure, in which case the iteration ends with failure.  

\begin{algorithm}[H]
\caption{Algorithm {\bf Redistribute}}
{\bf Input:} $(G_1, \ldots, G_h)$, where $G_i$'s are pair-wise disjoint subsets of $P$, where $|G_i| > U$ for all $i$;  a subset $C$ of $P$, and a set $R = \{r_1, \ldots, r_{k_L}\}$ of radii. \;
\For{each ordered subset $(r_{\sigma_1}, \ldots, r_{\sigma_h})$ of $R$ \label{l:forguess}}{ \label{l:for}
    Construct a bipartite graph $H=(V_L, V_R, E)$ as follows. \;
    $V_L$ has one vertex $v_i$ for each $G_i, i \in [h]$. $V_R$ is defined as $P \setminus C$. \label{l:graph}\;
    Add  an edge $(v_i, w), v_i \in V_L, w \in V_R,$ iff $|B(w, r_{\sigma_i}) \cap G_i| \geq \gamma U$. \label{l:edge} \;
    \If{$H$ has a matching that matches all vertices in $V_L$  \label{l:matching}}{
       Suppose $v_i$ is matched with $w_i \in R$ for each $i \in [h]$. \label{l:match1} \;
       Let $A_i$ be a subset of $B(w_i, r_{\sigma_i}) \cap G_i$ such that $|A_i| = \gamma U$. \label{l:match2} \;
       Return $\{(w_i, A_i): i \in [h]\}$ (and end the procedure).  
    
    }

}
Return {\bf fail}. 
\label{algo:redist}
\end{algorithm}

We now describe the algorithm {\bf Redistribute} in~\Cref{algo:redist}. 
The procedure receives three  parameters: the first parameter is a class of subsets $\{G_1, \ldots, G_h\}$ which are mutually disjoint.
These correspond to the sets $G_i, i \in I''$ constructed in~\Cref{algo:uni} (and hence $h = |I''|). $ 
The second parameter is a set $C$, which corresponds to the set of centers chosen by~\Cref{algo:uni} (till the call to this procedure) and the third parameter is a set of radii $\{r_1, \ldots, r_{k_L}\}$ which correspond to the radii of the large clusters. 
We shall ensure that $k_L \geq h$. Recall that the goal is to identify balls which can take away $\gamma U$ points from each of the sets $G_i$. The radii of these balls shall come from the set $R:= \{r_1, \ldots, r_{k_L}\}$, and we shall show that we can associate a unique radius with each of the desired balls. Thus, the procedure tries out all ordered subsets of size $h$ of $R$ (line~\ref{l:forguess}). For each such ordered subset $(r_{\sigma_1}, \ldots, r_{\sigma_h})$, we try to remove $\gamma U$ points from each subset $G_i$ by using a suitable ball of radius $r_{\sigma_i}$. Thus, we construct a bipartite matching instance as follows: the left side has one vertex for each subset $G_i$ and the right side has one vertex for each potential center of a ball (line~\ref{l:graph}). Now, we add an edge between a vertex $v_i$ corresponding to $G_i$ on left and a vertex $w$ on the right side if $G_i \cap B(w, r_{\sigma_i})$ has size at least $\gamma U$ (line~\ref{l:edge}). If this graph has a perfect matching, then we identify the desired subsets $A_i \subseteq G_i$ of size $\gamma U$ (line~\ref{l:match2})   and return these. 

\subsection{Analysis}
In this section, we prove correctness of the algorithm. We shall show that with non-zero probability the algorithm outputs a 3-approximate solution. 
We first define the set of desirable events during the random choices made by~\Cref{algo:uni}:
\begin{itemize}
    \item $\cE_1$: For each $ j \in [k_L^\star]$, the point $c_j$ chosen in line~\ref{l:cj} of~\Cref{algo:uni} belongs to the cluster $C_j^\star$. 
    \item $\cE_2$: For each point $x$ chosen in line~\ref{l:xchoose} of \Cref{algo:uni}, the index $j$ chosen in line~\ref{l:guess1} satisfies the property that $x \in B_j^\star$.  Further, the algorithm does not output {\bf fail} in line~\ref{l:failwhile}.
    \item $\cE_3$: For each index $i$ considered in line~\ref{l:ki} in \Cref{algo:uni}, the index $j \in I$ selected in line~\ref{l:tj} satisfies the property  that $B_j \cap B_i^\star$ is non-empty. 
    \item $\cE_4$: The set of centers $\{c_j: j \in I\}$ selected by~\Cref{algo:uni} is disjoint from the set of optimal centers of the large balls, i.e., $\{c_j^\star: j \in [k_L^\star]\}.$
\end{itemize}

We first show that all of these desirable events happen with non-zero probability. 
\begin{lemma}
    \label{lem:event}
    Assuming $n \geq 2k^5$, all the events $\cE_1, \cE_2, \cE_3, \cE_4$ together happen with probability at least $\frac{1}{k^{O(k)}}.$ 
\end{lemma}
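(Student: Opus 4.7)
The plan is to bound each event in turn---conditioning on the previous ones---and multiply the lower bounds. First, the capacity constraint $n \le kU$ gives $U \ge n/k$, so each large cluster satisfies $|C_j^\star| \ge \gamma U / k = U/k^3 \ge n/k^4$; the hypothesis $n \ge 2k^5$ then yields $|C_j^\star|-1 \ge |C_j^\star|/2 \ge n/(2k^4)$. Since the optimal clusters are disjoint, the only optimal center lying in $C_j^\star$ is $c_j^\star$ itself, so the initial-loop part of $\cE_4$ at coordinate $j$ is equivalent to $c_j \ne c_j^\star$. Hence for each $j \in [k_L^\star]$ the joint event $\{c_j \in C_j^\star\} \cap \{c_j \ne c_j^\star\}$---which secures $\cE_1$ at coordinate $j$ together with the initial-loop portion of $\cE_4$---has probability exactly $(|C_j^\star|-1)/n \ge 1/(2k^4)$, and by independence across $j$ the combined probability is at least $(2k^4)^{-k_L^\star}$.

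Next, I would condition on $\cE_1$ and handle $\cE_2$ together with the while-loop part of $\cE_4$. The key geometric observation is that for every $j \in [k_L^\star]$ the ball $B_j = B(c_j, 2r_j)$ already contains $C_j^\star$: for $p \in C_j^\star$, the triangle inequality and $r_j \ge r_j^\star$ give $d(p, c_j) \le d(p, c_j^\star) + d(c_j^\star, c_j) \le 2r_j^\star \le 2r_j$. In particular every optimal large center $c_{j'}^\star$ lies in $B_{j'}$, so no point $x$ chosen later in the while loop can equal any $c_{j'}^\star$; this establishes the while-loop portion of $\cE_4$ \emph{deterministically} from $\cE_1$. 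The same triangle-inequality argument applied to a correct guess $j$ for an uncovered $x$ (i.e., $x \in B_j^\star$) shows that the newly added $B_j = B(x, 2r_j)$ contains $C_j^\star$. Since $\cup_j C_j^\star = P$, if every guess in the while loop is correct then the loop terminates with every point covered, so the $I=[k]$ failure branch is never reached. Each individual guess of $j$ from $[k]\setminus I$ is correct with probability at least $1/(k - |I|) \ge 1/k$, and the loop runs at most $k - k_L^\star$ times, so $\Pr[\cE_2 \mid \cE_1] \ge k^{-(k - k_L^\star)}$.

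Finally, conditioned on $\cE_1 \cap \cE_2$, the balls in $\B$ cover $P$, so every $C_i^\star$ with $i \notin I$ intersects at least one $B_j$ with $j \in I$; the uniform draw in line~\ref{l:tj} hits such a $j$ with probability at least $1/|I| \ge 1/k$, and over the $k - |I| \le k - k_L^\star$ iterations this gives $\Pr[\cE_3 \mid \cE_1 \cap \cE_2] \ge k^{-(k-k_L^\star)}$. Multiplying the three bounds yields
\[
\Pr[\cE_1 \cap \cE_2 \cap \cE_3 \cap \cE_4] \;\ge\; \frac{1}{(2k^4)^{k_L^\star}} \cdot \frac{1}{k^{k-k_L^\star}} \cdot \frac{1}{k^{k-k_L^\star}} \;\ge\; \frac{1}{k^{O(k)}}.
\]
The main subtlety I anticipate is the handling of $\cE_4$: one must resist bounding it independently (which would either demand an extra sampling step or a weaker union bound) and instead couple its initial-loop portion with $\cE_1$ while deriving its while-loop portion for free from the coverage argument. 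The elementary but essential bound $U \ge n/k$, which makes the ``large cluster'' threshold quantitatively meaningful, is the other place where a careless estimate would blow up the probability.
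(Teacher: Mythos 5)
Your proof follows the paper's argument essentially step for step: bound $\cE_1$ jointly with the sampling-stage portion of $\cE_4$ coordinate by coordinate, then condition to handle $\cE_2$ (coverage of each $C_j^\star$ by $B_j$ plus avoidance of {\bf fail}) and $\cE_3$, and multiply the three bounds. The one place you diverge is the assertion that $c_j^\star$ is the only optimal center lying in $C_j^\star$: since a capacitated optimal assignment need not place each $c_{j'}^\star$ inside its own cluster $C_{j'}^\star$, this is not justified, and the paper instead excludes all $k_L^\star$ optimal large centers, lower-bounding the per-coordinate probability by $(|C_j^\star|-k)/n$, which is still at least $1/(2k^4)$ under $n \geq 2k^5$ --- so the slip is cosmetic and the asymptotics are unaffected. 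Your explicit derivation that the while-loop centers automatically avoid $\{c_{j'}^\star : j' \in [k_L^\star]\}$ once $\cE_1$ holds (each such center already lies in $B_{j'}$ and hence is covered) is a detail the paper leaves implicit, and is a welcome addition.
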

\begin{proof}
    We first check event $\cE_4 \cap \cE_1$. Consider an iteration $j$ of the {\bf for} loop in line~\ref{l:for0}. Conditioned on the choices in the previous $j-1$ iterations, the probability that $c_j \in C_j^\star \setminus \{c_j^\star: j \in [k_L^\star]\}$ is at least $\frac{|C_j^\star|-k}{n} $. Since $C_j^\star$ is large, we know that $|C_j^\star| \geq \gamma U/k \geq n/k^4$ (since $\gamma = 1/k^2$ and $U \geq n/k$). Using the fact that $n \geq 2k^5$, we get 
    $$\Pr[\cE_1 \cap \cE_4] = \Pr[c_j \in C_j^\star \setminus \{c_j^\star: j \in [k_L^\star]\}, \text{for all $j=1, \ldots, k_L^\star$} ] \geq \left(\frac{1}{2 k^4}\right)^k = \frac{1}{2^k k^{4k}}$$ 

    Now we consider $\cE_2$. We condition on the coin tosses before the {\bf while} loop in line~\ref{l:xchoose} such that $\cE_1 \cap \cE_4$ occur. This implies that for every $j \in [k_L^\star]$, $C_j^\star \subseteq B_j$. 
    
    The probability that we correctly guess the index $j$ such that  the cluster $C_j^\star$ contains $x$ (in line~\ref{l:xchoose}) is $1/k$. Since there can be at most $k$ iterations of the {\bf while} loop, the probability that this guess is correct for each point $x$ chosen in line~\ref{l:xchoose} is at least $1/k^k$. Further, if this guess is always correct, then $B_j$ contains $C_j^\star$ for all $j \in I$. This shows that if $I$ becomes equal to $[k]$, then the balls $B_j$ would cover $P$ and hence, we won't output {\bf fail}. Thus, we see that 
    $$\Pr[\cE_2 | \cE_1 \cap \cE_4] \geq 1/k^k.$$

    Finally we consider $\cE_3$. Again condition on the events before the {\bf for} loop in line~\ref{l:ki} and assume that $\cE_1 \cap \cE_2 \cap \cE_4$ occur. There are at most $k$ iterations of the ${\bf for}$ loop. Since $\cup_{j \in I} B_j = P$, there must exist an index $j \in I$ such that $B_j$ intersects $C_i^\star$. Therefore, the probability that we guess such an index $j$ in line~\ref{l:tj}
    is at least $1/k$. Thus, we get 
    $$ \Pr[\cE_3| \cE_1 \cap \cE_2 \cap \cE_4] \geq 1/k^k.$$
    Combining the above inequalities, we see that $\Pr[\cE_1 \cap \cE_2 \cap \cE_3 \cap \cE_4] \geq \frac{1}{2^k k^{6k}}.$ 
    This implies the desired result. 
\end{proof}
We now show that the sets $G_j$ as required in line~\ref{l:gj} exist and can be found efficiently. 
\begin{claim}
 Assume that the events $\cE_1, \ldots, \cE_4$ occur. Let $B_j', U_j'$ be as defined in lines~\ref{l:bjb}--\ref{l:bj} of \Cref{algo:uni}. Then there exist mutually disjoint subsets $G_j \subseteq B_j'$ for each $j \in I$ such that $P= \cup_{j \in I} G_j$ and $|G_j| \leq U_j'$ for each $j \in I$. Further, the subsets $G_j$ can be found in $\poly(n)$ time. 
\end{claim}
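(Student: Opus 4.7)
The plan is to explicitly exhibit a canonical feasible assignment built from the optimal clusters themselves, and then appeal to a standard $b$-matching (or max-flow) formulation to get polynomial-time computability.

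\textbf{The canonical assignment.} For each $j \in I \setminus I'$, set $G_j := C_j^\star$. For each $j \in I'$, set $G_j := C_j^\star \cup \bigcup_{i \in T_j} C_i^\star$. I would verify the four required properties in order.

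\emph{Disjointness and coverage.} The sets $C_1^\star, \ldots, C_k^\star$ partition $P$. The indices $\{j\}_{j \in I}$ together with $\bigcup_{j \in I'} T_j$ cover all of $[k]$: indeed $T_j$'s are formed in line~\ref{l:tj} by routing every index $i \in [k] \setminus I$ to a unique $j \in I$, so $\{T_j\}_{j \in I}$ partitions $[k] \setminus I$. Hence $\{G_j\}_{j \in I}$ is a partition of $P$.

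\emph{Containment $G_j \subseteq B_j'$.} First consider $j \in I \setminus I'$ where $B_j' = B_j = B(c_j, 2r_j)$. Under $\cE_1$ and $\cE_2$, the chosen $c_j$ lies in $C_j^\star$; therefore for any $y \in C_j^\star$ we have $d(y,c_j) \leq d(y,c_j^\star) + d(c_j^\star, c_j) \leq 2 r_j^\star \leq 2 r_j$, so $C_j^\star \subseteq B_j$. Next consider $j \in I'$; I still need $C_i^\star \subseteq B_j' = B(c_j, 2r_j + 2R_j)$ for every $i \in T_j$. Under $\cE_3$, $B_j$ and $B_i^\star$ share some point $z$. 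Then for any $y \in C_i^\star$, two triangle-inequality hops give
\[
d(y, c_j) \;\leq\; d(y, c_i^\star) + d(c_i^\star, z) + d(z, c_j) \;\leq\; r_i^\star + r_i^\star + 2 r_j \;\leq\; 2r_i + 2r_j \;\leq\; 2R_j + 2 r_j,
\]
which is exactly the radius of $B_j'$.

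\emph{Capacity $|G_j| \leq U_j'$.} For $j \in I \setminus I'$, $|G_j| = |C_j^\star| \leq U = U_j'$. For $j \in I'$, observe that $I \supseteq [k_L^\star]$ by construction, so every $i \in T_j \subseteq [k] \setminus I$ indexes a \emph{small} optimal cluster, giving $|C_i^\star| \leq \gamma U / k$. Summing and using $|T_j| \leq k$,
\[
|G_j| \;=\; |C_j^\star| + \sum_{i \in T_j} |C_i^\star| \;\leq\; U + k \cdot \frac{\gamma U}{k} \;=\; (1+\gamma) U \;=\; U_j'.
\]

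\textbf{Polynomial-time computation.} Existence established, for the algorithmic part I would set up a bipartite $b$-matching / unit-capacity max-flow instance: a source with arcs of capacity $U_j'$ into each $j \in I$, unit-capacity arcs from $j$ to each $p \in B_j' \cap P$, and unit-capacity arcs from each $p$ to a sink. The canonical assignment above is a feasible integral flow of value $|P|$, so the max flow equals $|P|$; any standard max-flow algorithm then recovers disjoint subsets $G_j$ with the required properties in $\poly(n)$ time. The key obstacle (and hence the main content of the proof) is verifying the radius-containment bound for $j \in I'$, which is precisely where both the $2R_j$ slack in $B_j'$ and the $\cE_3$ guarantee are used; all other steps are bookkeeping with the invariants of~\Cref{algo:uni}.
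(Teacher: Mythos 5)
Your proposal is correct and matches the paper's own argument: the same canonical choice $G_j = C_j^\star \cup \bigcup_{i \in T_j} C_i^\star$, the same $(1+\gamma)U$ capacity bound using smallness of the clusters indexed by $T_j \subseteq [k]\setminus I$, and the same flow/$b$-matching argument for polynomial-time computability. You merely spell out the triangle-inequality containment step that the paper labels ``easy to check,'' and that verification is accurate.
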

\begin{proof}
    Since events $\cE_1, \cE_2, \cE_3$ occur, it is easy to check that for each $j \in I$, $B_j'$ contains $C_j^\star \cup \bigcup_{h \in T_j} C_h^\star$. Further, if $T_j$ is non-empty, then 
    $$|C_j^\star| + \sum_{h \in T_j} |C_h^\star| \leq U(1+\gamma),$$
    because each of the clusters $C_h^\star$ is small. Thus, one feasible choice for the subsets $G_j$ is as follows: for each $j \in I$ such that $T_j$ is empty, define $G_j = C_j^\star$, otherwise define $G_j = C_j^\star \cup \bigcup_{h \in T_j} C_h^\star$. This proves the existence of the desired subsets $G_j$. It is easy to check that such a collection of subsets can be found by standard flow-based techniques, and hence, would take $\poly(n)$ time. 
\end{proof}
We now show that the {\bf Redistribute} outputs the desired subsets.
\begin{lemma}
 Assume that the events $\cE_1, \ldots, \cE_4$ occur. Then~\Cref{algo:redist} does not fail. 
\end{lemma}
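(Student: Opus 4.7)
The plan is to exhibit a specific iteration of the for-loop in Redistribute that succeeds. The crux is to construct an injection $\pi : I'' \to [k_L^\star]$ satisfying $|G_j \cap C_{\pi(j)}^\star| \geq \gamma U$ for every $j \in I''$; the events $\cE_1, \cE_2, \cE_3$ are used only indirectly via the preceding claim that guarantees existence of the $G_j$'s. Assuming such a $\pi$, enumerate $I'' = \{j_1, \ldots, j_h\}$ and consider the iteration corresponding to the ordered tuple $(r_{\pi(j_1)}, \ldots, r_{\pi(j_h)})$. Set $w_i := c_{\pi(j_i)}^\star$ and match $v_i \leftrightarrow w_i$ in $H$. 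Event $\cE_4$ gives $w_i \in V_R = P \setminus C$; injectivity of $\pi$ together with the hard-capacity assumption that optimal centers are distinct makes the $w_i$'s distinct; and the edge $(v_i, w_i)$ exists because $r_{\pi(j_i)} \geq r_{\pi(j_i)}^\star$ forces $C_{\pi(j_i)}^\star \subseteq B(w_i, r_{\pi(j_i)})$, so $|B(w_i, r_{\pi(j_i)}) \cap G_{j_i}| \geq |C_{\pi(j_i)}^\star \cap G_{j_i}| \geq \gamma U$. Hence this iteration of Redistribute finds the required matching.

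To build $\pi$, I apply Hall's theorem to the bipartite graph with parts $I''$ and $[k_L^\star]$ and edges $\{(j, i) : |G_j \cap C_i^\star| \geq \gamma U\}$. For $S \subseteq I''$ with neighborhood $N(S)$, I estimate $M(S) := \sum_{j \in S} |G_j \cap \bigcup_{i \in [k_L^\star]} C_i^\star|$ in two ways. Lower bound: since $|G_j| > U$ for each $j \in I''$ and the $G_j$'s are pairwise disjoint, their total overlap with small clusters is at most $\sum_{i \notin [k_L^\star]} |C_i^\star| \leq (k - k_L^\star)(\gamma U / k) \leq \gamma U$, so $M(S) > |S|U - \gamma U$. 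Upper bound: split the sum over $i \in [k_L^\star]$; for $i \in N(S)$, disjointness of the $G_j$'s gives $\sum_{j \in S} |G_j \cap C_i^\star| \leq |C_i^\star| \leq U$, while for $i \notin N(S)$ every summand is strictly below $\gamma U$. Assuming $|N(S)| \leq |S| - 1$ and plugging $\gamma = 1/k^2$, algebraic rearrangement yields $|S|(k_L^\star - |S| + 1) > k^2 - 1$, which is impossible: if $|S| > k_L^\star$ the left side is non-positive, and if $|S| \leq k_L^\star \leq k$ the left side is at most $((k_L^\star+1)/2)^2 \leq (k+1)^2/4 < k^2 - 1$ for $k \geq 2$. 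Therefore Hall's condition holds and the required $\pi$ exists.

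The main obstacle is the Hall-style counting argument: disjointness of the $G_j$'s must be exploited twice -- once to bound the small-cluster slack in the lower bound on $M(S)$, and once to cap each heavy large cluster's contribution by $|C_i^\star| \leq U$ rather than $|S| U$ in the upper bound. The choice $\gamma = 1/k^2$ is calibrated so that the small-cluster slack $\gamma U$ is negligible compared to $|S|U$ while the per-cluster threshold $\gamma U$ stays well below $U$. The degenerate case $k_L^\star = 0$ is handled trivially: every cluster is then small, so $|G_j| \leq \gamma U < U$ for any feasible $G_j$, forcing $I'' = \emptyset$, and the empty-ordering iteration of Redistribute vacuously succeeds.
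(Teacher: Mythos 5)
Your proof is correct and follows essentially the same route as the paper: verify Hall's condition on the bipartite graph between the oversized sets $G_j$ and the large optimal clusters (with threshold $\gamma U$), then realize the matching inside {\bf Redistribute} by taking $w_i = c_{\pi(j_i)}^\star$, using $\cE_4$ for $w_i \in P \setminus C$ and $r_{\pi(j_i)} \geq r_{\pi(j_i)}^\star$ for the edge. The only difference is bookkeeping in the deficiency count --- the paper sums $|G_j|$ over all $k$ clusters and bounds the off-neighborhood contribution by $k^2\gamma U = U$ in one stroke, whereas you isolate the large-cluster mass and run a slightly longer case analysis (whose reduction to $|N(S)| = |S|-1$ is justified by monotonicity in $|N(S)|$, which you leave implicit) --- but both arguments are sound.
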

\begin{proof}
    We first consider the following bipartite graph $H'=(V_L', V_R', E')$ (this is for the purpose of analysis only): the vertex set $V_L'$ has one vertex $v_i$ for each of subsets $G_i$, and hence, is same as the set $V_L$ considered in line~\ref{l:graph} of~\Cref{algo:redist}. $V_R'$ has a vertex $w_j$ for each optimal cluster $C_j^\star, j \in k_L^\star$.  We add an edge $(v_i, w_j)$ iff $|G_i \cap C_j^\star| \geq \gamma U$. We claim that there is a matching in this graph that matches all the vertices in $V_L'$. Suppose not. Then there is a subset $X$  of $V_L'$ such that $|N(X)| < |X|$, where $N(X)$ denotes the neighborhood of $X$. Now, 
    \begin{align*}
        \sum_{i \in X} |G_i| & = \sum_{i \in X} \sum_{j \in [k]} |C_j^\star \cap G_i| = \sum_{j \in [k]} \sum_{i \in X} |C_j^\star \cap G_i| \\
        & \leq \sum_{j \in N(X)} |C_j^\star| + \sum_{j \in [k] \setminus N(X)} \sum_{i \in X} \gamma U \\
        & \leq |N(X)| U + k^2 \gamma U \\
        & \leq U(|X|-1) + U = U|X|,
    \end{align*}
    where the second inequality uses the fact that if $i \in [k] \setminus N(X)$, then $|C_j^\star \cap G_i| \leq \gamma U$. Indeed, if $C_j^\star$ is small, this follows from the observation that $|C_j^\star| \leq \gamma U$. Otherwise, $j \in [k_L^\star]$ and hence we have a vertex $w_j$ corresponding to $C_j^\star$ in $H'$. Since $(v_i, w_j)$ is not an edge in $H'$, it follows that $C_j^\star \cap G_i$ has size at most $\gamma U$. Now we get a contradiction because for each $i$, $|G_i| > U$ and hence, $\sum_{i \in X} |G_i| > U|X|$ %\jatin{minor detail, should we add that $X$ is non-empty because $|N(X)| < |X|$?}
    
    Thus we have shown that $H'$ has a matching that matches all the vertices in $V_L'$ -- let $w_{\sigma_i}$ be the vertex in $V_R'$ which is matched to $v_i \in V_L'$ by this matching. Now consider the iteration of the {\bf for loop} in line~\ref{l:forguess} in~\Cref{algo:redist} where the sequence of radii is given by $(r_{\sigma_1}, \ldots, r_{\sigma_h})$. We claim that the graph $H$ constructed in this iteration (in line~\ref{l:graph}) has a matching that matches all the vertices in $V_L$. Indeed, we can match the vertex $v_i \in L$ with $c_{\sigma_i}^\star$ because $\gamma U \leq |G_i \cap C_{\sigma_i}^\star| \leq |G_i \cap B(c_{\sigma_i}^\star, r_{\sigma_i})| $ -- we use the fact that the event $\cE_4$ occurred, and hence, $c_{\sigma_i}^\star \in C$. 

    Now, let $M$ be the matching found in line~\ref{l:matching} and suppose $v_i$ is matched with a vertex $w_i \in R$. By definition of $H$, $B(w_i, r_{\sigma_i}) \cap G_i$ has size at least $\gamma U$. Thus, we can find the desired subset $A_i$ in line~\ref{l:match2}. Note that the subsets $A_i$ are mutually disjoint since the sets $G_i$ are mutually disjoint. Further the points $w_i, i \in H,$ are also distinct since $M$ is a matching. Thus, the procedure ${\bf Redistribute}$ returns $h$ clusters, each containing $\gamma U$ points.  
\end{proof}

It follows from the results above that the set of $k$ clusters returned by~\Cref{algo:uni} in line~\ref{l:out} cover all the points in $P$ and satisfy the capacity constraints. We now consider the objective function value of this solution. 

\begin{lemma}
    \label{lem:cost_uni}
    Assume that the events $\cE_1, \ldots, \cE_4$ occur. Then the total sum of the radii of the clusters returned by~\Cref{algo:uni} is at most $3 \sum_{j \in [k]} r_j$. Further, for any $p \geq 1$, the total $L_p$ norm of the radii of the clusters returned by this algorithm is at most $(2^{2p-1} + 1)^{1/p} \left( \sum_{j \in [k]} r_j^p \right)^{1/p}.$
\end{lemma}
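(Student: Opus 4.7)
The output of~\Cref{algo:uni} consists of three kinds of clusters, and I will simply read off an upper bound on the radius of each. For $j \in I \setminus I''$ the cluster is $(c_j, G_j)$ with $G_j \subseteq B_j'$, so its radius is at most $2r_j$ if $j \notin I'$ and at most $2r_j + 2R_j$ if $j \in I'$ (where $R_j = \max_{i \in T_j} r_i$). For $j \in I''$, the cluster $(c_j, G_j \setminus A_j)$ again sits inside $B_j'$ so its radius is at most $2r_j + 2R_j$. Finally, for $j \in I''$, the ball $(w_j, A_j)$ returned by \textbf{Redistribute} has radius at most $r_{\sigma(j)}$, where the previous lemma established that $\sigma$ is an injective map from $I''$ into $[k_L^\star]$ (this is the content of the matching in the auxiliary graph $H'$).

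\textbf{Sum-of-radii bound.} Using the radii above, the total sum is at most
\[
\sum_{j \in I \setminus I'} 2 r_j \;+\; \sum_{j \in I'} (2r_j + 2R_j) \;+\; \sum_{j \in I''} (2r_j + 2R_j) \;+\; \sum_{j \in I''} r_{\sigma(j)}.
\]
I would then use two observations. First, $R_j \le \sum_{i \in T_j} r_i$, and since the sets $T_j$ (for $j \in I'$) form a partition of $[k] \setminus I$, we have $\sum_{j \in I'} R_j \le \sum_{i \in [k] \setminus I} r_i$. This folds the first three sums into $2\sum_{j \in I} r_j + 2 \sum_{i \in [k] \setminus I} r_i = 2 \sum_{j \in [k]} r_j$. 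Second, injectivity of $\sigma$ into $[k]$ gives $\sum_{j \in I''} r_{\sigma(j)} \le \sum_{j \in [k]} r_j$. Adding yields $3 \sum_{j \in [k]} r_j$, which is the claimed bound.

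\textbf{$L_p$ bound.} For the $L_p$ version I would raise each radius to the $p$-th power and use the inequality $(a+b)^p \le 2^{p-1}(a^p + b^p)$ once. For $j \in I'$, $(2(r_j + R_j))^p \le 2^{2p-1}(r_j^p + R_j^p)$, while for $j \in I \setminus I'$, $(2r_j)^p = 2^p r_j^p \le 2^{2p-1} r_j^p$ since $p \ge 1$. Because $R_j$ is a maximum, $R_j^p = \max_{i \in T_j} r_i^p \le \sum_{i \in T_j} r_i^p$, and again the $T_j$'s partition $[k] \setminus I$. Summing the first three groups gives at most $2^{2p-1} \sum_{j \in [k]} r_j^p$. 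The Redistribute term contributes an additional $\sum_{j \in I''} r_{\sigma(j)}^p \le \sum_{j \in [k]} r_j^p$ by injectivity, so the total $p$-th power sum is at most $(2^{2p-1}+1) \sum_{j \in [k]} r_j^p$, and taking $p$-th roots gives the stated bound.

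\textbf{Main obstacle.} The only subtle point is the bookkeeping around $R_j$ and the Redistribute radii: one must use that (i) the $T_j$'s partition $[k]\setminus I$, and (ii) $\sigma : I'' \to [k_L^\star]$ is injective, both of which were established earlier. Once this is set up, the calculation is mechanical, and the choice to bound $R_j$ by a max (and not a sum of radii of merged clusters) is precisely what distinguishes this analysis from the one in~\cite{BandyapadhyayL023a} and makes the extension to general $L_p$ possible.
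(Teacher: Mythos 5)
Your proof is correct and follows essentially the same route as the paper's: bound each output cluster by the radius of the corresponding $B_j'$ (at most $2r_j+2R_j$), use $R_j\le\sum_{i\in T_j}r_i$ together with the disjointness of the sets $\{j\}\cup T_j$ to fold everything into $2\sum_{j\in[k]}r_j$, and charge the \textbf{Redistribute} balls injectively to distinct radii for the extra $\sum_{j\in[k]}r_j$; the $L_p$ case via $(a+b)^p\le 2^{p-1}(a^p+b^p)$ is likewise identical. One bookkeeping slip: since $I''\subseteq I'$, your displayed sum counts the clusters with indices in $I''$ twice (once in $\sum_{j\in I'}(2r_j+2R_j)$ and again in $\sum_{j\in I''}(2r_j+2R_j)$), so taken literally it does not fold into $2\sum_{j\in[k]}r_j$; the third summand should simply be deleted (each $j\in I''$ contributes one cluster $(c_j,G_j\setminus A_j)$ of radius at most $2r_j+2R_j$, already covered by the $I'$ sum, plus the ball $(w_j,A_j)$ of radius $r_{\sigma(j)}$).
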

\begin{proof}
    For each $j \in I$, the ball $B_j$ constructed during lines~\ref{l:large}--\ref{l:ball} has radius at most $2r_j$. Now, the ball $B_j'$ constructed during the {\bf for} loop in line~\ref{l:bjb} has radius at $2r_j + \sum_{h \in T_j} 2r_h$. Therefore, the total radii of the balls in $B_j'$ is at most 
    $$ \sum_{j \in I} \left( 2r_j + \sum_{h \in T_j} 2r_h \right) \leq 2 \sum_{j \in [k]} r_j,$$
    where the inequality follows from the fact that sets $\{j\} \cup T_j, $ where $j \in I$, are mutually disjoint. Finally, the total sum of the radii of the sets $A_i$ returned by {\bf Redistribute} is at most $\sum_{j \in [k]} r_j$. This proves the first statement in the Lemma. 

    Further, the $L_p$-norm of the radii of the clusters output by this algorithm is at most 
    $$ \sum_{j \in I} (2r_j + 2R_j)^p + \sum_{j \in [k]}  r_j^p \leq \sum_{j \in [k]} 2^{2p-1} r_j^p + \sum_{j \in [k]} r_j^p, $$
    where $R_j$ is as defined in line~\ref{l:bj'} of~\Cref{algo:uni}. This completes the proof of the desired result. 
\end{proof}

Thus,  with probability at least $1/k^{O(k)}$ (\Cref{lem:event}),~\Cref{algo:uni} outputs a feasible solution with approximation guarantees as given by~\Cref{lem:cost_uni}. Repeating~\Cref{algo:uni} $k^{O(k)}$ times and using~\Cref{cl:radii} yields~\Cref{thm:uniform}.

\section{Non uniform capacities}
\label{sec:nonuni}
In this section, we consider the general case of $\capsum$ when points can have varying capacities. Recall that for a point $p$, $U_p$ denotes the capacity of $p$. We first give an informal description of the algorithm. Let $r_1^\star, \ldots, r_k^\star$ denote the radii of the $k$ optimal clusters $C_1^\star, \ldots, C_k^\star$. Let $c_1^\star, \ldots, c_k^\star$ be the centers of these clusters respectively. Let $B_j^\star$ denote the ball $B(c_j^\star, r_j^\star)$ -- note that $C_j^\star \subseteq B_j^\star$. 
As in the case of uniform capacities (\Cref{sec:uni}), we begin by assuming that we know radii $r_1, \ldots, r_k$ satisfying the conditions of~\Cref{cl:radii}, i.e.,  $r_j \geq r_j^\star$ and $\sum_{j=1}^k r_j \leq (1+\varepsilon) \sum_{j=1}^k r_j^\star$, where $\varepsilon > 0$ is an arbitrarily small constant. 

Our algorithm maintains three subsets of balls, namely $\B_1, \B_2, \B_3$. Each ball in these sets corresponds (in a sense that we shall make clear later) to a unique optimal ball $B_j^\star$. Thus we denote balls in $\B_1 \cup \B_2 \cup \B_3$ as $B_j$, where the index $j$ denotes the fact that $B_j$ corresponds to $B_j^\star$. Further, we  use $c_j$ to denote the center of $B_j$.   We maintain three sets of (mutually disjoint) indices $I_1, I_2, I_3 \subseteq [k]$ respectively. We also maintain an index set  $I_4 := [k] \setminus (I_1 \cup I_2 \cup I_3)$ denoting those indices for which we have not assigned a ball yet. 

The set $\B_1$ is obtained by the following greedy procedure: while there are points not covered by the balls in $\B_1$, we pick an uncovered  point $p$, guess the index of the optimal cluster covering it, say $j$, and add the ball $B(c_j, 3r_j)$ to $\B_1$, where $c_j$ is a high capacity point close to $p$ (the actual process is slightly nuanced, but the details will be clarified in the actual algorithm description).  When this process ends, we have  a set of balls covering $P$ such that each ball in $\B_1$ covers a corresponding (unique) optimal ball. The set $\B_1$ remains unchanged during rest of the algorithm. We now need to add more balls to this solution in order to satisfy the capacity constraints. 

Such balls shall be added to the sets $\B_2$ and $\B_3$ respectively (and the index sets $I_2$ and $I_3$ shall maintain the correspondence with unique clusters in the optimal solution respectively). Roughly, a ball $B_j \in \B_2$ of the form $B(c_j,r)$ shall satisfy the condition that $r \leq 5r_j$ and $B^\star_j \subseteq B_j$. Our algorithm shall never remove a ball that once gets added to $\B_2$. A typical setting when we can add a ball to $\B_2$ is the following: Suppose there is an index $j \in I_4$ such that the optimal ball $B_j^\star$ intersects a ball $B_h  \in \B_1 \cup \B_2$ and $r_j$ is at least the radius of $B_h$ (we do not know this fact a priori, but our algorithm can {\em guess} such cases). In this case, we find a high capacity point $x$ in the vicinity of $c_h$, such that $B_j = B(x, 5r_j)$ covers $B_j^\star$.

%Then it is not difficult to show that $B(c_h, 5r_j)$ contains $B_j^\star$. Thus we can potentially add $B(c_h, 5r_j)$ as a ball in $\B_2$; the only caveat is that the capacity of $c_h$ may not be large enough. Therefore, we shall find a suitable high capacity ball centered around a point close to $c_h$ and add it to $\B_2$. 

A ball $B_j$ added to the set $\B_3$ shall have the property that it does not intersect any of the balls $B^\star_h, h \in I_3 \cup I_4$ (again, we cannot be certain here since we do not know the optimal balls, but we shall show that in one of the cases guessed by our algorithm, this invariant will be satisfied). A ball $B_j$ once added to $\B_3$ can get deleted, but in this case we will add a corresponding ball to $\B_2$, i.e., we shall remove the index $j$ from $I_3$ and add it to $I_2$. This can happen because of the following reason: suppose we have identified a ball $B_h, h \in I_4$ (centered around a point $c_h$), that we want to add to $\B_3$. But it intersects the optimal ball $B_j^\star$ for an index $j \in I_3$ (again, we {\em guess} this fact). Now if $r_j \geq r_h$, it follows that, as before, we can find a high capacity point $x$ in the vicinity of $c_h$ such that $B_j = B(x, 5 r_j)$ covers $B_j^{\star}$. %$B(c_h, 3r_j)$ contains $B_j^\star$. Therefore, we remove the existing ball $B_j$ from $\B_3$ and add the ball $B(c_h, 3r_j)$ as a new ball corresponding to index $j$ (i.e., as the new version of the ball $B_j$) to $\B_2$.% Similarly if we had guessed that $B_h$ intersects the optimal ball $B_h^\star$, then we would again add a suitable ball to $\B_2$ corresponding to the index $h$.   

Identifying a ball which gets added to $\B_3$ is at the heart of our technical contribution. Essentially we start with a suitable ball $B$ (picked from a certain set of possibilities) which we would like to add to $\B_3$. As long as there is an optimal ball $B_j^\star, j \in I_3 \cup I_4$ intersecting $B$, we able transfer an index from $I_3$ to $I_2$ or suitably shrink the possibilities for identifying the ball $B$. 

We mention one final technicality. For each ball $B_j \in \B_3$ maintained by the algorithm, we shall maintain a subset $C_j$. The subset $C_j$ is meant to capture the subset of points that will be actually be part of the $j^{th}$ cluster output by the algorithm (although the final output may be a subset of $C_j$).  
%\jatin{The following lines that were present earlier seem redundant now? : This is to ensure that the clusters output by the algorithm satisfy the size upper bound.  Therefore, for each ball $B_j \in \B_3$, we shall also maintain a corresponding subset $C_j \subseteq B_j$ explicitly during our algorithm.} 
The reason for this is as follows.  We would like to maintain the following invariant during the algorithm:

% \jatin{This is not true. For balls in $\B_3$, the $C_j$ we maintain is a superset of the final cluster that is assigned to $c_j$. The final cluster assigned to $c_j$ is any $|B_j^\star|$ points from $C_j$. \jatin{This is technically incorrect because the final cluster assigned is the flow output. It is just that, in our construction to show feasibility, we use $|B_j^\star|$ } Should we change the definition of these maintained sets to $C'_j$, and define $C_j$ to be the final cluster $j$, and say that $C_j \subseteq C'_j \subseteq B_j$?}. \aknote{Is this ok now?}
\begin{itemize}
    \item[] {\bf Invariant 1:} Let $j, j' \in I_3$ be two distinct indices. Then $C_j \cap C_{j'} = \emptyset$. Further, for any $j \in I_3$ and $i \in I_3 \cup I_4, $ $B_i^\star \cap C_j = \emptyset.$
\end{itemize}

\subsection{Algorithm Description}
We now give a formal description of our algorithm for $\capsum$ in~\Cref{algo:nonuni}. Here we describe  one iteration of the algorithm and shall show that it outputs the desired solution with probability at least $\frac{1}{2^{O(k^3)}}.$
As mentioned above, there are three sets of balls $\B_1, \B_2, \B_3$ and corresponding (mutually disjoint) index sets $I_1, I_2, I_3$. The set $I_4$ which is initialized to $[k]$ stores the index sets for which we haven't added a ball in $\B_1 \cup \B_2 \cup \B_3$. 

We first explain a subroutine that we shall be using repeatedly during the algorithm: \\
{\bf InsertBall}($p, j, r, u)$). Here $u$ is an index in $\{1,2,3\}$. 
The procedure is supposed to insert the ball $B(p, r)$ centered at $p$ to the set $\B_u$ (and add the index $j$ to $I_u$ -- the index $j$ before this step lies in $I_4$, the set of {\em unused} indices). However, there is a caveat -- we want to avoid the case when $p$ happens to be same as the optimal center $c_h^\star$ for some other $h \neq j$. The reason is that in a subsequent step, the algorithm may try to insert a ball that is an approximation of $B_h^\star$.
At this moment, it may happen that $c_h^\star$ is the only feasible choice for a center because all other points close to $c_h^\star$ have very low capacity. Therefore, the procedure {\bf InsertBall} needs to check if this is the case. In particular, it {\em guesses} this fact, and if indeed $p$ happens to be same as $c_h^\star$, it places a ball of radius $r_h$ around $c_h^\star$ in $\B_1$. The algorithm maintains a global index set $I^\star$ consisting of those indices $j$ for which it has executed this step, i.e., for which the center of the corresponding ball $B_j \in \B_1$ is same as $c_j^\star$. Note that $I^\star$ is a subset of $I_1$. Whenever it adds such an index to $I^\star$, it sets $I_1$ to $I^\star$, $\B_1$ to $\{B(c_i, r_i), i \in I^\star\}$ and resets $I_2, I_3, \B_2, \B_3$ to empty (see line~\ref{l:remove} in~\Cref{algo:insert}), and then restarts the ~\Cref{algo:nonuni} from line~\ref{l:while1}  %\jatin{Note that it is important to restart from line 4.4, because we might have lost that the ``the balls in $\B_1$ cover $P$'' property now. }
%This was also the case earlier, when we were simply adding $B(p, r_j)$ to $\B_1$, and probably removing the old $B_h$ from $\B_1$}

For sake of clarity of the algorithm, we shall always assume that the choice among the two options~(i) and (ii) taken by this algorithm is correct, and also that whenever the algorithm chooses option (ii), it chooses $h$, satisfying $p = c_h^\star$. The reason is that we will later see that this procedure is called $O(k^2)$ times, and it chooses the option (ii) at most $k$ times. Hence, the probability that procedure makes the correct choices each time is $1/2^{O(k^2)}$, which suffices for our purpose. 

\begin{algorithm}[H]
\caption{The  procedure {\bf InsertBall}($p, j, r, u)$).}
\label{algo:insert}
{\bf Input:} Candidate center $p$, radius $r$, index $j \in I_4$, index $u \in \{1,2,3\}.$ \;
Perform one of the following two steps with equal probability: \;
(i) Add the ball $B(p, r)$ to $\B_u$ and the index $j$ to $I_u$ (and remove $j$ from $I_4$). Set $c_j = p$. \label{l:step1} \;
(ii) Guess an index $h \in [k] \setminus I^*$ uniformly at random. Set $c_h = p$.  \label{l:step2} \;
\quad \quad set $I_2, I_3, \B_2, \B_3$ to empty. \label{l:remove} \;
\quad \quad Add $h$ to $I^\star$
\;
\quad \quad Set $I_1 = I^\star$ and $\B_1 = \{B(c_i, r_i), i \in I^\star\}$
\;
\quad \quad Go to line~\ref{l:while1} of ~\Cref{algo:insert}
% \quad \quad Add $h$ to $I^\star \subseteq I_1$ and the ball $B_h := B(p, r_h)$ to $\B_1$.
% \quad \quad If $h \in I_1$ remove $h$ and $B_h$ from $I_1, \B_1$ respectively. 
% \;
% \quad \quad Add $h$ to $I^* \subseteq I_1$ and the ball $B_h := B(p, r_h)$ to $\B_1$. 
\end{algorithm}

We now describe~\Cref{algo:insert}. 
% For sake of clarity of exposition, we assume that the procedure  {\bf InsertBall} always selects the first option in line~\ref{l:step1}, i.e., the point $p$ is never equal to $c_h^\star$ for any $h \in I_4$. \jatin{Earlier it was written that ``The other case is only simpler, because we can then directly add an optimal ball to $\B_1$.", which is no longer true}. We have already accounted for the probability that the procedure always makes the correct choice, and in the remaining probability calculations, we only need to consider the choices made by the algorithm after the last call to {\bf InsertBall} that chooses option (ii). \jatin{This was important to mention because we would have an extra factor of $k$ everywhere because the number of iterations would be $k$ times.}.These details are analysed in the next section. 

In lines~\ref{l:while1}--\ref{l:while2}, 
we add balls to $\B_1$ whose union covers $P$. 
Each such ball $B_j$, corresponding to an index $j$ (which is maintained in the index set $I_1$), is supposed to contain the optimal ball $B_j^\star$.  Further, the center $c_j$ of $B_j$ should have capacity at least that of $c_j^\star$. In each iteration of this {\bf while} loop, we first pick an  uncovered point $p$ and guess the index $j$ such that $p \in C_j^\star$ (line~\ref{l:guessj}). We now find the highest capacity point $x \in B(p, r_j)$ (while avoiding the centers already chosen).  We now call {\bf InsertBall}$(x,j, 3r_j, 1)$, i.e., we would like to insert the ball $B(x, 3r_j)$ to $\B_1$.

  In an iteration during lines~\ref{l:while3}--\ref{l:while4}, we add one ball to our solution for a remaining index $j \in I_4$. In particular, we pick the index $j \in I_4$ with the highest radius $r_j$ (line~\ref{l:max}) and guess a random subset $T_j$ of $I_1 \cup I_2$ (line~\ref{l:Tj}). The set $T_j$ is supposed to denote the indices $h \in I_1 \cup I_2$ such that the ball $B_h$ intersects the optimal ball $B_j^\star$. Assume that the algorithm guesses the set $T_j$ correctly (which again happens with probability at least $\frac{1}{2^k}$). Now two cases arise: (i) The radius $r_j$ is at least the radius of some ball $B_h, h \in T_j$, (ii) The radius $r_j$ is less than the latter quantity for all $h \in T_j$. Note that $\rad(B_h)$ is either $3r_h$ (when $h \in I_1$) or $5r_h$ (when $h \in I_2$). 

In the first case (line~\ref{l:first}), we identify a ball $B_j$ containing $B_j^\star$. Let $h$ be the index in $T_j$ such that $r_j$ is at least $\rad(B_h)$. 
We find the highest capacity point $x \in B(c_h, \rad(B_h) + r_j)$  (line~\ref{l:x}) except the already chosen centers -- since $B_h$ intersects $B_j^\star$, we note that $c_j^\star$ is a possible candidates for the point $x$, and hence $U_x \geq U_{c_j^\star}$
Finally, we add the ball $B_j = B(x, 5r_j)$ to $\B_2$ (line~\ref{l:addi2}). It is not difficult to show that $B_j$ contains $B_j^\star$. The second case, when $r_j < \rad(B_h)$ for all $h \in T_j$, is more challenging. We first identify a candidate set of points $P_j$ which contain the optimal ball $B_j^\star$ (line~\ref{l:Pj}) and then prune enough points from it to identify a ball $B_j$. 
For each $h \in T_j$ (line~\ref{l:Eh}), we define an {\em extended} ball $E_h$ of radius $9r_h$ around $c_h$ (clearly, $E_h$ contains the ball $B_h$, which is of radius either $3r_h$ or $5r_h$ around $c_h$). We shall show later $E_h$ contains $B_j^\star$.
Thus, $B_j^\star \subseteq \bigcap_{h \in T_j} E_h$. By our assumption on the correct guess of the set $T_j$, it follows that $B_i \cap B_j^\star$ is empty for $i \in (I_1 \cup I_2) \setminus T_j$
Further,  {\bf Invariant~1} implies that for any subset  $C_i$ in $\B_3$,  $i \in I_3$, $B_j^\star \cap C_i$ is empty. Thus, the set $P_j$ as defined in line~\ref{l:Pj} contains $B_j^\star$.  We now call the subroutine {\bf UpdateBalls} to add a suitable ball corresponding to $B_j^\star$ to $\B_3$ (or move a ball from $\B_3$ to $\B_2$).

\begin{algorithm}[H]
\caption{An iteration of the algorithm for $\capsum$ for general capacities.}
\label{algo:nonuni}
{\bf Input:} Set $P$ of $n$ points, parameter $k,$  radii $r_1, \ldots, r_k$, capacities $U_p$ for each point $p \in P$. \;
Initialize $I_1, I_2, I_3$ to emptyset and $I_4$ to $[k].$ \;
Initialize the sets $\B_1, \B_3, \B_3$  to emptyset. \;
\While{the balls in $\B_1$ do not cover $P$ \label{l:whileloop}}{ \label{l:while1}
Pick a point $p \in P \setminus \left( \bigcup_{j \in I_1} B_j \right)$. \label{l:p}\;
 \ (Output {\bf fail} if $I_4$ is empty). \label{l:fail}\;
Choose $j \in I_4$ uniformly at random. \label{l:guessj}\;
Let $x$ be the highest capacity point in $\left( P \setminus \{c_i : i \in I_1\} \right) \cap B(p,r_j).$ \label{l:choosex}\;
Call {\bf InsertBall}$(x,j, 3r_j, 1)$
\label{l:while2}
}
\While{$I_4$ is non-empty \label{l:whileempty}}{ \label{l:while3}
Let $j \in I_4$ with the highest $r_j$ value. \label{l:max} \;
Choose a random subset $T_j \subseteq I_1 \cup I_2$. \label{l:Tj} \;
% \If{there is an index $h \in T_j$ with $r_j$ is at least twice the radius of $B_h$}{
% Let $x$ be the highest capacity point (other than $\{c_i: i \notin I_4\}$) in $B(c_h, 5r_h+r_j)$. \;
% Add $B_j := B(c_j, 5r_j)$, where $c_j := x$, to $\B_2$. \;
% Remove $j$ from $I_4$ and add it to $I_2$. \;
% }
\If{there is an index $h \in T_j$ with $r_j \geq \rad(B_h)$ \label{l:if}}{ \label{l:first}
Let $x$ be the highest capacity point (other than $\{c_i: i \notin I_4\}$) in $B(c_h, \rad(B_h)+r_j)$. \label{l:x}\;
%Add $B_j := B(c_j, 5r_j)$, where $c_j := x$, to $\B_2$. 
Call {\bf InsertBall}($x, j, 5r_j, 2$).  \label{l:addi2}\;
}
\Else{ \label{l:second}
For an index $h \in T_j$, define $E_h := B(c_h, 9r_h)$. \label{l:Eh} \;
Define $P_j := \left( \bigcap_{h \in T_j} E_h \right) \setminus \left( \bigcup_{i \in (I_1 \cup I_2) \setminus T_j} B_i \cup \bigcup_{i \in I_3} C_i \right).$ \label{l:Pj}\;
Call {\bf UpdateBalls}($j, P_j$). \label{l:while4}
}
}
Output the balls $\{ B(c_j, 9 r_j) : j \in [k] \} \label{l:output}$
% Output the balls $\B_1 \cup \B_2 \cup \B_3$ and a feasible assignment of points to the centers of these balls. 
\end{algorithm}

We  now give details of the {\bf UpdateBalls} procedure. We also emphasize that this is step where the technical novelty of our contribution lies.  We assume that the sets $I_1, \ldots, I_4$, radii $r_1, \ldots, r_k$ and $\B_1, \ldots, \B_3$ can be accessed or modified by this procedure. The parameters given to this procedure are an index $j \in I_4$ and a set $P_j$ of points which should contain the optimal ball $B_j^\star$. The procedure can terminate in the following manner: (i) find an index $i \in I_3$ and a ball containing $B_i^\star$ -- in this case, we move $i$ from $I_3$ to $I_2$ (and change $\B_2, \B_3$ accordingly), or (ii) corresponding to the index $j$, add a ball $B_j$ and a subset $C_j \subseteq B_j$ to $\B_3$. Note that~{\bf Invariant~1} requires that $C_j$ should be disjoint from $B_i^\star$ for all $i \in I_3 \cup I_4$. Therefore the algorithm maintains a set $Z$ (initialized to $I_3 \cup I_4$) of indices that could potentially intersect the intended ball $B_j$ (line~\ref{l:Z}). The set $Z$ gets pruned as we run through the iterations of the {\bf while} loop (lines~\ref{l:w5}--\ref{l:updateZ}). We now describe each iteration of this {\bf while} loop. In line~\ref{l:x1}, we identify a point $x$ of a high capacity such that $B(x, r_j)$ has a large intersection with $P_j$; and we let $C_j$ denote this intersection. Now, we guess the subset $L$ of $Z$ consisting of the indices $i \in Z$ such that $B_i^\star \cap C_j$ is  not empty (line~\ref{l:L}). Assume that this guess is correct.  If $L$ is empty (line~\ref{l:Le}), we can add $B_j$ to $\B_3$. The procedure terminates in this case. %\jatin{The $L$ empty case is the main contribution of the previous paper, should that be mentioned? Also, should we give a justification here as to why it is okay to add $B_j$ to $\B_3$ in this case?}.

Hence assume that the set $L$ is not empty. There are two sub-cases now. In the first sub-case (line~\ref{l:Ls1}), there is an index $t \in L$ such that $r_t > r_j$. We shall show that $t \in I_3$ (recall that $L$ is a subset of $I_3 \cup I_4$). Since $B_j$ and $B_t^\star$ intersect, the ball $B(x, r_t + r_j)$ contains $c_t^\star$. Thus, we pick a high capacity point $y$ in this ball (line~\ref{l:Y}) and add the ball $B(y,5r_t)$ to $\B_2$ (line~\ref{l:insertB2}) -- the fact that $r_t > r_j$ ensures that this ball contains $B_t^\star$. Thus, the index $t$ moves from $I_3$ to $I_2$. The procedure terminates in this sub-case. 

Finally, consider the sub-case when the set $L$ is non-empty and every index $t \in L$ satisfies $r_t \leq r_j$. In line~\ref{l:Ls2}, we guess whether the ball $B_j' := B(x, 3r_j)$ intersects $B_j^\star$.  Assume that this guess is correct. If $B_j' \cap B_j^\star$ is non-empty (line~\ref{l:add2}), then it is easy to see that the ball $B(x, 5r_j)$ contains $B_j^\star$.  Thus, we add this ball to $\B_2$ and terminate. Otherwise, we can remove $B_j$ from $P_j$ (recall that $P_j$ contains a subset of points which are guaranteed to contain $B_j^\star$). Now that $P_j$ has shrunk (and hence, the possible set of points in the desired ball $B_j$ also reduces), we can update the set $Z$. Recall that $Z$ stores indices $i \in I_3 \cup I_4$ such that $B_i^\star$ potentially intersects $B_j^\star$. Since $r_t \leq r_j$ and $B_t^\star$ intersects $B(x, r_j)$ for all $t \in L$, it follows that $B_t^\star \subseteq B_j'$. Thus, $B_t^\star$ does not intersect the updated set $P_j$. Thus, we can remove the $L$ from $Z$ (line~\ref{l:updateZ}). This is the only sub-case where we perform another iteration of the {\bf while} loop. Since we reduce the size of $Z$ by at least 1, there can be at most $k$ iterations of the {\bf while} loop. 
This completes the description of our algorithm. 

\begin{algorithm}[ht]
\caption{Procedure {\bf UpdateBalls}($j,P_j$)  adds a ball to either $\B_2$ or $\B_3$.}
\label{algo:update}
{\bf Input:}  An index $j \in I_4$, a subset $P_j$ of points. \;
Initialize a variable {\tt update} to false \;
Initialize an index set $Z := I_3 \cup I_4$. \label{l:Z} \;
\While{{\tt update} is false \label{l:w5}}{
 Let $x$ be the point in $P \setminus \{c_i: i \notin I_4\}$ which maximizes $\min(U_x, |B(x, r_j) \cap P_j|)$. Define $C_j := B(x, r_j) \cap P_j$. \label{l:x1} \;
 Pick a subset $L \subseteq Z$ uniformly at random. \label{l:L} \;
 \If{$L$ is empty \label{l:Le}}{
   % Remove $j$ from $I_4$ and add it to $I_3$. \label{l:I4}\;
    %Define $B_j := B(c_j, r_j), c_j := x,$ and add it to $\B_3$. Define $C_j$ as above. \label{l:done1} \;
    Call {\bf InsertBall}($x, j, r_j, 3$). \label{l:done1} \;
    If $B(x, r_j)$ was added to $\B_3$, set $C_j$ as defined above in line~\ref{l:x1}. \;
    Set {\tt update} to true. 
 }
 \Else{ 
   \If{there is an index $t \in L$ with $r_t > r_j$ \label{l:Ls1}}{
        
        %Remove $t$ from $I_3$ and the ball $B_t$ from $\B_3$. \;
        Let $y$ be the point in $B(x, r_j + r_t) \setminus \{c_i: i \notin I_4\}$ with the maximum capacity $U_y$. \label{l:Y} \;
        Call {\bf InsertBall}($y, t, 5r_t, 2$); \label{l:insertB2} \;
        If the index $t$ was added to $I_2$, remove it from $I_3$ and remove the corresponding ball from $\B_3$. \label{l:shift} \;
        %Define $B_t := B(c_t, 5r_t)$, where $c_t := y$ and add $B_t$ to $\B_2$. Add $t$ to $I_2$. \label{l:done2}\;
        Set {\tt update} to true. 
   }
   \Else{ \label{l:Ls2}
        Perform exactly one of the following steps with equal probability: \;
        (i) %Add $j$ to $I_2$, and add $B_j := (c_j, 5r_j)$, where $c_j := x$, to $\B_2$. 
        Call {\bf InsertBall}($x, j, 5r_j, 2$). Set {\tt update} to true. \label{l:add2} \;
        (ii) Update $Z = Z \setminus L$ and $P_j = P_j \setminus B(x, 3r_j).$ \label{l:updateZ}
   } 
 }
}
\end{algorithm}

\subsection{Analysis}
We now analyse the algorithm. We begin with the following key observation about the procedure {\bf InsertBall}:

\begin{claim}
    \label{cl:insertball}
    Suppose the algorithm calls {\bf InsertBall} with parameters $(p, j, r, u)$, then the point $p \neq c_i$ for any $i \in I_1 \cup I_2 \cup I_3$. 
\end{claim}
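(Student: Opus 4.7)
My plan is to prove the claim by a case analysis over the five syntactic sites where \textbf{InsertBall} is invoked: line~\ref{l:while2} of Algorithm~\ref{algo:nonuni}; line~\ref{l:addi2} of Algorithm~\ref{algo:nonuni}; and lines~\ref{l:done1}, \ref{l:insertB2}, and \ref{l:add2} of Algorithm~\ref{algo:update}. In each of the last four cases, the point passed as $p$ is explicitly selected from $P \setminus \{c_i : i \notin I_4\}$ (see the selection rules in lines~\ref{l:x}, \ref{l:x1}, and \ref{l:Y}, noting that line~\ref{l:add2} reuses the $x$ defined in line~\ref{l:x1}). Since the algorithm maintains $I_1, I_2, I_3$ as pairwise disjoint with $I_4 = [k] \setminus (I_1 \cup I_2 \cup I_3)$, this directly gives $p \notin \{c_i : i \in I_1 \cup I_2 \cup I_3\}$ in these four cases.

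The only delicate case is line~\ref{l:while2}, where $x$ is picked from $P \setminus \{c_i : i \in I_1\}$ (line~\ref{l:choosex}) and therefore only excludes centers associated with indices in $I_1$. To close this gap, I will establish the auxiliary invariant that whenever control reaches line~\ref{l:while2} of Algorithm~\ref{algo:nonuni}, one has $I_2 = I_3 = \emptyset$. Granted this invariant, the line~\ref{l:while2} case collapses to the previous four: $\{c_i : i \in I_1 \cup I_2 \cup I_3\} = \{c_i : i \in I_1\}$, and so the selection rule already excludes every forbidden center.

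To prove the invariant I will trace the control flow of the three algorithms. Initially $I_2$ and $I_3$ are set to $\emptyset$ on entry to Algorithm~\ref{algo:nonuni}. There are exactly two ways in which control enters the first while loop (line~\ref{l:while1}): the initial entry from the top of the algorithm, and the jump at the end of option~(ii) of \textbf{InsertBall}, which explicitly resets $I_2, I_3 \leftarrow \emptyset$ immediately before transferring control back to line~\ref{l:while1}. In both cases, $I_2$ and $I_3$ are empty when control reaches the loop header. Within the first while loop itself, the only operation that can modify any of the index sets is the call to \textbf{InsertBall} in line~\ref{l:while2} with parameter $u=1$: option~(i) of that call adds the chosen index $j$ to $I_1$ only, while option~(ii) immediately wipes $I_2, I_3$ and restarts. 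Hence $I_2 = I_3 = \emptyset$ is preserved across all iterations of the first while loop, and the invariant holds.

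The main obstacle here is not any arithmetic difficulty but the bookkeeping required to trace the control flow across the mutually interacting executions of Algorithms~\ref{algo:nonuni}, \ref{algo:insert}, and \ref{algo:update}; in particular, the jump in option~(ii) of \textbf{InsertBall} back to line~\ref{l:while1} has to be handled carefully so that no transient non-empty state of $I_2$ or $I_3$ can slip past the invariant. Once this is accounted for, the claim follows immediately from the explicit exclusion sets used in the five selection rules.
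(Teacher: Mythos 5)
Your proof is correct and follows essentially the same route as the paper: a case analysis over the call sites of \textbf{InsertBall}, using the explicit exclusion of $\{c_i : i \notin I_4\}$ at four of them and the emptiness of $I_2, I_3$ at line~\ref{l:while2}. The only difference is that the paper merely asserts ``at this point $I_2, I_3$ are empty'' for the first case, whereas you justify it with the control-flow invariant (including the reset in option~(ii) of \textbf{InsertBall}), which is a welcome but minor elaboration.
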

\begin{proof}
    The claim follows from the fact that whenever we call 
    {\bf InsertBall}$(p,j,r,u)$, we make sure $p$ has not been selected as the center of any ball in $\B_1 \cup \B_2 \cup \B_3$. This can be easily seen by considering each call to this procedure: 
    \begin{itemize}
        \item Line~\ref{l:while2} of~\Cref{algo:nonuni}: At this point $I_2, I_3$ are empty and we make sure that  the point $x \neq c_i$ for any $i \in I_1$ (line~\ref{l:choosex}).
        \item Line~\ref{l:addi2} of~\Cref{algo:nonuni}: we ensure that $x \neq c_i, i \in [k] \setminus I_4$ in line~\ref{l:x}. 
        \item Line~\ref{l:done1} or line~\ref{l:updateZ} of~\Cref{algo:update}: in line~\ref{l:x1}, we make sure that $x \neq c_i, i \in [k] \setminus I_4$. 
        \item Line~\ref{l:insertB2} of~\Cref{algo:update}: we ensure in line~\ref{l:Y} that $y \neq c_i, i \in [k] \setminus I_4$. 
    \end{itemize}
\end{proof}

We now bound the number of iterations in the procedure~{\bf UpdateBalls}. %Recall, we had earlier assumed with justification that {\bf InsertBall} always chooses the option (i).

\begin{claim}
    \label{cl:iter}
    The {\bf while} loop in a particular invocation of~{\bf UpdateBalls} has at most $k$ iterations. 
    The procedure {\bf InsertBall} is called at most $O(k^2)$ times, and the number of times it chooses the option (ii) is at max $k$.
\end{claim}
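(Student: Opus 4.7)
The plan is to prove the three quantitative bounds in sequence, starting with the easiest. For the \textbf{while} loop inside a single invocation of \textbf{UpdateBalls}, I observe that the loop exits precisely when the flag \texttt{update} becomes true, and \texttt{update} is set to true in every branch of the body except Ls2(ii) (line~\ref{l:updateZ}). Hence that branch is the only way a non-terminating iteration can occur, and it performs $Z \gets Z \setminus L$ with $L\neq\emptyset$; so $|Z|$ strictly decreases by at least one per non-terminating iteration. Since $Z$ is initialized to $I_3 \cup I_4 \subseteq [k]$, after at most $k$ iterations the loop must terminate. The bound on option (ii) of \textbf{InsertBall} is also immediate: each such choice picks an $h \in [k] \setminus I^\star$ and then adds $h$ to $I^\star$. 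Since $I^\star \subseteq [k]$ always, option (ii) is executed at most $k$ times during the entire run.

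For the total count of \textbf{InsertBall} calls, I would slice the execution into \emph{epochs} delimited by the option-(ii) events. By the previous paragraph there are at most $k+1$ epochs. The central structural observation is that \emph{each invocation of \textbf{UpdateBalls} makes at most one call to \textbf{InsertBall}}: in the three terminating branches Le, Ls1, and Ls2(i) we call \textbf{InsertBall} exactly once and then set \texttt{update} to true, whereas in the only non-terminating branch Ls2(ii) no call is made. Therefore each iteration of the two main \textbf{while} loops of~\Cref{algo:nonuni} triggers at most one \textbf{InsertBall} call (either directly, or through the unique \textbf{UpdateBalls} invocation inside it). It now suffices to bound the number of iterations of those two loops within one epoch by $O(k)$.

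Within an epoch the set $I_1 = I^\star$ is frozen. The first \textbf{while} loop (line~\ref{l:while1}) grows $|I_1|$ by one at every non-reset iteration, and $|I_1|\le k$, so at most $k+1$ iterations occur per epoch. For the second \textbf{while} loop (line~\ref{l:while3}) I would use the potential $\Phi := 2|I_4| + |I_3|$, which is at most $2k$ at the start of the epoch. Every non-reset iteration falls into exactly one of the four terminating cases---the ``if'' branch of line~\ref{l:if}, or Le / Ls1 / Ls2(i) inside \textbf{UpdateBalls}---and each of them either moves one index from $I_4$ into $I_2 \cup I_3$ (dropping $\Phi$ by $1$ or $2$) or moves one index from $I_3$ into $I_2$ (dropping $\Phi$ by $1$). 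Hence $\Phi$ decreases by at least one per non-reset iteration, bounding them by $2k$. Summing the two loops gives $O(k)$ \textbf{InsertBall} calls per epoch, and multiplying by the $O(k)$ epochs yields the claimed $O(k^2)$ bound.

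The main subtlety, in my view, is spotting that \textbf{UpdateBalls} makes at most one \textbf{InsertBall} call per invocation: without this observation, treating each internal iteration of the Ls2(ii) loop as potentially triggering \textbf{InsertBall} gives a naive $O(k^3)$ count that does not match the statement. A second small subtlety is the choice of $\Phi$: because the Ls1 branch moves an index from $I_3$ to $I_2$ rather than from $I_4$, a purely linear function of $|I_4|$ alone does not monotonically decrease, which is precisely why the coefficient $2$ in front of $|I_4|$ is needed.
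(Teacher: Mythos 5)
Your proof is correct and follows essentially the same route as the paper: the $|Z|$-decrease argument for the inner \textbf{while} loop, the growth of $I^\star$ for the option-(ii) bound, and the observation that each option-(i) call to \textbf{InsertBall} either removes an index from $I_4$ or moves one from $I_3$ to $I_2$, giving at most $2k$ such calls between consecutive resets. Your potential $\Phi = 2|I_4| + |I_3|$ is just a formalization of that last counting step, so the two arguments coincide.
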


\begin{proof}

    First consider the {\bf while} loop in the {\bf UpdateBalls} procedure. If a particular iteration of this loop does not end the procedure, then it must execute line~\ref{l:updateZ}. Since the set $L \subseteq Z$ is non-empty, the set $Z$ reduces in size during this iteration. Since the initial size of $Z$ was at most $k$, there can be at most $k$ iterations of this {\bf while loop}. 
    
    The number of times {\bf InsertBall} chooses the  option (ii) is at max $k$, because every time the second option is chosen, the size of $I^\star$ increases by $1$, and we never remove anything from $I^\star$. During any consecutive stretch of option (i) choices, each call to {\bf InsertBall}, either removes an element from $I_4$ or moves an element from $I_3$ to $I_2$ (line~\ref{l:shift}). Hence, the number of consecutive calls to {\bf InsertBall} that make the first choice can be at max $2k$. 
\end{proof}

% \begin{clm}
%     \label{cl:iter}
    
%     The procedure {\bf InsertBall} is called at most $2k$ times. Hence, the {\bf while} loop starting at line~\ref{l:whileempty} in~\Cref{algo:nonuni} has at most $2k$ iterations. 
% \end{clm}
% \begin{proof}

%     We now bound the number of calls to {\bf InsertBalls}. \jatin{Removed: The number of invocations of case (ii) (i.e., line~\ref{l:step2}) of this procedure is at most $k$ -- indeed, once an index gets added to $I_1$, we do not reinsert a ball corresponding to this index. Now we bound the number of invocations of case (i) of this procedure.} Each call to {\bf InsertBalls} has the parameter $r = 1$ or $2$ or $3$, i.e., we try to insert a ball to $\B_1$ or $\B_2$ or $\B_3$. An index once inserted in $I_3$ can be later inserted in $I_2$. But an index once inserted in $I_1$ or $I_2$ does not get reinserted (unless we invoke case (ii), which we have already accounted for). Thus, there can be at most $2k$ invocations of case (i) of this procedure. This shows that there are at most $2k$ calls to  {\bf InsertBalls}. The final statement about the number of {\bf  while} loop iteratons at line~\ref{l:whileempty} follows from the fact that each iteration of this loop ends with a call to {\bf InsertBall} procedure. 
%     
% \end{proof}

We first state the desirable events during an iteration of~\Cref{algo:nonuni}:
\begin{itemize}
    \item $\cE_1$: For each point $p$ considered in line~\ref{l:p}, the index $j$ chosen in line~\ref{l:guessj} satisfies the condition that $p \in C^\star_j$. 
    \item $\cE_2$: For each index $j$ considered in line~\ref{l:max}, the subset $T_j$ chosen in line~\ref{l:Tj} satisfies the property that $T_j = \{h \in I_1 \cup I_2: B_h \cap B_j^\star \neq \emptyset \}.$
    \item $\cE_3:$ For each iteration of the {\bf while} loop in~\Cref{algo:update}, the subset $L$ picked in line~\ref{l:L} in~\Cref{algo:update} is equal to the set $\{i \in Z: B_i^\star \cap C_j \neq \emptyset\}$. 
    \item $\cE_4$: In each iteration of the {\bf while} loop in~\Cref{algo:update}, if we reach line~\ref{l:Ls2}, then the choice~(i) is taken iff the ball $B(x,3r_j) \cap B_j^\star$ is non-empty.  
    \item $\cE_5$: Whenever {\bf InsertBall}($p,j,r,u$) is called, it chooses option~(ii) in line~\ref{l:step2} iff $p$ is same as $c_u^\star$ for some $u \in I_2 \cup I_3 \cup I_4$. Further, the index $h$ selected in this step is equal to $u$. 
\end{itemize}

\begin{claim}
    \label{cl:event1}
    If the event $\cE_1 \cap \cE_5$ happens, then for each $j \in I_1$, the ball $B_j \in \B_1$ contains the optimal ball $B_j^\star$. Hence, if $\cE_1$ happens, the algorithm does not output {\bf fail} in line~\ref{l:fail}. Further, the probability that all the events $\cE_1, \cE_2, \cE_3, \cE_4, \cE_5$ happen is at least $1/2^{O(k^4)}$.
\end{claim}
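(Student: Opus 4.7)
The plan is to establish the three assertions in sequence. For the containment claim, I would verify both ways a ball can enter $\B_1$. Balls are added to $\B_1$ only through {\bf InsertBall} calls with $u = 1$, which in Algorithm~\ref{algo:nonuni} originate at line~\ref{l:while2}. In option~(i) of {\bf InsertBall}, the ball $B(x, 3r_j)$ is added, where $x \in B(p, r_j)$ and, by $\cE_1$, $p \in C_j^\star$. Since $r_j \geq r_j^\star$, the triangle inequality gives $d(x, c_j^\star) \leq d(x, p) + d(p, c_j^\star) \leq r_j + r_j^\star \leq 2r_j$, and hence any $q \in B_j^\star$ satisfies $d(x, q) \leq 2r_j + r_j^\star \leq 3r_j$, so $B_j^\star \subseteq B(x, 3r_j)$. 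In option~(ii), $\cE_5$ guarantees $p = c_h^\star$, and the added ball $B(c_h^\star, r_h)$ trivially contains $B_h^\star$ since $r_h \geq r_h^\star$. Either way, the ball entering $\B_1$ with index $j$ contains $B_j^\star$.

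The no-fail claim then follows immediately: a fail at line~\ref{l:fail} is triggered precisely when $I_4 = \emptyset$ while the balls of $\B_1$ do not yet cover $P$. Throughout the first {\bf while} loop, $I_2$ and $I_3$ remain empty (they are populated only in the second {\bf while} loop, and option~(ii) of {\bf InsertBall} only resets them to empty), so $I_4 = \emptyset$ forces $I_1 = [k]$. By the containment claim, $\bigcup_{j \in I_1} B_j \supseteq \bigcup_{j=1}^k B_j^\star \supseteq P$, contradicting the {\bf while} loop condition; hence the fail branch is unreachable.

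For the probability bound, the plan is to bound each event's success probability conditionally and multiply via the chain rule. By Claim~\ref{cl:iter}, {\bf InsertBall} is invoked $O(k^2)$ times overall and each invocation of {\bf UpdateBalls} runs at most $k$ iterations, yielding $O(k^2)$ iterations of each of the two outer {\bf while} loops (amortized across restarts) and $O(k^3)$ iterations of the {\bf UpdateBalls} inner loop in total. Each {\bf InsertBall} call succeeds for $\cE_5$ with probability at least $1/(2k)$ (one of the two options matches the truth, and in option~(ii) the correct $h$ among $\leq k$ candidates is picked with probability $\geq 1/k$), giving $\Pr[\cE_5] \geq 2^{-O(k^2 \log k)}$. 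Each iteration of the first {\bf while} loop contributes a guess of $j \in I_4$ correct with probability $\geq 1/k$, so $\Pr[\cE_1 \mid \cE_5] \geq k^{-O(k^2)}$. Each iteration of the second loop requires guessing the correct subset $T_j \subseteq I_1 \cup I_2$ of size $\leq k$, giving $\Pr[\cE_2 \mid \cE_1 \cap \cE_5] \geq 2^{-O(k^3)}$; each {\bf UpdateBalls} iteration requires guessing the correct subset $L \subseteq Z$, giving $\Pr[\cE_3 \mid \cdots] \geq 2^{-O(k^4)}$; and each coin flip at line~\ref{l:Ls2} succeeds with probability $1/2$, giving $\Pr[\cE_4 \mid \cdots] \geq 2^{-O(k^3)}$. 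Multiplying yields $\Pr[\cE_1 \cap \cE_2 \cap \cE_3 \cap \cE_4 \cap \cE_5] \geq 2^{-O(k^4)}$, dominated by the $\cE_3$ factor. The main obstacle is the careful bookkeeping of how many iterations occur across restarts of the first {\bf while} loop triggered by option~(ii) of {\bf InsertBall}, but Claim~\ref{cl:iter} has already absorbed this into the $O(k^2)$ bound on {\bf InsertBall} calls, so the argument reduces to a straightforward product of per-step success probabilities.
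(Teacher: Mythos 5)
Your proposal is correct and follows essentially the same route as the paper's proof: the triangle-inequality containment argument for both branches of \textbf{InsertBall}, the observation that $I_4=\emptyset$ forces $I_1=[k]$ so the balls cover $P$, and the chain-rule product of per-guess success probabilities using the iteration counts from Claim~\ref{cl:iter}. Your accounting for $\cE_5$ charges the $1/k$ factor to every call rather than only to the at most $k$ option-(ii) calls, yielding a slightly weaker $2^{-O(k^2\log k)}$ in place of the paper's $2^{-O(k^2)}$, but this is still dominated by the $\cE_3$ term and does not affect the final $2^{-O(k^4)}$ bound.
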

\begin{proof}
    The proof is similar to that of~\Cref{lem:event}.
    Suppose $\cE_1 \cap \cE_5$ happens. For each index $i \in I^\star$, $B(c_i, r_i)$ contains $B_i^\star = B(c_i^\star, r_i^\star)$ because $c_i = c_i^\star$ for such indices and $r_i \geq r_i^\star$. Now if {\bf InsertBall} executes option (ii) in line~\ref{l:step2}, we simply replace $I_1$ by $I^\star$ and start from scratch. The only other case when when we add a ball to $\B_1$ is during the {\bf while} loop in line~\ref{l:whileloop} in~\Cref{algo:nonuni}. Consider such an iteration and assume that {\bf InsertBall} executes option (i). Let $x$ be the point chosen in line~\ref{l:choosex}. 
    Then,  $d(x, c_j^\star) \leq d(x, p) + d(p, c_j^\star) = 2r_j$. Therefore, $B_j := B(x, 3r_j)$ contains $B_j^\star$. 
    Therefore, there will be at most $k$ iterations of the {\bf while} loop in line~\ref{l:whileloop}. Thus, the algorithm won't output {\bf fail} in line~\ref{l:fail}.

 The probability that {\bf InsertBall} always chooses the option (i) or (ii) correctly is at least $ \frac{1}{2^{O(k^2)}}$, as the number of calls is $O(k^2)$ (~\Cref{cl:iter}) and the probability of making the correct choice in a single call is $\frac{1}{2}$. Given this happens, the probability that it correctly guesses the index $h$ is at least $1/k$ for any call where it chooses option (ii). Hence, $\Pr[\cE_5] \geq \frac{1}{k^k} \frac{1}{2^{O(k^2)}} = \frac{1}{2^{O(k^2)}}$.
    Now, the probability that the chosen index $j$ in line~\ref{l:guessj} is correct (i.e., $p \in C_j^\star$)  in each iteration of this {\bf while} loop is at least $1/k$. Hence, $\Pr[\cE_1 | \cE_5] \geq 1/k^{O(k^2)}$, as each such iteration leads to a call to {\bf InsertBall}. Conditioned on $\cE_1$, the probability that the choice of $T_j$ in line~\ref{l:Tj} is correct is at least $1/2^k$ (since there are at most $2^k$ possibilities for $T_j$). Since every iteration of the {\bf while} loop in line~\ref{l:whileempty} leads to a call to {\bf InsertBall}, we see that $\Pr[\cE_2|\cE_1,\cE_5] \geq \frac{1}{2^{O(k^3)}}.$

    Given events $\cE_1, \cE_2$, whenever we reach line~\ref{l:L}, we guess $L$ correctly with probability at least $\frac{1}{2^k}$. There are at most $O(k^2)$ calls to {\bf UpdateBalls}, and there are at most $k$ iterations of the {\bf while} loop in~\Cref{algo:update}. Hence, $\Pr[\cE_3 | \cE_1, \cE_2, \cE_5] \geq \frac{1}{2^{O(k^4)}}.$ Also, whenever reach line~\ref{l:Ls2}, we guess the choice (i) or (ii) correctly with probability $1/2$. Hence, $\Pr[\cE_4 | \cE_1, \cE_2, \cE_3, \cE_5] \geq \frac{1}{2^{O(k^3)}}$
\end{proof}

Now we write down the invariant conditions satisfied during our algorithm. The condition \inv{1} was mentioned earlier, but we restate it here:
\begin{itemize}
    \item \inv{1}: Let $j, j' \in I_3$ be two distinct indices. Then $C_j \cap C_{j'} = \emptyset$. Further, for any $j \in I_3$ and $i \in I_3 \cup I_4, $ $B_i^\star \cap C_j = \emptyset.$
    \item \inv{2}:   For any index $j \in I_1$, $B_j^\star \subseteq B_j$, where $B_j$ is the ball corresponding to index $j$ in $\B_1$.  Further, the radius of $B_j$ is at most $3r_j$. %\jatin{Should we mention $U_{c_j} \geq |C_j^\star|$?}
    %\aknote{No...we are proving this separately as a Lemma. So no need for doing this here.}
    \item \inv{3}: For any index $j \in I_2$, $B_j^\star \subseteq B_j$, where $B_j$ is the ball corresponding to index $j$ in $\B_2$.  Further, the radius of $B_j$ is at most $5r_j$. %\jatin{Should we mention $U_{c_j} \geq |C_j^\star| $?}
    \item \inv{4}: For any index $j \in I_4$, and any index $i \in I_2$, $r_j \leq r_i$ . 
\end{itemize}

We now show that these invariant conditions are maintained by the algorithm. The following result follows from the proof of~\Cref{cl:event1}. 
\begin{claim}
    \label{cl:inv2}
    Suppose event $\cE_1$ happens. Then \inv{2} is maintained by the algorithm. 
\end{claim}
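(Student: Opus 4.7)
The plan is to show that \inv{2} is established exactly at the moments when an index is inserted into $I_1$, and is never disturbed afterwards. First I would catalogue the code paths that modify $I_1$. By inspection, $I_1$ grows only through calls to {\bf InsertBall}$(\cdot,\cdot,\cdot,1)$: namely (a) the call in line~\ref{l:while2} of~\Cref{algo:nonuni} with arguments $(x,j,3r_j,1)$; or (b) any {\bf InsertBall} call whose option~(ii) branch is taken, which in lines~\ref{l:remove}--end of~\Cref{algo:insert} discards the current $I_1,\B_1$ and replaces them by $I^\star$ and $\{B(c_i, r_i): i \in I^\star\}$. No other line writes to $I_1$ or $\B_1$, and once a ball is placed in $\B_1$ via a ``clean'' pass it is never modified; resets only shrink $I_1$ to $I^\star$.

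For path (a), assuming $\cE_1$, the point $p$ chosen in line~\ref{l:p} lies in $C_j^\star$, so $d(p, c_j^\star) \leq r_j^\star \leq r_j$. The point $x$ chosen in line~\ref{l:choosex} satisfies $x \in B(p, r_j)$, so by the triangle inequality
\[
d(x, c_j^\star) \;\leq\; d(x,p) + d(p, c_j^\star) \;\leq\; r_j + r_j \;=\; 2 r_j.
\]
Consequently $B_j^\star = B(c_j^\star, r_j^\star) \subseteq B(x, 2r_j + r_j^\star) \subseteq B(x, 3r_j) = B_j$, and the radius of $B_j$ is exactly $3r_j$, as required by \inv{2}. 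This is the same triangle-inequality step already carried out in the proof of~\Cref{cl:event1}.

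For path (b), whenever {\bf InsertBall}$(p,j,r,u)$ commits to option~(ii), the analysis in~\Cref{cl:event1} (which we invoke here as part of the hypothesis that the relevant random choices are correct) tells us that $p = c_h^\star$ for the index $h$ added to $I^\star$, and we set $c_h = p = c_h^\star$. The ball we install into $\B_1$ for this index has radius $r_h \geq r_h^\star$, so
\[
B_h^\star \;=\; B(c_h^\star, r_h^\star) \;\subseteq\; B(c_h, r_h),
\]
and its radius $r_h$ is trivially at most $3r_h$. Since every index in $I^\star$ was created by such an option-(ii) event, the invariant holds for every index in the reset $I_1 = I^\star$. Combining the two paths, \inv{2} is maintained at every step.

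There is no real obstacle here: the geometric content is just one triangle inequality plus the tautology $c_h = c_h^\star \Rightarrow B_h^\star \subseteq B(c_h, r_h)$. The only bookkeeping subtlety is to ensure we have enumerated every write to $I_1$, which the claim above about lines~\ref{l:while2} and~\ref{l:remove} confirms.
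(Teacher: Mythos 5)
Your proof is correct and follows essentially the same route as the paper, which derives \inv{2} from the two cases already handled in the proof of~\Cref{cl:event1}: the triangle-inequality bound $d(x,c_j^\star)\le 2r_j$ for option~(i) insertions into $\B_1$, and the fact that after an option~(ii) reset every index $i\in I^\star$ has $c_i=c_i^\star$ and radius $r_i\ge r_i^\star$. The only (shared) caveat is that the option-(ii) case implicitly relies on the standing assumption (event $\cE_5$) that {\bf InsertBall}'s guesses are correct, which you correctly flag.
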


\begin{lemma}
\label{lem:invr}
    Suppose events~$\cE_1, \ldots, \cE_5$ occur. Then the 
    algorithm maintains all the four invariant conditions mentioned above. 
\end{lemma}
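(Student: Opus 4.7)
The plan is to prove the lemma by induction on the sequence of operations that modify any of $I_1, I_2, I_3$, the ball collections $\B_1, \B_2, \B_3$, or the subsets $C_j$; all four invariants hold vacuously at initialization, and I check preservation step-by-step under $\cE_1, \ldots, \cE_5$. The operations to inspect are the calls to {\bf InsertBall} in lines~\ref{l:while2},~\ref{l:addi2},~\ref{l:done1},~\ref{l:insertB2},~\ref{l:add2}; whenever {\bf InsertBall} executes option~(ii), $I_2, I_3, \B_2, \B_3$ are wiped clean, so Invariants~1,~3,~4 are trivially re-established and one only needs to treat the option~(i) branch. Invariant~2 has essentially been proved in~\Cref{cl:inv2}.

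For Invariant~4, every addition to $I_2$ uses an index whose $r$-value is at least the maximum over $I_4$. At lines~\ref{l:addi2} and~\ref{l:add2} the new $I_2$ index is the $j$ selected in line~\ref{l:max}, which by construction is the maximum in $I_4$; after removing $j$, every residual $j' \in I_4$ has $r_{j'} \le r_j$. At line~\ref{l:insertB2} the new $I_2$ index is a $t \in L$ with $r_t > r_j$; since $r_j$ was maximal in $I_4$, necessarily $t \in I_3$, and $r_{j'} \le r_j < r_t$ for every $j' \in I_4$.

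For Invariant~3, one triangle-inequality computation at each of the three extensions of $\B_2$ suffices. (i) In line~\ref{l:addi2}, $\cE_2$ gives $h \in T_j$ with $B_h \cap B_j^\star \neq \emptyset$, hence $c_j^\star \in B(c_h, \rad(B_h) + r_j^\star) \subseteq B(c_h, \rad(B_h) + r_j)$, so $d(x, c_j^\star) \le 2\rad(B_h) + r_j + r_j^\star \le 4r_j$ using $\rad(B_h) \le r_j$, giving $B_j^\star \subseteq B(x, 5r_j)$. (ii) In line~\ref{l:insertB2}, $\cE_3$ gives $B_t^\star \cap B(x, r_j) \neq \emptyset$ so $c_t^\star \in B(x, r_j + r_t)$ and $d(y, c_t^\star) \le 2(r_j + r_t) \le 4 r_t$ since $r_j < r_t$, whence $B_t^\star \subseteq B(y, 5r_t)$. (iii) In line~\ref{l:add2}, $\cE_4$ gives $B(x, 3r_j) \cap B_j^\star \neq \emptyset$, so $d(x, c_j^\star) \le 3r_j + r_j^\star \le 4r_j$ and $B_j^\star \subseteq B(x, 5r_j)$.

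The main obstacle is Invariant~1, which I plan to establish via the stronger loop-invariant that in every invocation of {\bf UpdateBalls}, the current $P_j$ contains $B_j^\star$ and is disjoint from $B_t^\star$ for every $t \in (I_3 \cup I_4) \setminus Z$. This holds on entry by combining the inductive hypothesis with $\cE_2$ (which pins down the balls in $\B_1 \cup \B_2$ that can meet $B_j^\star$) and Invariant~1 for $I_3$ (which says $B_j^\star$ avoids each $C_i, i \in I_3$). The only step that modifies $P_j$ is line~\ref{l:updateZ}: by $\cE_4$ we are in the branch where $B(x, 3r_j) \cap B_j^\star = \emptyset$, and for every $t \in L$, $\cE_3$ gives $B_t^\star \cap B(x, r_j) \neq \emptyset$, which with $r_t \le r_j$ yields $B_t^\star \subseteq B(x, r_j + 2r_t) \subseteq B(x, 3r_j)$; hence excising $B(x, 3r_j)$ from $P_j$ simultaneously purges every $B_t^\star$ with $t \in L$, justifying the update $Z \leftarrow Z \setminus L$ while preserving the loop-invariant. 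When the loop exits at line~\ref{l:done1} with $L = \emptyset$, $\cE_3$ makes $C_j = B(x, r_j) \cap P_j$ disjoint from $B_i^\star$ for every $i \in Z = I_3 \cup I_4$, and $C_j \subseteq P_j$ is disjoint from every existing $C_{j'}, j' \in I_3$ by the definition of $P_j$ in line~\ref{l:Pj}; together with the inductive disjointness for indices already in $I_3$, this completes the preservation of Invariant~1.
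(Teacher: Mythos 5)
Your overall route is the same as the paper's: induction over the modifying operations, \Cref{cl:inv2} for \inv{2}, the observation that option~(ii) of {\bf InsertBall} resets $I_2,I_3,\B_2,\B_3$ and trivializes \inv{1}, \inv{3}, \inv{4}, the same three triangle-inequality bounds for \inv{3}, and for \inv{1} the loop invariant ``$B_j^\star \subseteq P_j$ and $B_t^\star \cap P_j = \emptyset$ for $t \in (I_3\cup I_4)\setminus Z$,'' which is precisely the paper's \Cref{cl:Z}, including the argument that $B_t^\star \subseteq B(x,3r_j)$ for $t \in L$ justifies the simultaneous pruning of $P_j$ and $Z$.

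One step is missing. You assert that the loop invariant ``holds on entry by combining the inductive hypothesis with $\cE_2$ and \inv{1},'' but $\cE_2$ and \inv{1} only justify that the \emph{subtracted} sets in line~\ref{l:Pj} ($B_i$ for $i \in (I_1\cup I_2)\setminus T_j$ and $C_i$ for $i \in I_3$) are disjoint from $B_j^\star$. They do not give $B_j^\star \subseteq \bigcap_{h\in T_j} E_h$, i.e.\ $B_j^\star \subseteq B(c_h, 9r_h)$ for each $h \in T_j$, which is the other half of showing $B_j^\star \subseteq P_j$ on entry. That containment needs its own computation: since $B_h\cap B_j^\star\neq\emptyset$, $B_j^\star \subseteq B(c_h, \rad(B_h)+2r_j)$, and then $\rad(B_h)+2r_j \leq 9r_h$ follows from $r_j < \rad(B_h) \leq 3r_h$ when $h\in I_1$, and from $\rad(B_h)\leq 5r_h$ (\inv{3}) together with $r_j \leq r_h$ (\inv{4}) when $h\in I_2$. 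This is in fact the place where \inv{4} earns its keep and where the radius $9r_h$ of $E_h$ comes from, so it should not be elided. A second, minor point: at termination in line~\ref{l:done1} you write that $\cE_3$ gives disjointness of $C_j$ from $B_i^\star$ ``for every $i \in Z = I_3\cup I_4$''; after pruning, $Z$ is generally a proper subset of $I_3\cup I_4$, and the indices in $(I_3\cup I_4)\setminus Z$ must be handled by the loop invariant together with $C_j\subseteq P_j$ (which your invariant does provide, so this is a wording slip rather than a gap).
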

\begin{proof}
    Assume that the events $\cE_1, \ldots, \cE_5$ occur. 
    \Cref{cl:inv2} already shows that \inv{2} is maintained by our algorithm. We now show that the other three invariant conditions hold by induction on the number of iterations of the {\bf while} loop in line~\ref{l:whileempty} in~\Cref{algo:nonuni}. Just before executing this {\bf while} loop for the first time, the index sets $I_2, I_3$ are empty, and so the three invariant conditions hold vacuously. 

     Now consider a particular iteration of this {\bf while} loop and assume that the invariant conditions hold at the beginning of this iteration. During this {\bf while} loop, we shall call {\bf InsertBall} procedure exactly once. In this procedure, if the second option is chosen (i.e., line~\ref{l:step2}), then we shall insert a ball $B_h$ in $\B_1$ and an index $h$ in $I_1$ such that $B_h$ contains $B_h^\star$ (since $\cE_5$ occurs). Now it is easy to check that all the invariants continue to hold. Therefore, we shall assume that this procedure executes the first option in line~\ref{l:step1}.

     Now, two cases arise depending on whether the condition in line~\ref{l:if} is true. First assume it is true, i.e., there is an index $h \in T_j$ such that $r_j \geq  \rad(B_h)$. Let $x$ be the point selected in line~\ref{l:x}. Then 
     $$d(x, c_{j^\star}) \leq d(x, c_h) + d(c_h, c_j^\star) \leq  \rad(B_h) + r_j + \rad(B_h) + r_j \leq 2 \cdot \rad(B_h) + 2r_j,$$
     where we have used the fact that $d(c_h, c_j^\star) \leq \rad(B_h) + r_j^\star$ because $B_h$ and $B^\star_j$ intersect. Since $r_j \geq \rad(B_h)$, the above is at most $4r_j$. It follows that $B(x, 5r_j)$ contains $B_j^\star$. This shows that \inv{3} is satisfied. Since $j$ was chosen to have the highest $r_j$ value among all indices in $I_4$, \inv{4} is also satisfied. Finally, we do not change $I_3$ and only remove an index from $I_4$. Therefore, \inv{1} continues to hold. 

     We now consider the more involved case when the outcome of the {\bf if} condition in line~\ref{l:if} is false.
    In this case, $r_j < \rad(B_h)$ for all $h \in T_j$. We first argue that $B_j^\star \subseteq E_h$ for any $h \in T_j$ (where $E_h = B(c_h, 9r_h)$ as defined in line~\ref{l:Eh}). To see this, note that, $B_j^\star \subseteq B(c_h, \rad(B_h) + 2 r_j)$,  as $B_j^\star$ and $B_h$ intersect. If $h \in I_1$, from \inv{2}, $\rad(B_h) \leq 3 r_h$, so $\rad(B_h) + 2 r_j \leq 3 \rad(B_h) \leq 9 r_h$. Otherwise, $\rad(B_h) \leq 5 r_h$ from \inv{3}, and $r_j \leq r_h$ from \inv{4}. Hence, $\rad(B_h) + 2 r_j) \leq 5 r_h + 2 r_h \leq 9 r_h$ %\aknote{need to complete this calculation.} 
    Thus, $B_j^\star \subseteq \cap_{h \in T_j} E_h$. Since the event $\cE_2$ has occurred, we know that $B_j^\star \cap B_h = \emptyset $ for all $h \in I_1 \cup I_2 \setminus T_j$, and the induction hypothesis about \inv{1} implies that $B_j^\star \cap C_i = \emptyset$ for all $i \in I_3$. Thus, the set $P_j$ defined in line~\ref{l:Pj} of~\Cref{algo:nonuni} contains $B_j^\star$. 
    
    \begin{claim}
        \label{cl:Z}
        $B_j^\star \subseteq P_j$ during the execution of the procedure {\bf UpdateBalls}($j,P_j)$. Further, for every  index $i \in (I_3 \cup I_4) \setminus Z$,   $B_i^\star \cap P_j = \emptyset$. 
    \end{claim}
    \begin{proof}
        We show this by induction on the number of iterations of the {\bf while} loop in the procedure {\bf UpdateBalls}($j,P_j)$. Since $Z$ is initialized to $I_3 \cup I_4$, and $B_j^\star \subseteq P_j$ when this procedure is called, the claim is true at the beginning. 
        
         Now suppose the claim is true at the beginning of an iteration of this {\bf while} loop. Assume that  we go through one iteration of the {\bf while} loop and come back to line~\ref{l:w5}.
          Then we must have executed the case in line~\ref{l:updateZ}. Since event $\cE_3 \cap \cE_4$ happens, we know that $L = \{i \in Z: B_i^\star \cap C_j \neq \emptyset\}$, $r_t \leq r_j$ for all $t \in L$ and $B(x, 3r_j) \cap B_j^\star$ is empty. Therefore, $B_j^\star \subseteq P_j' := P_j \setminus B(x, 3r_j)$. We claim that for any $t \in L$, $B_t^\star \cap P_j'$ is empty. Indeed, we know that $B_t^\star$ intersects $C_j$ (by event $\cE_4$) and $r_t \leq r_j$. Since $C_j \subseteq B(x, r_j)$, it follows that $B_t^\star \subseteq B(x, 3r_j)$. Therefore, $B_t^\star$ does not intersect $P_j'$. Thus, we see that the desired claim holds at the end of line~\ref{l:updateZ} as well.            
    \end{proof}
    Now we consider the various cases on how this procedure terminates:
    \begin{itemize}
        \item The ball $B(x,r_j)$ was added in line~\ref{l:done1}: Since event $\cE_3$ occurred, we know that $B_i^\star \cap C_j$ is empty for all $i \in Z$.   Combining this with~\Cref{cl:Z} and the fact that $C_j \subseteq P_j$, we see that $B_i^\star \cap C_j$ is empty for all $i \in I_3 \cup I_4$. Since $P_j$, as defined in line~\ref{l:Pj}, is disjoint from $\cup_{i \in I_3} C_j$, we see that adding the index $j$ to $I_3$, and $B(x,r_j)$ along with $C_j$ to $\B_3$ maintains \inv{1}.  Since we do not change $I_1, I_2$ and $I_4$ only reduces, the other three invariants continue to be satisfied. 
        \item the ball $B_t := B(y, r_j+r_t)$ was added to $\B_2$ in line~\ref{l:insertB2} (and the index $t$ and the corresponding ball was removed from $\B_3$): Since we are removing a ball from $\B_3,$ \inv{1} continues to be satisfied. No new ball is added to $\B_1$ and hence, \inv{2} is also satisfied. We check \inv{3} now: we know that
        $r_t > r_j$ and $B_t^\star$ intersects $B_j$ (since $\cE_3$ has occurred).  Therefore, 
        $$d(y, c_t^\star) \leq d(y,x) + d(x, c_t^\star) \leq (r_j + r_t) + (r_j + r_t). $$
        Since $r_j \leq r_t$, the above is at most $4r_t$. Therefore, $B(y, 5r_t)$ contains $B_t^\star$. Thus, \inv{3} is satisfied. It remains to check \inv{4}: note that the index $j \in I_4$ was chosen because it has the highest $r_j$ value among all the indices in $I_4$ (line~\ref{l:max}). Since $r_t \geq r_j$, we see that \inv{4} is also satisfied. 
        \item The ball $B(x, 5r_j)$ is added to $\B_2$ in line~\ref{l:add2}: Since $I_1, I_3$ do not change and the set $I_4$ shrinks, invariants~\inv{1}, \inv{2} continue to be satisfied. Since $\cE_4$ occurs, we know that $B(x, 3r_j) \cap B_j^\star$ is non-empty. Hence, $B(x, 5r_j)$ contains $B_j^\star$ and so, \inv{3} is satisfied. Finally, \inv{4} is satisfied because of the manner index $j \in I_4$ was chosen (line~\ref{l:max}). 
    \end{itemize}
This completes the proof of the desired lemma. 
\end{proof}

We now show that the centers of the balls added by our algorithm have sufficient capacities: 
\begin{lemma}
    \label{lem:capacity}
  Assume that the events $\cE_1, \ldots, \cE_5$ occur.   For each ball $B_j \in \B_1 \cup \B_2 \cup \B_3$ centered at $c_j$, the capacity of $c_j$ is at least $|C_j^\star|$. 
\end{lemma}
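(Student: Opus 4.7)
The plan is to reduce the statement to $U_{c_j} \ge U_{c_j^\star}$, which then suffices since the optimal solution forces $U_{c_j^\star} \ge |C_j^\star|$. In each place where a center is selected (or reassigned), the algorithm takes the maximum-capacity point from some candidate set, and the goal is to argue that the relevant optimal center (namely $c_j^\star$, or $c_t^\star$ in the special case of line~\ref{l:insertB2}) always lies in that candidate set and is never excluded.

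The crucial auxiliary fact to prove first is a bookkeeping claim: under $\cE_1,\ldots,\cE_5$, at any moment in the execution, for every $u \in I_2 \cup I_3 \cup I_4$ the point $c_u^\star$ is not equal to $c_i$ for any $i \in I_1 \cup I_2 \cup I_3$. To see this, suppose $c_i = c_u^\star$. If $i \in I^\star$, then $c_i = c_i^\star$ by the semantics of Option~(ii), so $u = i \in I_1$, contradicting $u \in I_2 \cup I_3 \cup I_4$. Otherwise $c_i$ was set via Option~(i) since the last reset. Between resets, indices only flow out of $I_2 \cup I_3 \cup I_4$ (into $I_1 \cup I_2 \cup I_3$) via InsertBall Option~(i) or the $I_3 \to I_2$ shift of line~\ref{l:shift}, so $u$ was in $I_2 \cup I_3 \cup I_4$ throughout the interval, in particular when $c_i$ was set. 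But $\cE_5$ forces Option~(i) to be chosen only when the inserted point is not $c_v^\star$ for any $v \in I_2 \cup I_3 \cup I_4$, contradicting $c_i = c_u^\star$.

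With this claim in hand, I will do a case split on how $c_j$ was chosen. For $\B_1$ via line~\ref{l:while2}: either Option~(ii) forces $c_j = c_j^\star$ and we are done, or by $\cE_1$ we have $p \in C_j^\star$, hence $c_j^\star \in B(p,r_j)$; the claim says $c_j^\star \notin \{c_i : i \in I_1\}$, so it is a valid candidate for $x$ in line~\ref{l:choosex}. For $\B_2$ via line~\ref{l:addi2}: $\cE_2$ gives $B_h \cap B_j^\star \neq \emptyset$, so $c_j^\star \in B(c_h, \rad(B_h) + r_j)$, and again the claim keeps it admissible. For $\B_3$ via line~\ref{l:done1} and for $\B_2$ via line~\ref{l:add2}, the chosen $x$ maximizes $\min(U_x, |B(x,r_j) \cap P_j|)$; by Claim~\ref{cl:Z} inside the proof of Lemma~\ref{lem:invr} we have $B_j^\star \subseteq P_j$, whence $C_j^\star \subseteq B(c_j^\star, r_j) \cap P_j$ gives $\min(U_{c_j^\star}, |B(c_j^\star,r_j) \cap P_j|) \ge |C_j^\star|$, and combined with admissibility this forces $U_x \ge |C_j^\star|$. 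Finally, for the update in line~\ref{l:insertB2} where we recenter index $t \in I_3$: by $\cE_3$, $B_t^\star$ meets $C_j \subseteq B(x,r_j)$, so $c_t^\star \in B(x, r_j + r_t)$; applied to $u = t$, the claim shows $c_t^\star$ is a valid candidate for $y$, yielding $U_y \ge U_{c_t^\star} \ge |C_t^\star|$.

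The main obstacle is the bookkeeping in the auxiliary claim, because the reset mechanism inside Option~(ii) of InsertBall wipes $I_2, I_3$ and resets $I_1$ to $I^\star$; one must verify carefully that an index currently in $I_2 \cup I_3 \cup I_4$ could not have been sitting in $I_1 \setminus I^\star$ at some intermediate moment and then silently migrated out. Once this monotonicity between consecutive resets is cleanly stated, the per-case geometric arguments mirror the ball-containment computations already used in Lemma~\ref{lem:invr} and are essentially triangle-inequality one-liners.
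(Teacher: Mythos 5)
Your proposal is correct and follows essentially the same route as the paper's proof: show that the relevant optimal center is never "used up" as a chosen center, so it survives in the candidate set at the moment of selection, and then use the same triangle-inequality/containment facts (from $\cE_1$--$\cE_5$ and \Cref{cl:Z}) to conclude that the maximum-capacity choice has capacity at least $U_{c_j^\star}\ge|C_j^\star|$. The only difference is cosmetic: you prove the admissibility claim uniformly for all $u\in I_2\cup I_3\cup I_4$ with an explicit bookkeeping argument across resets, whereas the paper states \Cref{cl:center} only for $I_4$ and handles the $t\in I_3$ case of line~\ref{l:insertB2} separately via \inv{1}; your version is, if anything, slightly cleaner.
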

\begin{proof}
    We first observe that the following property is maintained during the execution of the algorithm: 
    \begin{claim}
        \label{cl:center}
        For any index $i \in I_4$, the optimal center $c_i^\star$ is not the center of any ball in $\B_1 \cup \B_2 \cup \B_3.$   
    \end{claim}
    \begin{proof}
       We add a ball to $\B_2 \cup \B_3$ only during case~(i) of~{\bf InsertBall}. But if $p$ happens to be $c_i^\star$ for some $i \in I_2 \cup I_3 \cup I_4$, then $\cE_5$  implies that we invoke case~(ii), a contradiction. Therefore, the claim holds.             
    \end{proof}
     
    When we invoke case~(ii) of {\bf InsertBall} and a ball $B_j$ to $\B_1$, \inv{5} implies that the center of this ball is $c_j^\star$ and hence, the Lemma holds trivially. Therefore, for rest of the proof, consider only the cases when we execute option~(i) (i.e., line~\ref{l:step1}) when the {\bf InsertBall} procedure is called. We now go through all such possibilities: 
    \begin{itemize}
        \item We add the ball $B(x, 3r_j)$ to $\B_1$ in line~\ref{l:while2}: we know by the property shown above that $c_j^\star$ has not been used as the center of any selected ball yet. Since $p \in B_j^\star$ (by event $\cE_1$), $c_j^\star$ is a candidate for the point $x$ selected in line~\ref{l:choosex}. Therefore, $U_x \geq U_{c_j^\star} \geq |C_j^\star|.$
        \item We insert the ball $B(x, 5r_j)$ in line~\ref{l:addi2}: again, we know  that $c_j^\star$ has not been used as a center yet. Since $B_h \cap B_j^\star$ is non-empty (since event $\cE_2$ occurs), $d(c_h, c_j^\star) \leq \rad(B_h) + r_j$. Therefore, $c_j^\star$ is one of the candidates for the point $x$. Thus, $U_x \geq U_{c_j^\star} \geq |C_j^\star|.$
        \item We add a ball centered at $x$ in line~\ref{l:done1} or line~\ref{l:updateZ} in {\bf UpdateBalls} procedure: We know by~\Cref{cl:Z} that $B_j^\star \subseteq P_j$. Therefore, for $x = c_j^\star$, the quantity $\min(U_x, |B(x, r_j) \cap P_j|)$ is at least $|C_j^\star|$. Since $c_j^\star$ has not been chosen as a center yet, it follows from line~\ref{l:x1} that $U_x \geq |C_j^\star|$. 
        \item We add the index $t$ to $I_2$ and a ball centered at $y$ to $\B_2$ in line~\ref{l:Y}: Since $B_j \cap B_t^\star \neq \emptyset$ (by event $\cE_3$), $d(x,c_t^\star) \leq r_j  + r_t$. 
        We claim that $c_t^\star$ has not been used a center of any ball in $\B_1 \cup \B_2 \cup \B_3$. Indeed, we know that $t \in I_3 \cup I_4$. If $t \in I_4$,~\Cref{cl:center} implies that $c_t^\star$ has not been used as a center. 
        If $t \in I_3$, then \inv{1} implies that $c_t^\star$ is not contained in the ball $B_t \in \B_3,$ and hence, has not been used a center yet. Since $d(x, c_t^\star) \leq r_j + r_t$, $c_t^\star$ is a candidate for the point $y$ chosen in line~\ref{l:Y}. Therefore, $C_y \geq C_{c_t^\star} \geq |C_t^\star|$.    
    \end{itemize} 
\end{proof}

\begin{lemma}
    \label{lem:feas}
    Suppose events $\cE_1, \ldots, \cE_5$ occur. Let $\{B(c_j, 9 r_j) : j \in [k]\}$ be the set of balls output by~\Cref{algo:nonuni}.  Then there is a feasible solution to this instance where each center $c_j$ is assigned at most $U_j$ points, and if a point $p$ is assigned to a center $c_j$, then $p \in B_j$. 
\end{lemma}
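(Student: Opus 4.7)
The plan is to reformulate the lemma as a capacitated bipartite $b$-matching problem. Build the graph $H$ with vertex sets $P$ and $[k]$, put an edge between $p$ and $j$ iff $p \in B(c_j, 9r_j)$, and give each right-hand vertex $j$ capacity $U_{c_j}$. The desired feasible solution is exactly a $P$-saturating $b$-matching in $H$, and by the standard Hall-type criterion (equivalently max-flow min-cut) its existence is equivalent to
\[
|X_T| \;\leq\; \sum_{j\in T}U_{c_j}\qquad\text{for every }T\subseteq[k],
\]
where $X_T:=\{p\in P : N(p)\subseteq T\}$ and $N(p):=\{j : p\in B(c_j,9r_j)\}$.

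I would next assemble five structural facts. First, invariants \inv{2} and \inv{3} give $C_l^\star \subseteq B(c_l,9r_l)$ for every $l \in I_1 \cup I_2$. Second, the analysis inside \Cref{lem:invr} that establishes \inv{3} shows, for every $l \in I_3$, $B_l^\star \subseteq E_h=B(c_h,9r_h)$ for each $h \in T_l$. Third, the choice of $c_i$ in line~\ref{l:x1} (as analyzed at the start of \Cref{lem:capacity}) gives, for every $i\in I_3$, $|C_i|\geq|C_i^\star|$ and $U_{c_i}\geq|C_i^\star|$. Fourth, the \emph{key refinement}: from the definition of $P_i$ in line~\ref{l:Pj} of \Cref{algo:nonuni} together with \inv{1}--\inv{3}, $C_i\subseteq\bigcup_{l\in T_i}C_l^\star$ for every $i\in I_3$; indeed, $P_i$ explicitly removes $B_l$ for all $l\in(I_1\cup I_2)\setminus T_i$, and $C_l^\star\subseteq B_l$ for such $l$ by \inv{2}--\inv{3}, while \inv{1} rules out intersections of $C_i$ with $C_l^\star$ for $l\in I_3$. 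Fifth, the $C_i$'s are pairwise disjoint by \inv{1} and $U_{c_j}\geq|C_j^\star|$ for every $j$ by \Cref{lem:capacity}.

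Fix $T\subseteq[k]$, write $T^*:=T\cap(I_1\cup I_2)$, and let $\sigma^*(p)$ denote the index of the optimal cluster containing $p$. For $p\in X_T$: the first fact forces $\sigma^*(p)\in T$ when $\sigma^*(p)\in I_1\cup I_2$, and the second forces $T_{\sigma^*(p)}\subseteq T^*$ when $\sigma^*(p)\in I_3$. Also, the third fact together with $C_i\subseteq B(c_i,9r_i)$ yields $X_T\cap C_i=\emptyset$ for every $i\in I_3\setminus T$. Decomposing $|X_T|=\sum_l|X_T\cap C_l^\star|$, using $|X_T\cap C_l^\star|=0$ for $l\in(I_1\cup I_2)\setminus T$, the refined bound $|X_T\cap C_l^\star|\leq|C_l^\star|-\sum_{i\in I_3\setminus T}|C_l^\star\cap C_i|$ for $l\in T^*$ (because the $C_i$'s lie in $P\setminus X_T$ and are pairwise disjoint), and the observation that $|X_T\cap C_l^\star|>0$ for $l\in I_3\setminus T$ forces $T_l\subseteq T^*$, I obtain
\[
|X_T| \;\leq\; \sum_{l\in T}|C_l^\star| \;-\; \sum_{i\in I_3\setminus T}\Bigl|C_i\cap\bigcup_{l\in T^*}C_l^\star\Bigr| \;+\; \sum_{l\in I_3\setminus T,\,T_l\subseteq T^*}|C_l^\star|.
\]
The fourth fact makes the second sum at least $\sum_{i\in I_3\setminus T,\,T_i\subseteq T^*}|C_i|\geq\sum_{i\in I_3\setminus T,\,T_i\subseteq T^*}|C_i^\star|$, which cancels the third sum exactly, leaving $|X_T|\leq\sum_{l\in T}|C_l^\star|\leq\sum_{j\in T}U_{c_j}$ by the fifth fact.

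I expect the main obstacle to be establishing the refined fact that $C_i$ sits entirely inside the optimal clusters indexed by $T_i$; this requires chasing the definition of $P_i$ through the {\bf UpdateBalls} subroutine (noting that although $P_i$ may be shrunk in line~\ref{l:updateZ} by removing some $B(x,3r_j)$, this can only further remove points and preserves the exclusion of $B_l$ for $l\notin T_i$) and invoking \inv{1}--\inv{3}. Once that refinement is in place, the cancellation in the displayed inequality is automatic, and the rest of the argument is bookkeeping.
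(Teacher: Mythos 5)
Your proof is correct, but it takes a genuinely different route from the paper. The paper argues constructively: for each $j \in I_3$ it picks $D_j \subseteq C_j$ with $|D_j| = |C_j^\star|$, splits $D_j$ into the pieces $S_{ji} = D_j \cap C_i^\star$ ($i \in T_j$), and compensates each cluster $i \in I_1 \cup I_2$ for its stolen points by handing it an equal-sized piece $V_{ji}$ of $C_j^\star$ (which lies in $B(c_i,9r_i)$ because $C_j^\star \subseteq P_j \subseteq E_i$); it then verifies directly that the resulting sets $X_1,\dots,X_k$ partition $P$ with $|X_i| = |C_i^\star| \le U_{c_i}$. You instead verify the Hall/deficiency condition $|X_T| \le \sum_{j\in T} U_{c_j}$ for every $T \subseteq [k]$ and invoke max-flow min-cut. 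The two arguments rest on exactly the same five structural facts (coverage of $C_l^\star$ by $B_l$ for $l \in I_1\cup I_2$; $B_l^\star \subseteq E_h$ for $h \in T_l$, $l\in I_3$; $|C_l| \ge |C_l^\star|$; the containment $C_l \subseteq \bigcup_{i\in T_l} C_i^\star$, which the paper also uses implicitly when it asserts $\bigcup_i S_{ji} = D_j$; and \Cref{lem:capacity}), and your cancellation between the subtracted term $\sum_{i \in I_3\setminus T}\bigl|C_i\cap\bigcup_{l\in T^*}C_l^\star\bigr|$ and the added term $\sum_{l\in I_3\setminus T,\,T_l\subseteq T^*}|C_l^\star|$ is precisely the counting shadow of the paper's swap. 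What the paper's version buys is an explicit witness assignment (no appeal to flow feasibility); what yours buys is that you never have to manage the disjointness bookkeeping of the $S_{ji}$'s and $V_{ji}$'s, at the price of a slightly more delicate global count. One small caveat, which applies equally to the paper: the claim $C_l \subseteq \bigcup_{i\in T_l} C_i^\star$ needs the observation that any index entering $I_1\cup I_2$ after $C_l$ was fixed came from $I_3\cup I_4$, where \inv{1} already guaranteed disjointness of its optimal ball from $C_l$ at that moment; you gesture at this by ``chasing the definition of $P_i$,'' and it would be worth one explicit sentence.
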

\begin{proof}
    % \aknote{Jatin:please fill}
    For the balls in $\B_1$ and $\B_2$, \inv{2} and \inv{3} shows that they cover the corresponding optimal balls. However this is not true for the balls in $\B_3$ (in fact \inv{1} shows quite the opposite). Thus handling these balls require more nuanced argument. 
    For each $j \in I_3$, we claim that $|C_j| \geq |C_j^\star|$. Indeed, as the proof of~\Cref{lem:capacity} shows, when we define $C_j$ as in line~\ref{l:x1}, $c_j^\star$ is also a candidate for the point $x$ chosen in this step. Therefore, 
    $\min(U_{c_j}, |C_j|) \geq \min(U_{c_j^\star}, |C_j^\star|)$, which implies that $|C_j| \geq |C_j^\star|$. 
    Therefore, for each $j \in I_3$, let $D_j$ be an arbitrary subset of $C_j$ with $|D_j| = |C_j^\star|$. 
    % Note that such a subset exists because $|C_j| \geq |C_j^\star|$, as $C_j$ was chosen as $B(c_j, r_j) \cap P_j$, where $c_j$ is the point $x$ that maximizes $\min(U_x, |B(c, r_j) \cap P_j|)$ and $\min(U_{c_j^\star}, |B(c_j^\star, r_j) \cap P_j|) \geq |C_j^\star|$. 
    It follows from \inv{1} that $D_i \cap D_j = \phi$ for any $i, j \in I_3, i \neq j$. For any $j \in I_3, i \in T_j$, define $S_{ji} := D_j \cap C_i^\star$. Note that $\cup_{i \in T_j} S_{j i} = D_j$ because $T_j$ is the set of indices $i$ for which $C_j \cap C_i^\star$ is non-empty. Also, the sets $S_{ji}$ are disjoint as $C_i^\star$ are disjoint, and therefore $\sum_{i \in T_j} |S_{ji}| = |D_j| = |C_j^{\star}|$. Hence, we can partition the set $C_j^\star$ as $C_j^\star = \cup_{i \in T_j} V_{ji}$, where $V_{ji}$'s are mutually disjoint over different $i$ and $|V_{ji}| = |S_{ji}|$.

    Now, let us define the clusters $X_1, X_2, \ldots X_k$ as follows:
    \begin{itemize}
        \item For each $i \in I_3$, $X_i = D_i$
        \item For each $i \in I_1 \cup I_2$, $X_i = (C_i^\star \setminus \cup_{j: i \in T_j} S_{ji}) \cup (\cup_{j : i \in T_j} V_{ji})$. Note the following properties:
            \begin{itemize}
                \item $S_{ji} = C_i^\star \cap D_j \subseteq C_i^\star$
                \item $V_{ji} \subseteq C_j^\star$ meaning $V_{ji} \cap C_i^\star = \phi$
                \item For for two indices $x, y \in I_3$ such that $i \in T_{x} \cap T_{y}$, $S_{x i} \cap S_{y i} = \phi$, as $D_{x} \cap D_{y} = \phi$ and $V_{x i} \cap V_{y i} = \phi$, as $C_{x}^\star \cap C_{y}^\star = \phi$
            \end{itemize}
            Hence, $|X_i| = |C_i^\star| - \sum_{j : i \in T_j} |S_{ji}| + \sum_{j: i \in T_j} V_{ji} = C_i^\star$ as $|S_{ji}| = |V_{ji}|$ for $j$ such that $i \in T_j$.
    \end{itemize}

    We complete the proof by claiming that $X$ is a valid assignment, that is:
    \begin{enumerate}
        \item For each $i \in [k]$, $|X_i| \leq U_{c_i}$.
        \item For each $i \in [k]$, $X_i \subseteq B(c_i, 9 r_i)$
        \item $X_i \cap X_j = \phi$ for any $i \neq j \in [k]$ and $\cup_{i \in [k]} X_i = P$.
    \end{enumerate}

    The first part can be proved by noting that for all $i \in [k]$, $|X_i| = |C_i^\star|$ and then using ~\Cref{lem:capacity}. 
    
    For the second part, if $i \in I_3$, $X_i = D_i \subseteq C_i \subseteq B_i = B(c_i, r_i) \subseteq B(c_i, 9 r_i)$. Otherwise, note that $C_i^\star \subseteq B_i^\star \subseteq B(c_i, 5 r_i) \subseteq B(c_i, 9r_i)$ holds from \inv{3}. Also, for any $j$ with $i \in T_j$, note that $V_{ji} \subseteq C_j^\star \subseteq P_j \subseteq B(c_i, 9r_i)$. 
    
    For the third part, consider $i \neq j \in [k]$. There are three cases:
    \begin{itemize}
        \item $i, j \in I_3$: In this case, $X_i \cap X_j = D_i  \cap D_j = \phi$
        \item $i, j \in I_1 \cup I_2$: In this case, $C_i^\star \cap C_j^\star = \phi$, and for any two indices $x, y$, with $i \in T_{x}$ and $j \in T_{y}$, $V_{x i} \cap V_{y j} = \phi$ because if $x \neq y$, then $V_{x i} \subseteq C_{x}^\star$ and $V_{y j} \subseteq C_{y}^\star$, and otherwise if $x = y$, $V_{x i}$ and $V_{y j}$ are disjoint sets in a partition of $C_{x}^\star$
        \item $i \in I_1 \cup I_2, j \in I_3 \cup I_2$ or vice versa. Assume wlog that  $i \in I_1 \cup I_2, j \in I_3$. Here, $D_j \cap (C_i^\star \setminus \cup_{l: i \in T_l} S_{li}) = \phi$ because we remove $S_{ji} = D_j \cap C_i^\star$ from $C_i^\star$ when forming $X_i$. Also, $D_i \cap V_{li} = \phi$ for any $l$ with $i \in T_l$, because $V_{l i} \subseteq C_l^\star$, $D_j \subseteq C_j$ and $C_j \cap C_l^\star = \phi$, from \inv{1}.

        Now, since the clusters are disjoint and $|X_i| = |C_i^\star|$ for all $i \in [k]$, $|\cup_{i \in [k]} X_i| = \sum_{i \in [k]} |X_i| = \sum_{i \in [k]} |C_i^\star| = |P|$. This means that $\cup_{i \in [k]} X_i = P$, as $X_i \subseteq P$ for all $i \in [k]$ 
    \end{itemize}
\end{proof}
\begin{lemma}
    \label{lem:cost}
 The total cost of the solution produced by~\Cref{algo:nonuni} is at most $9 \sum_{j \in [k]} r_j$. Further, the total $L_p$ norm of the radii of the clusters produced by this algorithm is at most $9 \left( \sum_{j \in [k]} r_j^p \right)^{1/p}$. %\jatin{Should we also include $\epsilon$ here?}
 %\aknote{here, we need not because we are comparing with $r_j$, but the final theorem statement has an $\varepsilon$}
 %\jatin{In that case, the cost is exactly $9$ times, not at most, as we output $B(c_j, 9 r_j)$}
\end{lemma}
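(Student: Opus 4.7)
The plan is straightforward: the bound follows directly from the fact that \Cref{algo:nonuni} outputs balls with a uniform radius bound, together with the feasibility result already proved.

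First, I would recall that the final line~\ref{l:output} of \Cref{algo:nonuni} outputs exactly the set $\{B(c_j, 9r_j) : j \in [k]\}$. By \Cref{lem:feas}, assuming the events $\cE_1, \ldots, \cE_5$ occur, there is a valid assignment of the input points $P$ to these centers such that (i) each center $c_j$ receives at most $U_{c_j}$ points, and (ii) every point assigned to $c_j$ lies in $B(c_j, 9r_j)$. Consequently, the radius of the cluster centered at $c_j$ in the solution output by the algorithm is at most $9r_j$.

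Given this uniform per-cluster bound, both inequalities are immediate. For the sum-of-radii objective, the total cost is at most $\sum_{j \in [k]} 9 r_j = 9 \sum_{j \in [k]} r_j$. For the $L_p$-norm objective, the cost is at most $\left( \sum_{j \in [k]} (9 r_j)^p \right)^{1/p} = 9 \left( \sum_{j \in [k]} r_j^p \right)^{1/p}$, where we pulled the constant factor $9$ out of the $L_p$ norm.

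There is no real obstacle here; the lemma is essentially a bookkeeping step that converts the ball-radius guarantee of \Cref{lem:feas} into the claimed bound on the objective function. The tightness of the constant $9$ (rather than some smaller value reflecting the finer guarantees of $3 r_j$ for $\B_1$ and $5 r_j$ for $\B_2$ from \inv{2} and \inv{3}) is an artifact of choosing a single uniform output radius for all clusters; refining this could yield the improved constant $4 + \sqrt{13}$ mentioned in \Cref{thm:nonuni}, but is not needed for the present statement.
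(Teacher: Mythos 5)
Your proof is correct and matches the paper's own argument, which is a one-line observation that the output consists of the balls $B(c_j, 9r_j)$ for $j \in [k]$, so the sum of $p$th powers of the radii is $9^p \sum_{j} r_j^p$. Your additional appeal to \Cref{lem:feas} to justify that the actual cluster radii are bounded by the output ball radii is a reasonable piece of bookkeeping the paper leaves implicit.
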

\begin{proof}
    Clearly, the sum of $p^{th}$ powers of the radii of the balls in the algorithm's output is: $\sum_{j \in [k]} (9 r_j)^p = 9^p \sum_{j \in [k]} r_j^p$.
\end{proof}

It follows from  ~\Cref{cl:event1}, ~\Cref{lem:feas} and ~\Cref{lem:cost} that the algorithm outputs a $(9+\varepsilon)$ approximate solution with probability at least $1/2^{O(k^4)}$. This probability bound can be improved to $1/2^{O(k^3)}$ through a more fine-grained analysis. For this, note that the invariants $1,3$ and $4$ only need to be satisfied after the last call to {\bf InsertBall} that chooses option (ii). This allows us to relax the definitions of $\cE_2,\cE_3,\cE_5$, such that $\Pr[\cE_1 \cap \cE_2 \cap \cE_3 \cap \cE_4 \cap \cE_5] \geq 1 / 2^{O(k^3)}$

\subsection{Improvement in approximation ratio from $9$ to $4 + \sqrt{13}$}
In this section, we briefly show that with a more careful choice of parameters, we can improve the approximation ratio from 9 to about 7.606. We use a parameter $\alpha$ that shall be fixed later. We make the following changes in~\Cref{algo:nonuni}: 
\begin{itemize}
    \item The condition in line~\ref{l:if} becomes ``$r_j \geq \alpha \cdot \rad(B_h)$''. 
    \item Line~\ref{l:x} remains unchanged, i.e., $x$ is the maximum capacity point in  $B(c_h, \rad(B_h) + r_j)$ (other than $\{c_i: i \notin I_4\}$). But line~\ref{l:addi2} changes to ``Call {\bf UpdateBalls}$(x,j,2 \cdot \rad(B_h)  + 3r_j,2)$.''
    \item The definition of the ball $E_h$ in line~\ref{l:Eh} changes. Recall that $h$ belongs to the set $T_j$ here. Now we differentiate between two cases. When $h \in I_1$, then we define $E_h$ as $B(c_h, 3(1 + 2 \alpha) r_h)$. The other case is when $h \in I_2$. In this case, we define $E_h$ as $B(c_h, (5 + \frac{2}{\alpha}) r_h).$
    \item In line ~\ref{l:output}, we now output $\{B(c_j, 3(1 + 2 \alpha) r_j) : j \in I_1\} \cup \{B(c_j, (5 + \frac{2}{\alpha}) r_j) : j \in I_2\} \cup \B_3$
\end{itemize}

A routine modification of the analysis in the previous section shows that the approximation ratio now becomes  $\max(3(1 + 2 \alpha), 5 + \frac{2}{\alpha})$ , whose minimum possible value is $4 + \sqrt{13} \approx 7.606$ at $\alpha = \frac{1 + \sqrt{13}}{6}$.

\section{Hardness of Approximation for \capsum}
\label{sec:eth}

In this section, we prove~\Cref{thm:eth}. This result states that assuming the \eth (\ETH), there exists  a constant $c > 1$ such that any $c$-approximation for \\
$\capsum$ must have running time exponential in $k$.
The proof of this result relies on reduction from the \vc problem on constant degree graphs. We shall use the following result that follows from Dinur's PCP Theorem~\cite{dinur}. Some of the details can be found in \cite{abjk18}.
%shown in~\cite{}:

\begin{theorem}
\label{thm:eth1}
Assume that \ETH holds. 
Then there exist constants $\veps, d > 0$ such that the following holds: any algorithm for \vc on graphs  with maximum vertex degree at most $d$ that  distinguishes between the case when the input graph $G$ has a vertex cover of size at most $\frac{2n}{3}$ and the case when $G$ has a vertex cover of size at least $\frac{2n}{3} \cdot (1 + \veps)$, must have running time $2^{\Omega \left( \frac{n}{\polylog n}\right)}$.
\end{theorem}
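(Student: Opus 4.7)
\medskip
\noindent\textbf{Proof proposal for \Cref{thm:eth1}.}
The plan is to compose three standard ingredients in sequence: (i) sparsification to reduce arbitrary 3-SAT to linear-size 3-SAT without losing the \ETH lower bound; (ii) the quasi-linear-size gap amplification of Dinur's PCP theorem, which turns linear-size 3-SAT into a constant-gap constraint satisfaction problem with only polylogarithmic size blow-up; and (iii) the classical FGLSS-style triangle gadget from 3-SAT to \vc, arranged so that the resulting graph has bounded degree and the constant gap translates to a $(1+\veps)$-factor gap around $\frac{2n}{3}$.

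First, by the Impagliazzo--Paturi--Zane sparsification lemma, I would reduce 3-SAT on $N$ variables to an instance with only $O(N)$ clauses while preserving satisfiability, at the cost of at most $2^{o(N)}$ sub-instances. Thus \ETH implies a $2^{\Omega(N)}$ lower bound for 3-SAT even on linear-size inputs. Next, I invoke the quasi-linear-size version of Dinur's PCP theorem (using the gap amplification obtained by iterating graph powering and composition with a constant-size inner PCP): this produces, from any 3-SAT instance $\varphi$ with $N$ variables and $O(N)$ clauses, a 3-CNF instance $\varphi'$ of size $N' = N \cdot \polylog N$ such that if $\varphi$ is satisfiable then so is $\varphi'$, and if $\varphi$ is unsatisfiable then every assignment falsifies at least a $\delta$-fraction of the clauses of $\varphi'$, for some absolute constant $\delta > 0$. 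Moreover, Dinur's construction can be made to output a \emph{bounded-occurrence} instance in which every variable appears in at most a constant number $q$ of clauses (either directly from the composition with a bounded-occurrence inner verifier, or by a standard expander-replacement post-processing). The \ETH lower bound now transfers to Gap-3-SAT: distinguishing the two cases on such $\varphi'$ requires $2^{\Omega(N)} = 2^{\Omega(N'/\polylog N')}$ time.

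For the final step I apply the textbook 3-SAT $\to$ Vertex Cover reduction. For each clause of $\varphi'$ create a triangle on three vertices labelled by its literals, and add an edge between every pair of vertices carrying complementary literals. The resulting graph $G$ has $n = 3m'$ vertices where $m' = O(N')$, and maximum degree at most $2 + (q-1) = q+1 = d$, since each literal vertex has two triangle neighbours and at most $q-1$ conflict edges from occurrences in other clauses. A standard calculation shows that any assignment $\sigma$ satisfying exactly a $(1-\eta)$-fraction of clauses yields an independent set of size $(1-\eta)m'$ (pick one satisfied literal per satisfied clause), so the maximum independent set has size $\alpha(G) = (1 - \eta^\star) m'$ where $\eta^\star$ is the minimum falsified fraction. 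Complementing, the minimum vertex cover has size $\tau(G) = (2 + \eta^\star) m' = \frac{2n}{3} + \eta^\star m'$. In the YES case $\tau(G) \le \frac{2n}{3}$, and in the NO case $\tau(G) \ge \frac{2n}{3}(1 + \veps)$ for $\veps = \delta/2$. Any algorithm distinguishing these two cases in time $2^{o(n/\polylog n)}$ would, by composing with the above reductions, distinguish satisfiability of the original 3-SAT instance in time $2^{o(N/\polylog N)}$, contradicting \ETH.

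The main technical obstacle is step (ii): ensuring that one can simultaneously achieve (a) quasi-linear size blow-up, (b) a constant soundness gap, and (c) bounded occurrence of each variable. Each of these is available individually in the PCP literature, but the joint statement requires invoking a careful version of Dinur's construction (or a follow-up such as the Moshkovitz--Raz or Dinur--Harsha composition) that preserves sparsity under gap amplification and composition; the degree-reduction on the PCP constraint graph is standardly achieved by replacing each high-degree variable with an expander on its occurrences plus equality constraints, which multiplies the size by only a constant and preserves the gap up to a constant factor. Once this ``bounded-occurrence Gap 3-SAT under \ETH'' primitive is in hand, the rest of the argument is routine.
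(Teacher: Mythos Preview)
The paper does not actually prove \Cref{thm:eth1}; it simply states the theorem and attributes it to Dinur's PCP Theorem~\cite{dinur} with further details deferred to~\cite{abjk18}. Your proposal therefore fills in what the paper leaves entirely to citations, and your three-step outline (sparsification $\to$ Dinur's quasi-linear PCP with degree reduction $\to$ the triangle/FGLSS reduction to \vc) is exactly the standard derivation and is consistent with what the paper is invoking.

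Two minor remarks on the sketch itself. First, your degree bound $2+(q-1)$ is not quite right: a literal vertex has two triangle edges plus one conflict edge for \emph{each} occurrence of the complementary literal, so the degree is at most $2+q$ (still a constant $d$, so the conclusion is unaffected). Second, for the soundness direction you should mention the converse as well, namely that any independent set of size $(1-\eta)m'$ yields an assignment satisfying $(1-\eta)m'$ clauses (pick the literal in each triangle that lies in the independent set; the conflict edges guarantee consistency). This is the routine half of the GJS reduction but is needed to conclude $\tau(G)\ge \tfrac{2n}{3}(1+\veps)$ in the NO case. With those small fixes your argument is sound and matches the intended provenance of the theorem.
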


% We now give a reduction from \vc on maximum degree $d$ graphs to uniform capacity  version of \capsum. For sake of clarity, we first exhibit the reduction to the general (non-uniform capacities) \capsum problem and then extend the construction to the uniform capacity case. 

% \subsection{Reduction to (general version of) \capsum}

Let $d$ be the constant stated in~\Cref{thm:eth1}. Consider an instance $\cI$ of $\vc$. Let the instance $\cI$ be specified by a graph $G=(V,E)$, where each vertex has degree at most $d$. We now construct an instance $\cI'$ of \capsum as follows (see~\Cref{fig:red}): 

\begin{figure}
\begin{center}
\includegraphics[scale=0.5]{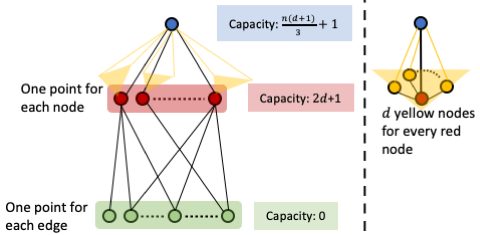}
\end{center}
\caption{Reduction from \vc to \capsum: points of the form $p_e$ are colored green, points $p_v$ are colored red. The subset of points $Q_v$ associated with each $p_v$ are colored yellow and the special point $p^\star$ is colored blue. The edges denote the metric on these points. }
\label{fig:red}
\end{figure}

Let $n = |V|$.
We define the set of points $P$ in $\cI'$. For every edge $e \in E$, there is a point $p_e$ of capacity $0$. For every vertex $v \in V$, there is  a point $p_v$  with capacity $2d+1$. Further, for every point $p_v$, we associate a set $Q_v$ of $d$ new points, each with capacity 0.  Finally, there is  a special point $p^\star$ with capacity $\frac{n(d+1)}{3}+1$. 

In order to specify the metric on $P$, we first define an unweighted graph $H = (P, F)$, where $P$ is the vertex set and $F$ is the set of edges. For each $p_v$, we have an edge $(p_v, p^\star)$ We also have edges $(w,p_v), (w, p^\star)$ for each $w \in Q_v$). For each vertex $p_e$, where $e=(u,v)$ is an edge in $E$, we add edges $(p_e, p_u)$ and $(p_e, p_v)$ to the edge set $F$. 
 The distance between two nodes in $P$ is given by the shortest path distance in $H$. Finally, we set the parameter $k = \frac{2n}{3} + 1$. We now show that this reduction has the desired properties: 
\begin{lemma}
\label{lem:red1}
If the graph $G$ in the instance $\cI$ has a vertex cover of size at most $2n/3$, then there is a solution to $\cI'$ with objective function value at most $k$.
\end{lemma}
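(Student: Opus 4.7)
The plan is to exhibit an explicit $k$-clustering whose sum of radii is at most $k$, directly from the vertex cover $S$. First I would augment $S$ arbitrarily (adding vertices to a vertex cover preserves the property) so that $|S| = 2n/3$; this gives exactly $k - 1 = 2n/3$ ``vertex-based'' candidate centers together with $p^\star$ as the $k$-th.

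I would then open centers at $\{p_v : v \in S\} \cup \{p^\star\}$, a set of size $k$, and define the assignment as follows. For each edge $e = (u,v) \in E$, the vertex cover property gives at least one endpoint in $S$; fix any such endpoint $w(e) \in S$ and assign $p_e$ to the cluster centered at $p_{w(e)}$. For each $v \in S$, assign $p_v$ itself and all $d$ points of $Q_v$ to the cluster at $p_v$. Finally, assign all the remaining points---namely $p^\star$, the points $p_v$ for $v \notin S$, and the sets $Q_v$ for $v \notin S$---to the cluster at $p^\star$.

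For feasibility I would verify the capacities. The cluster at $p_v$ (with $v \in S$) has size $1 + d + |\{e : w(e) = v\}| \leq 1 + d + \deg(v) \leq 2d + 1 = U_{p_v}$. The cluster at $p^\star$ has size $1 + (n - |S|)(1 + d) = 1 + (n/3)(d+1) = U_{p^\star}$, which matches the capacity exactly. For the radius bound, I would note that every point assigned to a cluster center lies at graph-distance exactly $1$ from that center in $H$: the edges of $H$ connect each $p_e$ directly to $p_{w(e)}$, each $w \in Q_v$ directly to $p_v$ and to $p^\star$, and each $p_v$ directly to $p^\star$. Consequently every non-empty cluster has radius at most $1$, so the sum of radii is at most $k$.

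No real obstacle is anticipated beyond the routine bookkeeping; the only subtlety is that the capacity of $p^\star$ is tightly set at $n(d+1)/3 + 1$, which is exactly what a residual of $n - |S| = n/3$ outside-vertices produces, motivating the augmentation step so that $|S|$ is $2n/3$ rather than strictly smaller.
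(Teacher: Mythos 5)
Your proposal is correct and follows essentially the same route as the paper's proof: pad the vertex cover to size exactly $2n/3$, open centers at the corresponding $p_v$'s plus $p^\star$, assign each $p_e$ to a covering endpoint, each $Q_v$ with $v\in S$ to $p_v$, and everything else to $p^\star$, then check that all clusters have radius $1$ and the capacities (including the tight one at $p^\star$) are respected. No gaps.
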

\begin{proof}
    Let $V'$ be a vertex cover of size exactly $2n/3$ in $G$ (we can add extra vertices to a vertex cover if its size is less than $2n/3$).   Consider a solution to $\cI'$ where we locate a center at each of the points $p_v, v \in V'$. Further, we locate a center at $p^\star$. Clearly, we have opened at most $k$ centers -- let $C$ denote the set of open centers. 

    We now specify the assignment of points in $P$ to the centers. First consider a point $p_e, e = (u,v)$. We know that at least one of $u$ and $v$ belong to $V'$, and hence, either $p_u$ or $p_v$ (or both) lie in $C$ -- assume wlog that $p_u \in C$. Then, we assign $p_e$ to $p_u$. For every $v \in V', $ we assign $p_v$ and all the points in $Q_v$ to $p_v$. All the remaining points are assigned to $p^\star$. 

    It is easy to check that each point in $P$ is assigned to a center which is joined to it by an edge in $H. $ Thus, the radius of each cluster is at most 1, and so the total cost of this solution is at most $k$. It remains to verify the capacity constraints. Every center $p_v \in C$ is assigned at 
    most $2d+1$ points because $|Q_v| = d,$ and the degree of $v$ is at most $d$. For the center $p^\star$, the set of points assigned to it (besides $p^\star$ itself) is given by $\bigcup_{v \in V \setminus V'} \left(\{v\} \cup Q_v \right).$ The size of this set is $(d+1)|V \setminus V'| = (d+1)n/3$. Thus, we satisfy the capacity constraint for $p^\star$ as well. This proves the desired result. 
\end{proof}

\begin{lemma}
\label{lem:red2}
If the graph $G$ in  the instance $\cI$ has a minimum vertex cover of size at least $(2n/3) \cdot (1 + \veps)$, then any solution to $\cI'$ has cost at least $k \cdot (1+\frac{\varepsilon}{2d})$.
\end{lemma}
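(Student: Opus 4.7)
I would prove the lemma by contradiction.  Suppose some solution $\sigma$ to $\cI'$ has cost strictly less than $k(1+\veps/(2d))$; from $\sigma$ I will extract a vertex cover of $G$ of size strictly less than $(2n/3)(1+\veps)$, contradicting the hypothesis.  First, classify the $k$ centers chosen by $\sigma$: let $V_C\subseteq V$ be the set of $v$ with $p_v$ chosen, $q^\star\in\{0,1\}$ indicate whether $p^\star$ is chosen, and $w$ count the ``wasted'' centers placed at zero-capacity points (i.e.\ points of the form $p_e$ or an element of some $Q_v$); then $|V_C|+q^\star+w=k$.  Let $V_C'\subseteq V_C$ consist of those $v$ with $r_{p_v}\geq 1$ (the only $p_v$-centers that can host any point besides $p_v$ itself), and put $J_0=V_C\setminus V_C'$.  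Write $C$ for the total cost of $\sigma$.

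A quick inspection of the metric on $H$ shows that each $p_e$ with $e=(u,v')$ can only be assigned to a positive-capacity center (i.e.\ one in $\{p_v:v\in V\}\cup\{p^\star\}$), and within this set $p_e$ lies at distance $1$ only from $p_u,p_{v'}$, distance $2$ only from $p^\star$, and distance $3$ from every $p_w$ with $w\notin e$.  Consequently, if $e$ is not covered by $V_C'$, its point $p_e$ must be assigned at distance $\geq 2$.  Let $t$ count such uncovered-by-$V_C'$ edges.  Because each vertex of $G$ has degree $\leq d$, extending $V_C'$ to a vertex cover of $G$ requires at most $\lceil t/d\rceil$ additional vertices, so the hypothesis on the minimum vertex cover of $G$ gives $t/d \geq (2n/3)(1+\veps)-|V_C'|-1 = 2n\veps/3 + q^\star + w + |J_0| - 2$.

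The key and hardest step is an amortized capacity-based lower bound on $|V_{\geq 2}|$, the number of $p_v$-centers of radius $\geq 2$.  The construction is engineered so that $U_{p^\star}=n(d+1)/3+1$ is \emph{exactly} tight in the optimum of \Cref{lem:red1}, where $p^\star$'s cluster is filled by itself plus the $(d+1)n/3$ distance-$1$ points $\{p_v,Q_v:v\notin V_C\}$.  Thus every $p_e$ that $\sigma$ routes to $p^\star$ (which forces $r_{p^\star}\geq 2$) displaces some baseline point from $p^\star$'s cluster, which must then travel to a $p_v$-center at radius $\geq 2$.  Parallel displacement occurs for each $v\in J_0$, whose $d$ orphan $Q_v$-points cannot sit inside the radius-$0$ cluster of $p_v$, and for each of the $w$ missing useful slots (whose $(d+1)$ associated $\{p_v,Q_v\}$ baseline points must be absorbed elsewhere).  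Since a $p_v$-center of radius $\geq 2$ absorbs at most $2d$ such points, a careful sum gives $|V_{\geq 2}| \geq (t+d|J_0|+(d+1)w)/(2d)$.

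Finally, combining the trivial bound $C\geq |V_C'|+|V_{\geq 2}|+q^\star\cdot\mathbf{1}[r_{p^\star}\geq 1]$ with $|V_C'|=k-q^\star-w-|J_0|$ and plugging in the Step~2 bound on $t$, the $|J_0|$-terms cancel cleanly and one obtains $C\geq k + n\veps/3 + w/(2d) - O(1)$.  For any $d\geq 2$ and $n$ sufficiently large (depending only on $d$ and $\veps$) this strictly exceeds $k(1+\veps/(2d))=k+(2n/3+1)\veps/(2d)$, since $n\veps/3$ dominates $n\veps/(3d)$, producing the required contradiction.  Degenerate sub-cases -- for instance $q^\star=0$ (where $p^\star$ itself must be absorbed into some $p_v$-cluster) or $r_{p^\star}=0$ (singleton $p^\star$-cluster) -- are handled by the same accounting with minor adjustments, and in each the cost bound is in fact stronger.
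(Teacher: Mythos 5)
Your overall architecture matches the paper's: restrict attention to the positive-capacity centers, classify the chosen $p_v$-centers by cluster radius, lower-bound the number of radius-$\geq 2$ centers by a capacity/displacement count, and convert that into a cost bound. However, there is a genuine error in your second step, and the rest of the argument leans on it. You claim that because every vertex of $G$ has degree at most $d$, extending $V_C'$ to a vertex cover requires at most $\lceil t/d\rceil$ additional vertices, and deduce $t/d \geq (2n/3)(1+\veps)-|V_C'|-1$. The degree bound gives the \emph{opposite} inequality: a single vertex covers at most $d$ of the $t$ uncovered edges, so at least $t/d$ additional vertices may be necessary, while the only guaranteed upper bound on how many suffice is $t$ (one endpoint per uncovered edge; if the uncovered edges form an induced matching you genuinely need $t$ of them). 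The correct conclusion is therefore only $t \geq (2n/3)(1+\veps)-|V_C'|$, weaker by a factor of $d$ than what you use. This is exactly the step the paper performs when it argues that at least $2n\veps/3$ edges must be uncovered by $S$.

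That factor of $d$ is load-bearing in your arithmetic. Feeding the corrected bound into your displacement inequality $|V_{\geq 2}| \geq (t + d|J_0| + (d+1)w)/(2d)$ yields only $|V_{\geq 2}| \geq \frac{n\veps}{3d} + \frac{(d+1)|J_0|}{2d} + \frac{(d+2)w}{2d} + O(1)$, and in $C \geq |V_C'| + |V_{\geq 2}| + q^\star$ with $|V_C'| = k - q^\star - w - |J_0|$ the net coefficient of $|J_0|$ becomes $-1 + \frac{d+1}{2d} < 0$ for $d \geq 2$ (similarly for $w$), so the cancellation you describe fails and the bound degrades as more chosen centers have radius $0$; even with $|J_0| = w = 0$ you reach only $C \geq k + \frac{n\veps}{3d} + O(1)$, which does not clear the target $k\bigl(1+\frac{\veps}{2d}\bigr) = k + \frac{n\veps}{3d} + \frac{\veps}{2d}$. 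The paper recovers the needed slack not from the vertex-cover step but from a sharper absorption constant: for each radius-$\geq 2$ center $p_v$, the $d$ points of $Q_v$ can themselves only be served by $p_v$, $p^\star$, or another radius-$\geq 2$ center, so they also compete for the capacity of $\{p^\star\}\cup S_{\geq 2}$, leaving each such center only $d$ (not $2d$) slots for displaced points. This gives $|S_{\geq 2}| \geq |S_0| + \frac{2n\veps}{3d}$ and hence cost at least $k + \frac{2n\veps}{3d}$, which exceeds the target with room to spare. To repair your proof you would need to tighten your $2d$ denominator to $d$ by the same observation (or find equivalent slack elsewhere); as written, the argument does not close.
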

\begin{proof}    
For sake of brevity, let $P_V$ denote the set of points $\{p_v : v \in V\}$ and $P_E$ denote $\{p_e: e \in E\}$. 
The only points with non-zero capacity are those in $P_V$  and the special node $p^\star$. Therefore, consider a solution obtained by choosing $p^\star$ and a subset $S$ of $2n/3$ points from $P_V$.  In the solution obtained above, let $S_0 \subseteq S$ by the set of points $p_v$ for which the corresponding cluster centered at $p_v$ has radius 0. Define $S_1, S_{\geq 2}$ similarly.  First observe that at least $2n\varepsilon/3$ points in $P_E$ are at a distance at least 2 from the set $S' := S \cup \{p^\star\}$. Indeed, there must be at least $2n\varepsilon/3$ edges which are not covered by $S$ (otherwise there is a vertex cover of size less than $2n(1+\varepsilon)/3$ in $G$). The points in $P_E$ corresponding to these edges must be assigned to a point in $S_{\geq 2}$ -- let $P_E'$ denote this susbet of $P_E$.

\begin{claim}
\label{cl:graph}
$|S_0| + \frac{2n \varepsilon}{3d} \leq |S_{\geq 2}|.$
\end{claim}
\begin{proof}

   We first observe  that 
   \begin{align}
    \label{eq:key}
    |S_0| \cdot d + \frac{n(d+1)}{3} + \frac{2n\veps}{3} \leq \frac{n(d+1)}{3} + |S_{\geq 2}| \cdot d
\end{align}

The l.h.s. of the above inequality counts the number of points which are covered by clusters of radius at least 2 or are assigned to $p^\star$. The term $|S_0|d$ counts the number of points in $Q_v$ for $v \in S_0$. The term $n(d+1)/3$ counts the points in $\{v\} \cup Q_v$ for all $v \in P_V \setminus S$. And the term $\frac{2n\veps}{3}$ accounts for the set of points in $P_E'$. 

Now consider the r.h.s. First it is not hard to see that for any point $v \in S_{\geq 2}$, the points in $Q_v$ should be assigned to $v$. This leaves a capacity of only $d$ for other points which can be assigned to such a point $v.$ Thus the r.h.s. above accounts for the total number of points (other than $Q_v, v \in S_{\geq 2}$) that can be assigned to $p^\star$ or $S_{\geq 2}$. The desired result now follows directly from~\eqref{eq:key}. 
\end{proof}

The cost of the solution $S \cup \{p^\star\}$ is at least $1 + |S_1| + 2 \cdot |S_{\geq 2}|$. Now, since $|S_0| + \frac{2n \varepsilon}{3d} \leq |S_{\geq 2}|$, we see that this cost is at least  
\[
 1 + |S_1| + |S_0| + \frac{2n\veps}{3d} + |S_{\geq 2}|
= 1 + \frac{2n}{3} + \frac{2n\veps}{3d} = k \cdot \left(1 + \frac{\veps}{2d}\right).
\]
\end{proof}
~\Cref{thm:eth} now follows from~\Cref{thm:eth1},~\Cref{lem:red1} and~\Cref{lem:red2}.
\bibliographystyle{plainurl}% the mandatory bibstyle
\bibliography{refs}

\end{document}